\newtheorem{thm}{Theorem}
\newtheorem*{thm*}{Theorem}
\newtheorem{lem}{Lemma}
\newtheorem{defn}{Definition}
\newcommand{\rr}{\mathrm}
\newcommand{\cc}{\mathcal}
\newcommand{\bb}{\mathbb}
\newacronym{avc}{AVC}{Arbitrarily Varying Channel}
\newacronym{avs}{AVS}{Arbitrarily Varying Source}
\newacronym{cr}{CR}{Common Randomness}
\newacronym{dmc}{DMC}{Discrete Memoryless Channel}
\newacronym{dms}{DMS}{Discrete Memoryless Source}
\newacronym{dmms}{DMMS}{Discrete Memoryless Multiple Source}
\newacronym{iid}{iid}{independently and identically distributed}
\newacronym{pd}{PD}{Probability Distribution}
\newacronym{rv}{RV}{Random Variable}
\newacronym{sk}{SK}{Secret-Key}
\newacronym{uwb}{UWB}{Ultra-Wideband}
\begin{document}

\title{Secret-Key Generation Using Compound Sources and One-Way Public Communication}

\author{Nima Tavangaran,~\IEEEmembership{Student Member,~IEEE,}
Holger Boche,~\IEEEmembership{Fellow,~IEEE,}
and Rafael F. Schaefer,~\IEEEmembership{Member,~IEEE}%
\thanks{This work was presented in part at IEEE-ITW, Jeju Island, Korea, 2015.}%
\thanks{N. Tavangaran and H. Boche are with Lehrstuhl f\"ur \mbox{Theoretische} Informationstechnik, Technische Universit\"at M\"unchen, 80290 M\"unchen, \mbox{Germany}, e-mail: nima.tavangaran@tum.de, boche@tum.de}%
\thanks{R. F. Schaefer was with the Department of Electrical Engineering, Princeton University, NJ 08544, USA. He is now with Technische Universit\"at Berlin, 10587 Berlin, Germany, email: rafael.schaefer@tu-berlin.de}%
\thanks{This work of N. Tavangaran and H. Boche was supported by the German Research Foundation (DFG) under grant BO 1734/20-1.}%
\thanks{This work of R. F. Schaefer was supported by the German Research Foundation (DFG) under grant WY 151/2-1.}%
}

\maketitle

\begin{abstract}
In the classical Secret-Key generation model, Common Randomness is generated by two terminals based on the observation of correlated components of a common source, while keeping it secret from a non-legitimate observer. It is assumed that the statistics of the source are known to all participants. In this work, the Secret-Key generation based on a compound source is studied where the realization of the source statistic is unknown. The protocol should guarantee the security and reliability of the generated Secret-Key, simultaneously for all possible realizations of the compound source. A single-letter lower-bound of the Secret-Key capacity for a finite compound source is derived as a function of the public communication rate constraint. A multi-letter capacity formula is further computed for a finite compound source for the case in which the public communication is unconstrained. Finally a single-letter capacity formula is derived for a degraded compound source with an arbitrary set of source states and a finite set of marginal states.
\end{abstract}

\begin{IEEEkeywords}
Compound source, Secret-Key capacity, Common Randomness, hypothesis testing. 
\end{IEEEkeywords}

\section{Introduction}

Current cryptographic approaches are dependent on the computational capabilities of the non-legitimate terminals. By increasing technological advances, the security of transmitted information can not be guaranteed for sure. In contrast, an information theoretic approach provides us with a framework for future coding schemes that guarantee security independent of computational capabilities of the eavesdroppers. 

Information theoretic security was first introduced by Shannon in~\cite{sha_cts}. In the so called one-time pad method, each transmitting message is encrypted by a \gls{sk}.
If there is no \gls{sk} available, it has to be generated first. One approach is to generate a shared \gls{sk} based on a common source. In this model, two terminals observe correlated components of a common source and communicate over a public noiseless channel to generate a common \gls{sk}. Afterwards, they can encrypt subsequent communication using this \gls{sk}. This procedure relies on the generation of \gls{cr} which was introduced in~\cite{gac_cr0} and later used by Maurer in~\cite{mau_crm} and Ahlswede and Csisz\'ar in~\cite{ahl_crm} to determine the \gls{sk} capacity. 
The \gls{sk} sharing is further studied in~\cite{csi_crh,csi_smt,csk_inf,ign_bsi,lai_ps1}. 
In practice, this kind of security can be integrated in the physical layer of wireless systems~\cite{wil_suw}.

However, in all these models which were used for \gls{sk} generation, perfect knowledge of the source statistics was assumed. In a more general approach, the source uncertainty should be taken into account where the terminals do not have the knowledge of the actual realization of the source. An achievable \gls{sk} rate for a compound \gls{dmms} $\{(X,Y_s,Z_s)\}_{s\in\cc{S}}$ was given in~\cite{blo_chi}.
In~\cite{boc_skg}, the compound \gls{dmms} $\{(X_s,Y_s)\}_{s\in\cc{S}}$ was studied and the \gls{sk} capacity was computed.
As related problems, compound source coding, \gls{sk} generation with \gls{avc}, compound wiretap channels and robust biometric authentication were studied in~\cite{jan_dis,drp_csc,blo_skg,bje_scw,sch_gau,and_mas}.

In this work, a \gls{sk} generation model for a compound \gls{dmms} with one-way communication in presence of an eavesdropper is studied. The terminals observe a compound source 
$
\mathfrak{S}:=\{XYZ,s\}_{s\in\cc{S}}
:=\{(X_s,Y_s,Z_s)\}_{s\in\cc{S}}, 
$
and two of them generate a shared \gls{sk} by using only a one-way communication over a public noiseless channel while keeping it secret from the third terminal (eavesdropper). As the source realization index $s\in\cc{S}$ is unknown to the terminals, an estimation method such as hypothesis testing is incorporated to find the marginal source index of the transmitter. This approach is used to generalize the result in~\cite[Theorem 2.6]{csi_crh},\cite[Theorem 17.21]{csk_inf} to the compound setup for the source $\mathfrak{S}$.

In Section~\ref{sec_mod}, the model for \gls{sk} generation is presented. Section~\ref{sec_cap} gives the main results. A single-letter lower-bound for the \gls{sk} capacity of a finite compound source is derived as a function of the communication rate constraint over the public channel. A multi-letter \gls{sk} capacity formula is computed as well for the case where the public communication rate is unconstrained. We proposed these two results which are stated in Theorems 1 and 2, originally in~\cite{tav_skg} in which the proof ideas have been outlined. In the present paper, in contrast, we give the complete proofs of both theorems as well as the proofs for random coding and security lemmas.
The third and main result of this work which is not available in~\cite{tav_skg} gives a single-letter \gls{sk} capacity formula 
for a degraded compound source, where the set of source states may be infinite and the set of marginal states is finite. Compared with previous theorems, this result which is stated in Theorem~\ref{thm_inf} is more practical in the sense that the \gls{sk} capacity is single-letter and also valid for sources with an infinite set of source states $\cc{S}$. The complete formal proofs are given in Section~\ref{sec_prf}. Section~\ref{sec_con} concludes the paper.

\emph{Notation:}
$\bb{R}$ and $\bb{N}$ denote the set of real numbers and natural numbers respectively. The complement of a set $\cc{A}$ is denoted by $\cc{A}^{\rr{c}}$ and the subtraction of two sets $\cc{A}$ and $\cc{B}$ is given by $\cc{A}-\cc{B}:=\cc{A}\cap\cc{B}^{\,\rr{c}}$. The underlying probability measure is represented by $\rr{Pr}$. The $\log(\cdot)$ and $\exp(\cdot)$ are to the basis 2 and $\ln(\cdot)$ is the natural logarithm to the basis $\rr{e}$.  For any function $f$, the cardinal number of the range of the function is denoted by $\|f\|$. \glspl{rv} are denoted by capital letters \mbox{(e.g. $X^n,U,\cdots$)}, their realizations by small letters (e.g. $x^n,u,\cdots$), their ranges (alphabets) by script letters (e.g. $\cc{X}^n,\cc{U},\cdots$), and their \glspl{pd} by Roman letters (e.g. $\rr{P}_{X^n},\rr{P}_{U},\cdots$). All alphabets corresponding to \gls{rv}s are supposed to be finite. $H(X)$ and $I(X;Y)$ represent the entropy of a \gls{rv} $X$ and the mutual information between $X$ and $Y$ respectively. $h(a)$ with $a\in[0,1]$ is the binary entropy function and is given by \mbox{$h(a):=-a\log a - (1-a)\log(1-a)$}. For any two probability measures $\rr{P}$ and $\rr{Q}$, $\Vert \rr{P}-\rr{Q}\Vert:=\sum_{x\in\cc{X}}|\rr{P}(x)-\rr{Q}(x)|$ denotes their 1-norm distance. $\mathds{1}_\cc{A}(\cdot)$ denotes the indicator function for a set $\cc{A}$. $\bb{E}_U[X]$ represents the expectation of a \gls{rv} $X$ with respect to the \gls{rv} $U$. A stochastic matrix $W:\cc{X}\to\cc{P(Y)}$ is a function from the set $\cc{X}$ to $\cc{P(Y)}$, where $\cc{P(Y)}$ is the set of all \glspl{pd} defined on the set $\cc{Y}$. Finally, $X-Y-Z$ denotes a Markov chain for \gls{rv}s $X$, $Y$, and $Z$.

\section{\gls{sk} Generation Model}\label{sec_mod}

\begin{figure}[!tl]
\centering
\huge
\scalebox{0.5}{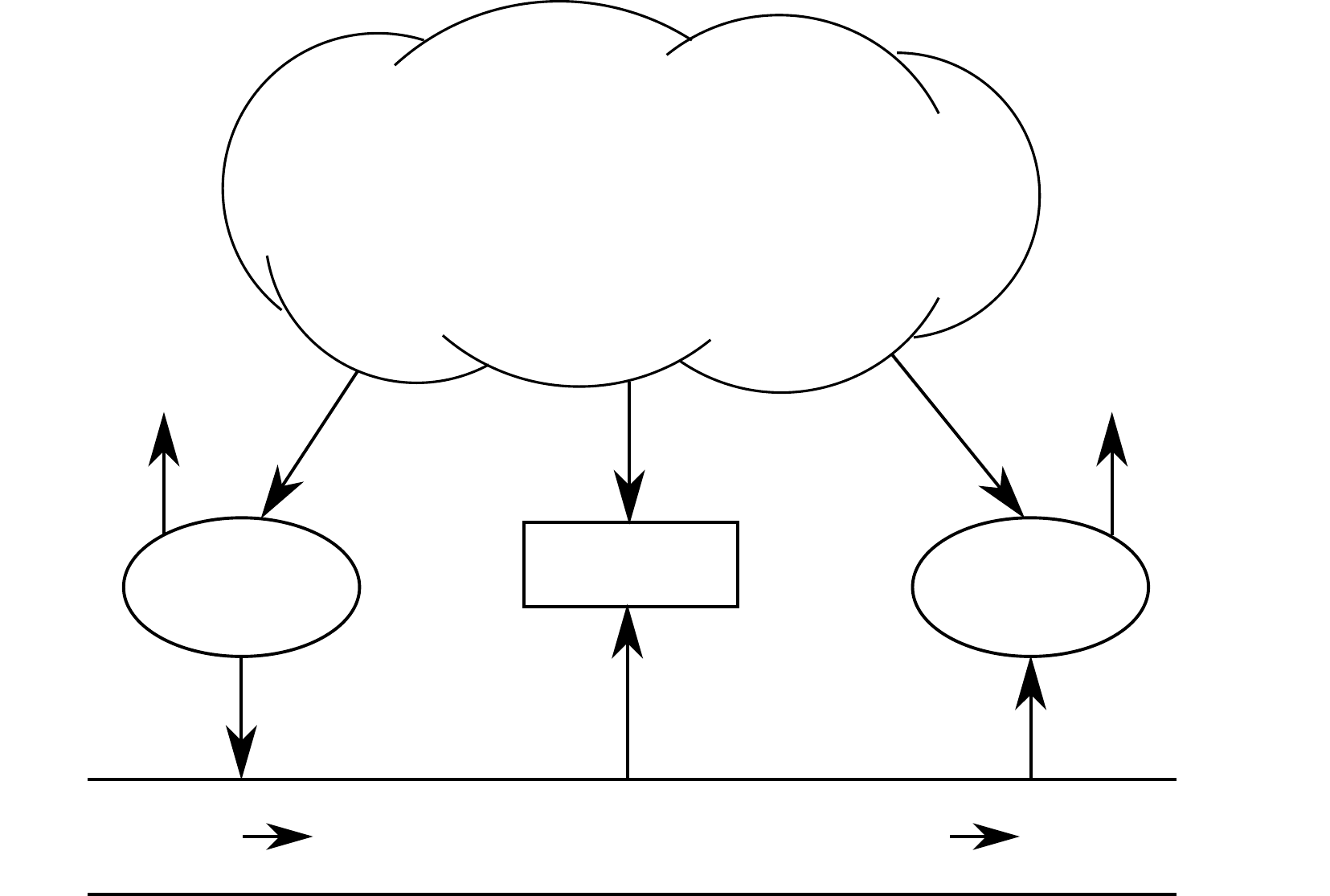}
\caption{\gls{sk} generation protocol for compound DMMS model}
\label{fig_sor}
\end{figure}

Figure~\ref{fig_sor} shows the \gls{sk} generation model which is used throughout this work. Transmitter (Alice), receiver (Bob) and eavesdropper (Eve) observe a compound \gls{dmms} \mbox{$\mathfrak{S}=\{XYZ,s\}_{s\in\cc{S}}$} for time duration $n\in \bb{N}$. It is assumed that all terminals know the set of source states $\cc{S}$ as well as its statistics with \glspl{pd} $\{\rr{P}_{XYZ,s}\}_{s\in\cc{S}}$. However, they do not have the knowledge of the actual realization $s\in\cc{S}$ of the source statistic. Therefore, \glspl{rv} $X_s^n,Y_s^n$ and $Z_s^n$ represent their initial knowledge for the source state $s\in \cc{S}$. The next definition describes the model which is studied through out this work.

\begin{defn}\label{def_csm}
The \gls{sk} generation model consists of a transmitter (Alice), a receiver (Bob), an eavesdropper (Eve), a compound \gls{dmms} by which their initial knowledge is given, and a public noiseless communication channel between all terminals. The source is given for an arbitrary set of states $\cc{S}$, by a sequence of generic \glspl{rv} $\mathfrak{S}=\{XYZ,s\}_{s\in\cc{S}}$ taking their values in the finite set $\cc{X}\times\cc{Y}\times\cc{Z}$. 
\end{defn}

As \glspl{rv} $X_s^n$ and $Y_s^n$ are correlated, Alice and Bob may generate some \gls{cr} by communicating over the public channel. In this work, only a one-way communication over the public channel, initiated by Alice, is allowed. The following definition gives a more precise description of this procedure.

\begin{defn}\label{def_pro}
A one-way \gls{sk} generation protocol for the model in Definition~\ref{def_csm} with source $\mathfrak{S}=\{XYZ,s\}_{s\in\cc{S}}$ consists of the following two steps:
\begin{itemize}
\item After observing $X_s^n$, Alice transmits a message $f_{\rr{c}}(X_s^n)$ to Bob over the public noiseless channel. $f_{\rr{c}}$ is called public communication function.\footnote{Similarly as in~\cite[Problem 17.15(a)]{csk_inf}, it can be shown that a randomized $f_{\rr{c}}$ in the one-way \gls{sk} generation protocol does not increase the \gls{sk} capacity. Therefore, the communication function $f_{\rr{c}}$ is assumed to be a deterministic function of $X_s^n$ and no randomization is considered here.}
\item Next, Alice generates a \gls{sk}, represented by a \gls{rv} $K_{\rr{A}}$, based on her knowledge $X_s^n$ and Bob generates a \gls{sk}, represented by a \gls{rv} $K_{\rr{B}}$, based on his knowledge $(Y_s^n, f_{\rr{c}}(X_s^n))$. $K_{\rr{A}}$ and $K_{\rr{B}}$ take their values in $\cc{K}$.
\end{itemize}
\end{defn}

As the communication over the public channel is overheard by Eve, this should not reveal any information about the \gls{sk}. 
Moreover, the generated \gls{sk} should have a uniform distribution. Combining these two criteria together leads to a compact notion, called security index, which was first introduced in~\cite{csi_smt}.
\begin{defn}\label{def_sec}
For \gls{rv}s $K_{\rr{A}}$ and $V$, taking values in the sets $\cc{K}$ and $\cc{V}$ respectively, the security index is given by 
\begin{align*}
S(K_{\rr{A}}|V):=\log(\cc{|K|})-H(K_{\rr{A}})+I(K_{\rr{A}};V).
\end{align*}
\end{defn}
In our context, $K_{\rr{A}}$ represents the \gls{sk} and $V$ Eve's knowledge. This short notion is a powerful tool which can be used to describe both strong 
secrecy~\cite{mau_str} and the uniformity of the generated \gls{sk}. The next definition, uses this concept to define an achievable \gls{sk} rate and capacity of a compound source. 
\begin{defn}\label{def_ach}
A real number $R_{\rr{sk}}\geq 0$ is an achievable \gls{sk} rate for the model in Definition~\ref{def_csm} with compound source \mbox{$\mathfrak{S}=\{XYZ,s\}_{s\in\cc{S}}$} and a one-way communication over the public noiseless channel with rate constraint $\Gamma\in(0,+\infty]$, if and only if, for all $\delta>0,$ and all $n\in\bb{N}$ large enough, there exists a \gls{sk} generation protocol with public communication function $f_{\rr{c}}$, giving rise to the \gls{rv}s $K_{\rr{A}}$ and $K_{\rr{B}}$ with values in $\cc{K}$, for which it holds:
\begin{align}
\frac{1}{n}\log\|f_{\rr{c}}\|<\Gamma+\delta,\label{eqn_ay1}\\
R_{\rr{sk}}<\frac{1}{n}\log|\cc{K}|+\delta,\label{eqn_ay2}\\
\forall s\in \cc{S},\quad \rr{Pr}(K_{\rr{A}}\neq K_{\rr{B}})<\delta,\label{eqn_ay3}\\
\forall s\in \cc{S},\quad S(K_{\rr{A}}|Z_s^n,f_{\rr{c}}(X_s^n))<\delta.\label{eqn_ay4}
\end{align}

The \gls{sk} capacity $C_{\rr{sk}}(\mathfrak{S},\Gamma)$ for this model is defined to be the supremum of all achievable \gls{sk} rates. If there is no communication rate constraint, i.e. $\Gamma=\infty,$ then condition~\eqref{eqn_ay1} in the definition is inactive and the capacity is simply denoted by $C_{\rr{sk}}(\mathfrak{S})$.
\end{defn} 

Similarly as in~\cite{csi_crh,csk_inf}, the communication rate constraint is also part of the achievability definition. This is because, in a realistic model where the communication cost is an important parameter, the information exchange rate between the terminals is restricted.

In the following, a subset of the compound set $\cc{S}$ is defined. This definition is required for stating the results in Section~\ref{sec_cap}.
\begin{defn}\label{def_set}
Let for the compound source $\mathfrak{S}=\{XYZ,s\}_{s \in \cc{S}}$, $\hat{\cc{S}}$ be the set of all possible states of marginal \gls{rv} $X$.
For a given marginal state $\hat{s}\in\cc{\hat{S}}$, corresponding to the \gls{rv} $X_{\hat{s}}$, the set of all possible source states is given by
\begin{align}\label{eqn_ixy}
\cc{I}(\hat{s}):=\Big\{s\in\cc{S}:&\forall x\in\cc{X},\nonumber\\
&\sum\limits_{y\in\cc{Y}} \sum\limits_{z\in\cc{Z}}\rr{P}_{XYZ,s}(x,y,z)=\rr{P}_{X_{\hat{s}}}(x)\Big\}.
\end{align}
\end{defn}

\section{\gls{sk} Capacity Results}\label{sec_cap}
In this section, \gls{sk} capacity results of compound \gls{dmms} models are presented. Moreover, short proof sketches are provided for the first and third theorems. The complete formal proofs are given in Section~\ref{sec_prf}. 
For all theorems which are stated in this section, the sets $\cc{\hat{S}}$ and $\cc{I}(\hat{s})$ are given by Definition~\ref{def_set}.

In the following, Theorem~\ref{thm_uv}
gives a single-letter lower-bound for the capacity of a compound \gls{dmms} source model with a finite set of source states where the public communication rate is limited.

\begin{thm}\label{thm_uv}
For a finite compound \gls{dmms} model with a source \mbox{$\mathfrak{S}=\{XYZ,s\}_{s\in\cc{S}}$} and a one-way communication over a public noiseless channel with constraint $\Gamma\in(0,\infty]$, it holds:
\begin{align}
C_{\rr{sk}}(\mathfrak{S},\Gamma)&\geq\min\limits_{\hat{s}
\in\hat{\cc{S}}}\max_{U_{\hat{s}},V_{\hat{s}}}\nonumber\\&\Big\{ \min\limits_{s\in\cc{I}(\hat{s})} I(V_{\hat{s}};Y_s|U_{\hat{s}})-\max\limits_{s\in\cc{I}(\hat{s})} I(V_{\hat{s}};Z_s|U_{\hat{s}}) \Big\},\label{eqn_thm} 
\end{align}
where the outer $\max$ is taken over all \gls{rv}s $U_{\hat{s}}$ and $V_{\hat{s}}$ such that it holds:
\begin{equation*}
\forall s\in\cc{I}(\hat{s}),\; U_{\hat{s}}-V_{\hat{s}}-X_{\hat{s}}-Y_sZ_s\quad\text{ and }
\end{equation*}
\begin{equation}\label{eqn_cns}
\max\limits_{s\in\cc{I}(\hat{s})} I(U_{\hat{s}};X_{\hat{s}}|Y_s) + \max\limits_{s\in\cc{I}(\hat{s})}I(V_{\hat{s}};X_{\hat{s}}|U_{\hat{s}}Y_s)< \Gamma.
\end{equation}
\end{thm}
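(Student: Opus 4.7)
The plan is to establish achievability of (\ref{eqn_thm}) by composing a hypothesis test on the marginal distribution of $X$ with a universal version of the classical Csisz\'ar--K\"orner two-layer one-way \gls{sk} construction (superposition Wyner--Ziv binning followed by privacy amplification via a $2$-universal hash). The outer $\min_{\hat s}$ reflects Alice's pre-test ignorance of $\hat s$; the inner $\max$--$\min$--$\max$ pattern arises because, once $\hat s$ is fixed, a single codebook must be reliable for every $Y_s$ and secure against every $Z_s$ with $s\in\cc{I}(\hat s)$.

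First, I would have Alice run a type-based hypothesis test on $X_s^n$ to identify $\hat s\in\hat{\cc{S}}$, with exponentially small error since $\hat{\cc{S}}$ is finite with pairwise distinct marginals, and prepend $\hat s$ to her public message at zero rate. Conditioned on correct identification, she fixes the pair $(U_{\hat s},V_{\hat s})$ attaining the outer maximum in (\ref{eqn_thm}) subject to (\ref{eqn_cns}) and, for a small slack $\eta>0$, draws a random superposition codebook with $\exp\bigl(n[I(U_{\hat s};X_{\hat s})+\eta]\bigr)$ outer codewords $U^n$ and $\exp\bigl(n[I(V_{\hat s};X_{\hat s}|U_{\hat s})+\eta]\bigr)$ inner codewords $V^n$ per outer one. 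She partitions them into bins of sizes $\exp\bigl(n[\max_{s\in\cc{I}(\hat s)}I(U_{\hat s};X_{\hat s}|Y_s)+2\eta]\bigr)$ and $\exp\bigl(n[\max_{s\in\cc{I}(\hat s)}I(V_{\hat s};X_{\hat s}|U_{\hat s}Y_s)+2\eta]\bigr)$, picks $(U^n,V^n)$ jointly typical with $X_s^n$ (feasible by covering), and transmits the two bin indices; (\ref{eqn_cns}) keeps the public rate below $\Gamma$. A universal (e.g.\ maximum-mutual-information) decoder at Bob then recovers $(U^n,V^n)$ with vanishing error uniformly over $s\in\cc{I}(\hat s)$, since the bin rates dominate $I(U_{\hat s};X_{\hat s}|Y_s)$ and $I(V_{\hat s};X_{\hat s}|U_{\hat s}Y_s)$ for each such~$s$.

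Next, Alice extracts a key by applying a random $2$-universal hash $\varphi$ to $V^n$ of output length
\[
L=\Bigl\lfloor n\bigl[\min_{s\in\cc{I}(\hat s)}I(V_{\hat s};Y_s|U_{\hat s})-\max_{s\in\cc{I}(\hat s)}I(V_{\hat s};Z_s|U_{\hat s})-4\eta\bigr]\Bigr\rfloor,
\]
and Bob hashes his decoded $\hat V^n$ in the same way, so that (\ref{eqn_ay3}) holds. For each $s\in\cc{I}(\hat s)$, the (smoothed) conditional entropy of $V^n$ given $(U^n,\mathrm{bin}(V^n),Z_s^n)$ concentrates around $n\bigl[H(V_{\hat s}|U_{\hat s})-\max_{s'\in\cc{I}(\hat s)}I(V_{\hat s};X_{\hat s}|U_{\hat s}Y_{s'})-I(V_{\hat s};Z_s|U_{\hat s})\bigr]$; using the identity $I(V_{\hat s};X_{\hat s}|U_{\hat s})=I(V_{\hat s};Y_{s'}|U_{\hat s})+I(V_{\hat s};X_{\hat s}|U_{\hat s}Y_{s'})$ valid under $U_{\hat s}-V_{\hat s}-X_{\hat s}-Y_{s'}Z_{s'}$, this rewrites as $n\bigl[H(V_{\hat s}|U_{\hat s}X_{\hat s})+\min_{s'\in\cc{I}(\hat s)}I(V_{\hat s};Y_{s'}|U_{\hat s})-I(V_{\hat s};Z_s|U_{\hat s})\bigr]\geq L$. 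A leftover-hash / privacy-amplification lemma then yields $S(K_{\rr A}|Z_s^n,f_{\rr c}(X_s^n))\to 0$, and a union bound over the finite set $\cc{S}$ preserves (\ref{eqn_ay4}). Taking $\min_{\hat s\in\hat{\cc{S}}}$ to absorb the pre-test ignorance and maximising over admissible $(U_{\hat s},V_{\hat s})$ completes the derivation of (\ref{eqn_thm}).

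The main obstacle I anticipate is this last security step: the bin indices are tuned to the worst Bob state in $\cc{I}(\hat s)$, while $Z_s^n$ ranges independently over the same compound set, so a single hash $\varphi$ must be secure against every (bin-leakage, $Z_s^n$) pair \emph{simultaneously}. This is precisely what forces (\ref{eqn_thm}) to take the looser form $\min_s I(V_{\hat s};Y_s|U_{\hat s})-\max_s I(V_{\hat s};Z_s|U_{\hat s})$ rather than $\min_s\{I(V_{\hat s};Y_s|U_{\hat s})-I(V_{\hat s};Z_s|U_{\hat s})\}$, and it will require a careful conditional min-entropy concentration lemma together with a union bound over $\cc{S}$. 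The remaining ingredients --- the covering step, the universal decoder, and the exponential decay of the hypothesis-test error --- follow from standard techniques and are expected to go through state by state.
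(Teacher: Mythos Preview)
Your plan is essentially the paper's: hypothesis testing on $\hat s$, a superposition codebook per $\hat s$ with worst-case bin rates over $\cc{I}(\hat s)$, and privacy amplification on the inner layer. The paper's implementation differs only in secondary choices---it uses a union-of-typical-sets decoder rather than MMI, applies the Ahlswede--Csisz\'ar balanced-coloring extractor (Lemmas~\ref{lem_ext} and~\ref{lem_skg}) in place of a $2$-universal hash with the leftover-hash lemma, and splits the argument into a single-layer ``Part~a'' (when $\min_s I(U_{\hat s};Y_s)>\max_s I(U_{\hat s};Z_s)$) before the full two-layer ``Part~b''---but the skeleton is identical, and your variants would work equally well.

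One correction in your security accounting: the smoothed conditional entropy of $V^n$ given $(U^n,\mathrm{bin}(V^n),Z_s^n)$ is approximately
\[
n\bigl[I(V_{\hat s};X_{\hat s}|U_{\hat s})-\max_{s'}I(V_{\hat s};X_{\hat s}|U_{\hat s}Y_{s'})-I(V_{\hat s};Z_s|U_{\hat s})\bigr],
\]
not $n[H(V_{\hat s}|U_{\hat s})-\cdots]$, because $V^n$ is a codeword from a book of rate $I(V_{\hat s};X_{\hat s}|U_{\hat s})+\eta$, not a sequence of conditional entropy $nH(V_{\hat s}|U_{\hat s})$. Your Markov-chain identity then yields $n[\min_{s'}I(V_{\hat s};Y_{s'}|U_{\hat s})-I(V_{\hat s};Z_s|U_{\hat s})]$ without the spurious $H(V_{\hat s}|U_{\hat s}X_{\hat s})$ term; this is still $\ge L$, so your conclusion survives, but the intermediate expression as written is an overcount.
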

\emph{Proof sketch:} To achieve the \gls{sk} rate in~\eqref{eqn_thm}, Alice estimates her marginal state $\hat{s}\in\cc{\hat{S}}$ by hypothesis testing such that the estimation error is exponentially small~\cite{hof_tes}. Similarly as in~\cite{boc_skg}, she sends this along with other information related to her observation over the public channel to Bob. In Figure~\ref{fig_sor}, this is denoted by $f_{\rr{c}}(X_s^n)$. Given an estimated marginal source state of Alice $\hat{s}\in\cc{\hat{S}}\,$, the joint source state $s$ is not necessarily known to the terminals. However, by Definition~\ref{def_set}, it is known that $s\in\cc{I}(\hat{s})$. 

For the correctly estimated Alice's state, say $\hat{s}\in\cc{\hat{S}}$, Lemma~\ref{lem_rac} from Subsection~\ref{ssc_pre}, assures that Alice 
and Bob can generate some \gls{cr} by using their knowledge with an exponentially small error. This \gls{cr} is universal for all source states $s\in\cc{I}(\hat{s})$.

Furthermore, the public communication rate should be lower than a given $\Gamma>0$. Therefore, the coding scheme, introduced in Lemma~\ref{lem_rac}, should work with respect to this limitation which is given by~\eqref{eqn_cns}. This problem for the case where no communication constraint is given is easier to solve and a non-compound version is available in~\cite[Problem 17.15b]{csk_inf}.

Finally, as seen in Figure~\ref{fig_sor}, Alice and Bob generate their \glspl{sk} $K_{\rr{A}}$ and $K_{\rr{B}}$ based on the \gls{cr} by using a \gls{sk} generator. However, $f_{\rr{c}}(X_s^n)$ is also received by Eve. 
Lemma~\ref{lem_skg}, again from Subsection~\ref{ssc_pre}, assures the existence of a \gls{sk} generator which guarantees the strong secrecy and the uniformity of the \gls{sk} $K_{\rr{A}}$, for all possible $s\in\cc{I}(\hat{s})$. \qed
 
As a second result, a multi-letter \gls{sk} capacity formula is computed for the case, in which no communication rate constraint is given and the set of source states is again finite.

\begin{thm}\label{thm_uci}
For a finite compound \gls{dmms} model with a source \mbox{$\mathfrak{S}=\{XYZ,s\}_{s\in\cc{S}}$} and a one-way communication over a public noiseless channel, it holds:
\begin{align}
C_{\rr{sk}}(\mathfrak{S})&=\lim_{n\rightarrow\infty}\frac{1}{n}\min\limits_{\hat{s}\in\hat{\cc{S}}}\max_{U_{\hat{s}},V_{\hat{s}}}\nonumber\\&\Big\lbrace \min\limits_{s\in\cc{I}(\hat{s})} I(V_{\hat{s}};Y_s^n|U_{\hat{s}})
-\max\limits_{s\in\cc{I}(\hat{s})} I(V_{\hat{s}};Z_s^n|U_{\hat{s}}) \Big\rbrace,\label{eqn_tci} 
\end{align}
where the outer $\max$ is taken over all \gls{rv}s $U_{\hat{s}}$ and $V_{\hat{s}}$ such that it holds:
\begin{equation}\label{eqn_ma2}
\forall s\in\cc{I}(\hat{s}),\; U_{\hat{s}}-V_{\hat{s}}-X_{\hat{s}}^n-Y_s^nZ_s^n.
\end{equation}
\end{thm}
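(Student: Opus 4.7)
The plan is to sandwich $C_{\rr{sk}}(\mathfrak{S})$ between $\sup_n g(n)/n$ and $\liminf_n g(n)/n$, where $g(n)$ denotes the bracketed $\min_{\hat s}\max_{U,V}\{\ldots\}$ on the right-hand side of \eqref{eqn_tci}, and then to obtain equality of the two sides (and existence of the limit) via super-additivity of $g$ and Fekete's lemma.

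\textbf{Achievability.} Consider the $n$-folded super-source $\mathfrak{S}^{(n)}$ whose single letter is the block $(X^n,Y^n,Z^n)$. Its state set is still the finite $\cc S$, and because $\mathfrak{S}$ is memoryless the marginal-state set $\hat{\cc S}$ and the sets $\cc I(\hat s)$ attached to $\mathfrak{S}^{(n)}$ coincide with those of $\mathfrak{S}$. Since $\Gamma=\infty$ renders the constraint \eqref{eqn_cns} vacuous, Theorem~\ref{thm_uv} applied to $\mathfrak{S}^{(n)}$ yields $C_{\rr{sk}}(\mathfrak{S}^{(n)}) \geq g(n)$. A protocol of block-length $m$ for $\mathfrak{S}^{(n)}$ is a protocol of block-length $nm$ for $\mathfrak{S}$ and vice versa, so $C_{\rr{sk}}(\mathfrak{S}^{(n)}) = n\,C_{\rr{sk}}(\mathfrak{S})$, which gives $C_{\rr{sk}}(\mathfrak{S}) \geq g(n)/n$ for every $n$.

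\textbf{Converse.} Fix an achievable $R$, $\delta > 0$, and for every sufficiently large $n$ take a protocol from Definition~\ref{def_ach}. Put $F := f_{\rr{c}}(X^n)$, $V := (K_{\rr{A}}, F)$, $U := F$. Because $V$ is a (possibly randomized) function of $X^n$ alone, $U$ is a function of $V$, and the marginal of $X^n_s$ equals that of $X^n_{\hat s}$ for every $s\in\cc I(\hat s)$ by Definition~\ref{def_set}, the Markov chain required in \eqref{eqn_ma2} is in force. Fano's inequality applied to \eqref{eqn_ay3} gives $H(K_{\rr{A}} | Y^n_s, F) \leq n\epsilon_n$ with $\epsilon_n\to 0$; splitting the security index \eqref{eqn_ay4} into its two non-negative summands yields $H(K_{\rr{A}}) \geq \log|\cc K| - \delta$, $I(K_{\rr{A}}; F) < \delta$, and $I(K_{\rr{A}}; Z^n_s | F) < \delta$ uniformly in $s\in\cc S$. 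Combining these with \eqref{eqn_ay2} and the chain-rule identity $I(K_{\rr{A}}; Y^n_s, F) = I(K_{\rr{A}}; F) + I(K_{\rr{A}}; Y^n_s | F)$ produces, uniformly in $s\in\cc I(\hat s)$,
\[
I(V; Y^n_s | U) \geq nR - n\delta - 2\delta - n\epsilon_n, \qquad I(V; Z^n_s | U) < \delta .
\]
Since this $(U, V)$ is independent of $\hat s$ it witnesses $g(n) \geq nR - n\delta - 3\delta - n\epsilon_n$, and letting $n\to\infty$ and then $\delta\to 0$ gives $\liminf_n g(n)/n \geq R$.

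\textbf{Existence of the limit and main obstacle.} The two previous bounds already imply $\sup_n g(n)/n \leq C_{\rr{sk}}(\mathfrak{S}) \leq \liminf_n g(n)/n$, so it suffices to prove super-additivity $g(n+m)\geq g(n)+g(m)$ and invoke Fekete. To this end, pick optimizers $(U^{(n)},V^{(n)})$ and $(U^{(m)},V^{(m)})$ for $g(n)$ and $g(m)$, realize them as independent copies acting on two disjoint sub-blocks of lengths $n$ and $m$, and concatenate; the memoryless product structure of the super-source makes $I(V;Y^{n+m}_s|U)$ and $I(V;Z^{n+m}_s|U)$ split additively in $s$, and the elementary inequalities $\min_s(a_s+b_s)\geq \min_s a_s+\min_s b_s$ and $\max_s(a_s+b_s)\leq \max_s a_s+\max_s b_s$ then deliver super-additivity. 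The most delicate point is the converse: the same $(U,V)$ must be feasible for every $\hat s\in\hat{\cc S}$ under the $\hat s$-dependent joint law, and one must split the security index carefully into its uniformity and its leakage components so that $I(K_{\rr{A}};F)$ can be absorbed into the error without spoiling the chain-rule decomposition of $I(K_{\rr{A}}; Y^n_s, F)$.
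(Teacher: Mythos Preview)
Your proof is correct and follows essentially the same route as the paper: Theorem~\ref{thm_uv} applied to the $n$-block source for achievability, the identification $U=f_{\rr c}(X^n)$, $V=(K_{\rr A},f_{\rr c}(X^n))$ together with Fano and the security-index split for the converse, and super-additivity via independent concatenation plus Fekete for the existence of the limit. Two small remarks: your sandwich $\sup_n g(n)/n\le C_{\rr{sk}}(\mathfrak S)\le\liminf_n g(n)/n$ already forces equality and hence the existence of the limit (since trivially $\liminf\le\sup$), so Fekete is actually superfluous in your layout; and Fano only yields $H(K_{\rr A}\mid Y_s^n,F)\le h(\delta)+\delta\log|\cc K|$, which for fixed $\delta$ does not give $\epsilon_n\to 0$ as $n\to\infty$, but your subsequent $\delta\to 0$ repairs this (the paper absorbs the same term via a factor $1/(1-\delta)$).
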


Finally as the main result, a single-letter \gls{sk} capacity formula is given in the following for a degraded compound source with an arbitrary set of source states $\cc{S}$ which might be infinite.

\begin{thm}\label{thm_inf}
Consider a compound \gls{dmms} model with a source \mbox{$\mathfrak{S}=\{XYZ,s\}_{s\in\cc{S}}$} with an arbitrary set $\cc{S}$, a finite set of marginal states $\cc{\hat{S}}$, and a one-way communication over a public noiseless channel. 
If the following Markov chains are satisfied,
\begin{align}\label{eqn_if1}
\forall\hat{s}\in\cc{\hat{S}},\;\forall r,t\in\cc{I}(\hat{s}),\quad X_{\hat{s}}-Y_r-Z_t,
\end{align}
then it holds:
\begin{align}
C_{\rr{sk}}(\mathfrak{S})&=\min_{\hat{s}\in\cc{\hat{S}}}\Big\lbrace\inf_{r\in\cc{I}(\hat{s})}I(X_{\hat{s}};Y_r)-\sup_{t\in\cc{I}(\hat{s})}I(X_{\hat{s}};Z_t)\Big\rbrace.\label{eqn_if2}
\end{align}
\end{thm}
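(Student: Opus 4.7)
The plan is to prove~\eqref{eqn_if2} by a matching achievability and converse, the converse being the part that genuinely uses the degradation hypothesis~\eqref{eqn_if1}.

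For achievability I would follow the template of the proof of Theorem~\ref{thm_uv} specialized to $U_{\hat{s}}$ constant and $V_{\hat{s}}=X_{\hat{s}}$; with this choice $I(V_{\hat{s}};Y_s|U_{\hat{s}})=I(X_{\hat{s}};Y_s)$ and $I(V_{\hat{s}};Z_s|U_{\hat{s}})=I(X_{\hat{s}};Z_s)$, the Markov chain $U_{\hat{s}}-V_{\hat{s}}-X_{\hat{s}}-Y_sZ_s$ is trivial, and~\eqref{eqn_cns} is vacuous since $\Gamma=\infty$. The three-part scheme --- hypothesis testing to identify $\hat{s}$, universal Slepian--Wolf binning (Lemma~\ref{lem_rac}) to establish CR, then privacy amplification (Lemma~\ref{lem_skg}) --- must be made universal over the possibly infinite family $\{\rr{P}_{XYZ,s}\}_{s\in\cc{I}(\hat{s})}$. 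Hypothesis testing is unaffected because $\hat{\cc{S}}$ is finite. The binning step extends immediately: a minimum-entropy (universal) decoder succeeds at any rate above $\sup_{r\in\cc{I}(\hat{s})}H(X_{\hat{s}}|Y_r)$, depending only on the joint type of $(X_{\hat{s}}^n,Y_r^n)$, of which there are polynomially many. For privacy amplification, since $\cc{X},\cc{Z}$ are finite, the conditional distributions $\{\rr{P}_{Z_t|X_{\hat{s}}}\}_{t\in\cc{I}(\hat{s})}$ lie in a compact subset of the probability simplex; I would take a finite $\epsilon$-net $\cc{T}^\epsilon\subset\cc{I}(\hat{s})$, apply Lemma~\ref{lem_skg} to $\cc{T}^\epsilon$, and then use continuity of the security index under small perturbations of $\rr{P}_{Z_t|X_{\hat{s}}}$ (via Pinsker/Fannes-type estimates) to propagate the security guarantee to every $t\in\cc{I}(\hat{s})$.

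For the converse, fix $\hat{s}\in\hat{\cc{S}}$ and $r,t\in\cc{I}(\hat{s})$ arbitrarily. Take any one-way SK protocol $(f_{\rr{c}},K_{\rr{A}},K_{\rr{B}})$ achieving rate $R$ for the compound source. The communication function $f_{\rr{c}}$ and the key extractor depend only on Alice's input $X_{\hat{s}}^n$, whose distribution is the same for every $s\in\cc{I}(\hat{s})$. By~\eqref{eqn_ay3} and~\eqref{eqn_ay4}, the protocol is reliable on the marginal $\rr{P}_{X_{\hat{s}}Y_r}^{\otimes n}$ (taking $s=r$) and secure on the marginal $\rr{P}_{X_{\hat{s}}Z_t}^{\otimes n}$ (taking $s=t$). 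Hypothesis~\eqref{eqn_if1} allows me to splice these two marginals into a single joint distribution $\rr{P}_{X_{\hat{s}}Y_rZ_t}$ with $X_{\hat{s}}-Y_r-Z_t$ (explicitly, $\rr{P}_{X_{\hat{s}}Y_r}(x,y)\,\rr{P}_{Z_t|Y_r}(z|y)$ with the degrading channel supplied by~\eqref{eqn_if1}). On this virtual single-state degraded source the protocol is simultaneously reliable and secure at rate $R$, so the classical one-way SK capacity formula for degraded DMMS~\cite[Theorem 17.21]{csk_inf} (which evaluates to $I(X;Y)-I(X;Z)$ in the degraded case) yields $R\leq I(X_{\hat{s}};Y_r)-I(X_{\hat{s}};Z_t)$. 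Taking $\inf_{r\in\cc{I}(\hat{s})}$, $\sup_{t\in\cc{I}(\hat{s})}$, and $\min_{\hat{s}\in\hat{\cc{S}}}$ in that order produces~\eqref{eqn_if2}.

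The main obstacle I anticipate is the uniform-over-infinite-family privacy amplification in the achievability: guaranteeing $S(K_{\rr{A}}|Z_t^n,f_{\rr{c}}(X_{\hat{s}}^n))<\delta$ simultaneously for all $t\in\cc{I}(\hat{s})$, which is possibly uncountable. The compactness/continuity route sketched above should close this gap, but it requires quantitative control of how the security index degrades when the true $\rr{P}_{Z_t|X_{\hat{s}}}$ is replaced by a net element, and verification that the resulting $\epsilon$-losses are absorbed into the $\delta$ budget after extracting a key of length $\approx n[\inf_r I(X_{\hat{s}};Y_r)-\sup_t I(X_{\hat{s}};Z_t)]$.
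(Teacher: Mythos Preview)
Your proposal is correct and structurally matches the paper's proof: achievability via the choice $V_{\hat{s}}=X_{\hat{s}}$ (the paper equivalently takes $U_{\hat{s}}=X_{\hat{s}}$ in part~(a) of the proof of Theorem~\ref{thm_uv}), and converse via the single-source upper bound $R\le I(X;Y|Z)$ applied to the spliced virtual source together with the Markov identity $I(X;Y|Z)=I(X;Y)-I(X;Z)$; your explicit splicing is precisely the justification the paper leaves implicit when it writes $I(X_{\hat{s}};Y_r|Z_t)$ for $r\neq t$ and cites~\cite{ahl_crm}.

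The one genuine technical difference is how the infinite family is handled in achievability. The paper invokes the approximation Lemma~\ref{lem_app} with $l=n^3$, which supplies both a total-variation bound and a \emph{multiplicative} bound $\rr{P}_{YZ,s|X_{\hat s}}\le \rr{e}^{2|\cc{Y\times Z}|^2/l}\rr{W}_{s'}$; the multiplicative bound is what allows the exponentially small error probability of the finite-family decoder to be transferred verbatim to every $s\in\cc{I}(\hat s)$. Your universal minimum-entropy decoder sidesteps the need for this multiplicative control altogether and is arguably cleaner on the reliability side. For privacy amplification the two routes coincide: your $\epsilon$-net plus Fannes-type continuity is exactly what the paper does via Lemma~\ref{lem_muc}, and your anticipated obstacle (quantitative control of the security-index perturbation) is resolved there by taking the net fineness polynomial in $1/n$ so that the $O(n)$ blow-up in the $n$-letter total variation and the $O(n)$ factor from $\log|\cc K\times\cc Z^n\times\cc I|$ are both absorbed.
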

\emph{Proof sketch:}
In the first step, the achievability result from Theorem~\ref{thm_uv} is used to show that the \gls{sk} rate in~\eqref{eqn_if2} is achievable for a finite set of source states $\cc{S}$. 

Next, for the infinite source $\mathfrak{S}=\{XYZ,s\}_{s\in\cc{S}}$, fix the marginal \gls{pd} $\rr{P}_{X_{\hat{s}}}$ and define the infinite family of stochastic matrices $\{\rr{P}_{YZ,s|X_{\hat{s}}}:\cc{X}\to\cc{P(Y}\times \cc{Z})\}_{s\in\cc{I}(\hat{s})}$. By using Lemma~\ref{lem_app} from Subsection~\ref{ssc_pre}, it follows that there exists a finite family of stochastic matrices $\{\rr{W}_{s'}:\cc{X}\to\cc{P(Y}\times \cc{Z)}\}_{s'\in\cc{I}'(\hat{s})}$, which approximates the infinite family. The finite set $\cc{I}'(\hat{s})$ contains all indices of this finite family for the given $\hat{s}$.

Define the finite source $\mathfrak{S}':=\{XYZ,s'\}_{s'\in\cc{I}'(\hat{s}),\hat{s}\in\hat{\cc{S}}}$ by $\rr{P}_{XYZ,s'}(x,y,z):=\rr{P}_{X_{\hat{s}}}(x)\rr{W}_{s'}(y,z|x)$ for all $s'\in\cc{I}'(\hat{s}),$ $\hat{s}\in\hat{\cc{S}},$ and $(x,y,z)\in\cc{X\times Y\times Z}$. 
It is shown that the \gls{sk} generation protocol, which is used for the finite source $\mathfrak{S}'$, also guarantees the achievability of the given rate in~\eqref{eqn_if2} for the infinite source $\mathfrak{S}$.
\qed

The orders of Markov chains in Theorem~\ref{thm_inf} are crucial in determining the \gls{sk} capacity. For example, assume that the orders of~\eqref{eqn_if1} are changed as given in the following,
\begin{align*}
\forall\hat{s}\in\cc{\hat{S}},\;\forall r,t\in\cc{I}(\hat{s}),\quad X_{\hat{s}}-Z_t-Y_r.
\end{align*}
In this case the capacity is $C_{\rr{sk}}(\mathfrak{S})=0$. This is because, by~\cite[Theorem 1]{ahl_crm}, it holds that
\begin{align*}
\forall\hat{s}\in\hat{S},\;\forall r,t\in\cc{I}(\hat{s}), \; C_{\rr{sk}}(\mathfrak{S})\leq I(X_{\hat{s}};Y_r|Z_t)=0.
\end{align*}

\section{Proofs}\label{sec_prf}
This section is divided into 3 parts. Subsection~\ref{ssc_pre} gives a short review of definitions and results which are required in the proofs. In Subsection~\ref{ssc_rns}, Lemmas~\ref{lem_rac} and~\ref{lem_skg} for random coding and security are presented. Finally, Subsection~\ref{ssc_mai} presents the formal proofs of Theorems~\ref{thm_uv},~\ref{thm_uci}, and~\ref{thm_inf}.

\subsection{Preliminaries}\label{ssc_pre}

For the typical sequences and their related sets the same definitions as in~\cite[Chapters 2 and 17]{csk_inf} are taken.
Let $N(x|x^n)$ give the number of repetitions of an element $x$ in the sequence $x^n$ and $N(x,y|x^n,y^n)$  the number of repetitions of the pair $(x,y)$ in the pair sequence $(x^n,y^n)$. For two \gls{rv}s $X$ and $Y$ and stochastic matrix \mbox{$\rr{P}_{Y|X}:
\cc{X}\rightarrow\cc{P(Y)}$}, the following definitions are used when $\epsilon>0$:
\begin{align*}
&\cc{T}_{[XY]\epsilon}^n := \Big\{(x^n,y^n)\in\cc{X}^n\times\cc{Y}^n : \forall (x,y)\in\cc{X}\times\cc{Y},\\ 
&\qquad\qquad\qquad\big|\rr{P}_{XY}(x,y)-\frac{1}{n}N(x,y|x^n,y^n)\big|\leq \epsilon\\ 
&\qquad\qquad\qquad\wedge \big( \rr{P}_{XY}(x,y)=0 \Rightarrow N(x,y|x^n,y^n)=0 \big) \Big\},\\
&\cc{T}_{[Y|X]\epsilon}^n(x^n) := \Big\{y^n\in\cc{Y}^n : \forall (x,y)\in\cc{X}\times\cc{Y},\\ 
&\qquad\qquad\quad\big|\frac{1}{n}N(x|x^n)\rr{P}_{Y|X}(y|x)-\frac{1}{n}N(x,y|x^n,y^n)\big|\leq \epsilon\\ 
&\qquad\qquad\quad\wedge \big( \rr{P}_{Y|X}(y|x)=0 \Rightarrow N(x,y|x^n,y^n)=0 \big) \Big\},\\
&\cc{T}_{[XY]\epsilon}^n(x^n) := \Big\{y^n\in\cc{X}^n : (x^n,y^n)\in\cc{T}_{[XY]\epsilon}^n \Big\}.
\end{align*}

In the following, a series of lemmas and propositions is provided which will be used in the proofs. 
\begin{lem}[\!\!\cite{csk_inf}]\label{lem_typ}
Let $U,X,$ and $Y$ be \gls{rv}s taking value in $\cc{U}, \cc{X},$ and $\cc{Y}$ respectively. Assume $0<\xi<\zeta<\sigma,$ and $\tau>0$ are all in $\bb{R}$ and $n\in\bb{N}$. Then, it holds that
\begin{enumerate}
\itemsep0.8em
\item\label{typ_sub} $\forall x^n\in \cc{T}_{[X]\xi}^n$, $\cc{T}_{[XY]\zeta}^n (x^n)\supset \cc{T}_{[Y|X]\zeta-\xi}^n(x^n).$

\item\label{typ_ptx} $\rr{P}_X^n(\cc{T}_{[X]\xi}^n)\geq 1-2|\cc{X}|\rr{e}^{-2\xi^2 n}.$

\item\label{typ_px1} $\forall\tau>0$, $\forall\xi>0$ sufficiently small, and $\forall x^n\in\cc{T}^n_{[X]\xi}$, it holds $\big| -\frac{1}{n}\log\rr{P}^n_X(x^n) -H(X)\big|<\tau.$

\item \label{typ_ca1} $\forall\tau>0$, $\forall\xi>0$ sufficiently small, and $\forall n\in\bb{N}$ sufficiently large, it holds $\big| \frac{1}{n}\log |\cc{T}_{[X]\xi}^n| -H(X)\big|<\tau.$

\item \label{typ_ca2} $\forall\tau>0$, $\forall\zeta>0$ sufficiently small, $\forall n\in\bb{N}$ sufficiently large, and $\forall u^n$ with $\cc{T}_{[UX]\zeta}^n(u^n)\neq\emptyset$ it holds $\big| \frac{1}{n}\log |\cc{T}_{[UX]\zeta}^n(u^n)| -H(X|U)\big|<\tau.$

\item\label{typ_mar} For $U-X-Y$ and $\forall(u^n,x^n)\in\cc{T}_{[UX]\xi}^n$ it holds\\
$\rr{P}_{\!Y|X}^n\big(\cc{T}_{[UXY]\sigma}^n(u^n\!\!,x^n)|x^n\big)\!\geq\! 1-2\cc{|U||X||Y|}\rr{e}^{-2(\sigma-\xi)^2 n}.$

\item\label{typ_ixy} $\forall\tau>0$, $\forall\zeta>0$ sufficiently small, $\forall n\in\bb{N}$ sufficiently large, and $\forall y^n\in\cc{Y}^n$ if $\cc{T}^n_{[XY]\zeta}(y^n)\neq\emptyset$ then it holds
 $\Big|-\frac{1}{n}\log \rr{P}_X^n\big(\cc{T}^n_{[XY]\zeta}(y^n)\big)-I(X;Y)\Big|<\tau.$
 
\item\label{typ_las} $\forall\tau>0$, $\forall\zeta>0$ sufficiently small, $\forall n\in\bb{N}$ sufficiently large, and $\forall (u^n,y^n)\in\cc{U}^n\times\cc{Y}^n$ if $\cc{T}^n_{[UXY]\zeta}(u^n,y^n)\neq\emptyset$ then it holds\\
$\big|-\frac{1}{n}\log \rr{P}_{X|U}^n\big(\cc{T}^n_{[UXY]\zeta}(u^n,y^n)|u^n\big)-I(X;Y|U)\big|<\tau.$
\end{enumerate}
\end{lem}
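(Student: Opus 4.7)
\emph{Proof plan.} All eight claims are standard properties of strong-typicality sequences from \cite[Ch.~2]{csk_inf}, and the unifying tools are two: Hoeffding's inequality applied to the i.i.d.\ indicator variables $\mathds{1}\{X_i=x\}$ (and their conditional analogues), which controls deviations of empirical frequencies from the underlying \gls{pd}; and the frequency-based factorization $\rr{P}_X^n(x^n)=\prod_x\rr{P}_X(x)^{N(x|x^n)}$ of a product distribution, which makes $\rr{P}_X^n(x^n)$ a function only of the type of $x^n$. I would prove the eight items in the stated order, which groups them naturally. For item (1) I simply unwind the definitions: if $y^n\in\cc{T}^n_{[Y|X]\zeta-\xi}(x^n)$ and $x^n\in\cc{T}^n_{[X]\xi}$, then $|\frac{1}{n}N(x,y|x^n,y^n)-\rr{P}_{XY}(x,y)|$ is bounded by the conditional-typicality gap $\zeta-\xi$ plus the marginal gap $\xi$, and the ``if $\rr{P}_{XY}(x,y)=0$ then $N(x,y|\cdot)=0$'' clause transfers because $\rr{P}_{XY}(x,y)=\rr{P}_X(x)\rr{P}_{Y|X}(y|x)$. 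Item (2) is Hoeffding applied per symbol, $\rr{Pr}\{|\frac{1}{n}N(x|X^n)-\rr{P}_X(x)|>\xi\}\leq 2\exp(-2\xi^2 n)$, combined with a union bound over $|\cc{X}|$ symbols.

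Items (3)--(5) are direct computations. For (3), writing $-\frac{1}{n}\log\rr{P}_X^n(x^n)=\sum_x\frac{N(x|x^n)}{n}\log\frac{1}{\rr{P}_X(x)}$ and substituting $\frac{N(x|x^n)}{n}=\rr{P}_X(x)\pm\xi$ gives an error bounded by $\xi\sum_{x:\,\rr{P}_X(x)>0}|\log\rr{P}_X(x)|$, which is smaller than $\tau$ once $\xi$ is chosen small. Item (4) combines (2) and (3): the typical set has mass at least $1-2|\cc{X}|\rr{e}^{-2\xi^2n}$ and each member has probability in $[2^{-n(H(X)+\tau)},2^{-n(H(X)-\tau)}]$, so squeezing gives matching lower and upper cardinality bounds. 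Item (5) is the conditional analogue: fix $u^n$, split $\cc{T}^n_{[UX]\zeta}(u^n)$ into sub-blocks indexed by $u\in\cc{U}$ (each of length $N(u|u^n)$), and apply (4) within each sub-block to $\rr{P}_{X|U=u}$, recombining via $H(X|U)=\sum_u \rr{P}_U(u)H(X|U=u)$.

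For (6), the Markov chain $U-X-Y$ implies $\rr{P}_{Y|XU}=\rr{P}_{Y|X}$, so drawing $Y^n\sim \rr{P}_{Y|X}^n(\cdot|x^n)$ makes the entries $Y_i$ independent with known conditional laws. For each triple $(u,x,y)$, the Hoeffding sum $\sum_i\mathds{1}\{Y_i=y,\,u_i=u,\,x_i=x\}$ concentrates around its expectation $N(u,x|u^n,x^n)\rr{P}_{Y|X}(y|x)$, which itself lies within $\xi$ of $n\rr{P}_{UXY}(u,x,y)$ because $(u^n,x^n)\in\cc{T}^n_{[UX]\xi}$. Hoeffding gives $2\exp(-2(\sigma-\xi)^2 n)$ per triple, and a union bound over $|\cc{U}||\cc{X}||\cc{Y}|$ triples yields the stated estimate; the zero-support clause is inherited from that of $\rr{P}_{Y|X}$.

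Finally, items (7) and (8) combine the previous pieces. For (7), write $\rr{P}_X^n(\cc{T}^n_{[XY]\zeta}(y^n))=|\cc{T}^n_{[XY]\zeta}(y^n)|\cdot 2^{-n(H(X)\pm\tau)}$ by (3), and estimate the cardinality by $2^{n(H(X|Y)\pm\tau)}$ via the conditional version of (4) in the $Y$-conditional universe; since $H(X)-H(X|Y)=I(X;Y)$, the bound follows. Item (8) runs the same argument inside the $U$-conditional universe, invoking (5) in place of (4) and producing $I(X;Y|U)$. The main obstacle throughout is not mathematical but bookkeeping: choosing the typicality parameters $\xi,\zeta,\sigma$ small enough relative to $\tau$, and respecting the zero-support clauses so that the factorization $\rr{P}_X^n(x^n)=\prod_x\rr{P}_X(x)^{N(x|x^n)}$ never silently produces $\log 0$. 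No ideas beyond the standard strong-typicality toolbox are needed.
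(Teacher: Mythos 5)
The paper does not prove this lemma at all: it is quoted directly from \cite{csk_inf} (Csisz\'ar--K\"orner) as a known collection of strong-typicality facts, so there is no in-paper argument to compare against. Your sketch is the standard derivation used in that source --- Hoeffding plus a union bound for the measure statements (2) and (6), the type-based factorization $\rr{P}_X^n(x^n)=\prod_x\rr{P}_X(x)^{N(x|x^n)}$ for (3)--(5), and the ``probability $\times$ cardinality in the conditional universe'' squeeze for (7) and (8) --- and it is correct, modulo the routine bookkeeping you already flag (in particular, in (5), (7), (8) one uses that non-emptiness of the conditional typical set forces the conditioning sequence itself to be approximately typical).
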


\begin{lem}[\!\!\cite{csk_inf}]\label{lem_ran}
Let \gls{rv}s $U$ and $X$ take their values in $\cc{U}$ and $\cc{X}$ respectively. Consider $N=\exp(n R)$ sequences $u_l^n$ with $l\in\cc{L}:=\{1,2\cdots N\}$, which are independently drawn by a \gls{pd} $\rr{P}_U^n$ with $I(U;X)<R$. Then for all $\tau\in (0,R-I(U;X))$, all $\zeta>0$ sufficiently small, all $n\in \bb{N}$ sufficiently large, and all $x^n$ such that $\cc{T}_{[UX]\zeta}^n(x^n)\neq \emptyset$, it holds that
\begin{align*}
&\bigg\vert\frac{1}{n}\log\Big|\big\{l\in\cc{L}:u_l^n\in\cc{T}_{[UX]\zeta}^n(x^n)\big\}\Big|-\big(R-I(U;X)\big)\bigg\vert<\tau,
\end{align*}
with a probability approaching one, doubly exponentially fast.
\end{lem}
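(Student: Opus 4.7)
The plan is to recognize $M := |\{l \in \cc{L} : u_l^n \in \cc{T}_{[UX]\zeta}^n(x^n)\}|$ as a sum of $N = \exp(nR)$ i.i.d.\ Bernoulli random variables and apply a Chernoff-type concentration inequality. The key quantitative input is Lemma~\ref{lem_typ}, part~\ref{typ_ixy}: for any $\tau' > 0$, sufficiently small $\zeta$, and $n$ sufficiently large, whenever $\cc{T}_{[UX]\zeta}^n(x^n)\neq\emptyset$ the single-trial success probability
\begin{equation*}
p_n := \rr{P}_U^n\big(\cc{T}_{[UX]\zeta}^n(x^n)\big)
\end{equation*}
satisfies $\exp(-n(I(U;X)+\tau')) \leq p_n \leq \exp(-n(I(U;X)-\tau'))$. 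Consequently the mean count $\bb{E}[M] = N p_n$ is sandwiched as
\begin{equation*}
\exp\big(n(R - I(U;X) - \tau')\big) \leq \bb{E}[M] \leq \exp\big(n(R - I(U;X) + \tau')\big),
\end{equation*}
so that for $\tau'<\tau$ and $R > I(U;X)$, $\bb{E}[M]$ grows exponentially in $n$.

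Next, I would apply the standard multiplicative Chernoff bound for a sum of i.i.d.\ Bernoullis: for any $\delta \in (0,1)$,
\begin{equation*}
\rr{Pr}\big(|M - \bb{E}[M]| > \delta\,\bb{E}[M]\big) \leq 2\exp\!\big(-c\,\delta^2\,\bb{E}[M]\big)
\end{equation*}
for an absolute constant $c>0$. On the complementary event $M \in [(1-\delta)\bb{E}[M], (1+\delta)\bb{E}[M]]$, taking $\tfrac{1}{n}\log$ and combining with the two-sided bound on $\bb{E}[M]$ yields
\begin{equation*}
\Big|\tfrac{1}{n}\log M - (R - I(U;X))\Big| \leq \tau' + \tfrac{1}{n}\log\tfrac{1}{1-\delta},
\end{equation*}
which is strictly less than $\tau$ once $\tau'$ and $\delta$ are chosen small enough and $n$ is large enough. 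Since $\bb{E}[M] \geq \exp(n(R - I(U;X) - \tau'))$, the failure probability is bounded above by $2\exp\!\big(-c\,\delta^2\exp(n(R - I(U;X) - \tau'))\big)$, which tends to zero doubly exponentially fast in $n$, as claimed.

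The main obstacle is essentially bookkeeping: three small tolerances appear (the typical-set parameter $\zeta$ feeding into Lemma~\ref{lem_typ}, the logarithmic slack $\tau'$ in the per-sequence probability, and the multiplicative slack $\delta$ in the Chernoff deviation), and one must verify that each can be chosen so that together they fit inside the given slack $\tau$ in the statement. There is no need for a union bound over $x^n$ because $x^n$ is fixed in the claim, and independence of the draws $u_1^n,\dots,u_N^n$ makes the Chernoff step immediate. The "doubly exponential" convergence is then a direct consequence of the Chernoff exponent being linear in $\bb{E}[M]$, which is itself singly exponential in $n$.
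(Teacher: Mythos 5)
Your argument is correct and is essentially the standard proof of this covering-type lemma; the paper itself does not prove Lemma~\ref{lem_ran} but imports it from the cited reference, where the same route is taken: the count $M$ is a sum of $N$ i.i.d.\ indicators, Lemma~\ref{lem_typ}.\ref{typ_ixy} pins the single-trial probability at $\exp(-nI(U;X))$ to within first order in the exponent, and a multiplicative Chernoff bound gives concentration with failure probability of order $\exp(-c\delta^2\,\bb{E}[M])$, which is doubly exponentially small because $\bb{E}[M]$ is exponentially large. The one caveat is your remark that no union bound over $x^n$ is needed: in the way the lemma is actually invoked in Lemma~\ref{lem_rac}, the conclusion must hold \emph{simultaneously} for all $x^n$ with $\cc{T}_{[UX]\zeta}^n(x^n)\neq\emptyset$ (so that the encoders $f$ and $g$ are well defined on all of $\cc{T}$), which requires a union bound over at most $|\cc{X}|^n$ sequences; this is harmless, since a singly exponential number of events, each failing with doubly exponentially small probability, still fails with doubly exponentially small total probability, but it should be stated.
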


\begin{lem}[\!\cite{csk_inf}]\label{lem_ra2}
Let \gls{rv}s $U,V,$ and $X$ take their values in $\cc{U},\cc{V},$ and $\cc{X}$ respectively. Consider $N=\exp(n R)$ sequences $v_l^n$ with $l\in\cc{L}:=\{1,2\cdots N\}$, which are independently drawn by a \gls{pd} $\rr{P}_{V|U}^n(.|u^n)$ for a given $u^n$ with $I(V;X|U)<R$. Then for all $\tau\in (0,R-I(V;X|U))$, all $\sigma>0$ sufficiently small, all $n\in \bb{N}$ sufficiently large, and all $x^n$ such that $\cc{T}_{[UVX]\sigma}^n(u^n,x^n)\neq \emptyset$, it holds that
\begin{align*}
&\bigg\vert\frac{1}{n}\!\log\Big|\big\{l\!\in\!\cc{L}\!:\! v_l^n\!\!\in\!\cc{T}_{[UVX]\sigma}^n(u^n\!\!,x^n)\big\}\Big| \!-\!\big(R\!-\!I(V;\!X|U)\big)\bigg\vert\!<\!\tau,
\end{align*}
with a probability approaching one, doubly exponentially fast. This holds uniformly for all given $u^n\in\cc{U}^n.$
\end{lem}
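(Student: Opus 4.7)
The plan is to imitate the standard random-coding concentration argument used for Lemma \ref{lem_ran}, with every calculation carried out conditionally on the given sequence $u^n$. Only two ingredients are needed: (i) a uniform estimate of the probability that a single random codeword $v_l^n$ lands in the conditionally typical set $\cc{T}_{[UVX]\sigma}^n(u^n,x^n)$, and (ii) a Chernoff-type concentration bound for a sum of i.i.d.\ Bernoulli variables.

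First, fix an arbitrary $u^n\in\cc{U}^n$ and an arbitrary $x^n\in\cc{X}^n$ with $\cc{T}_{[UVX]\sigma}^n(u^n,x^n)\neq\emptyset$, and estimate
\[
p(u^n,x^n):=\rr{P}_{V|U}^n\bigl(\cc{T}_{[UVX]\sigma}^n(u^n,x^n)\,\big|\,u^n\bigr).
\]
Applying Lemma \ref{lem_typ}(\ref{typ_las}) with the roles relabelled so that $V$ plays the part of $X$ and $X$ plays the part of $Y$ in that lemma yields, for any preassigned $\tau_0>0$, all $\sigma$ small enough, and all $n$ large enough,
\[
\bigl|-\tfrac{1}{n}\log p(u^n,x^n) - I(V;X|U)\bigr|<\tau_0,
\]
and crucially the constants $\tau_0$, $\sigma$, $n$ can be chosen independently of $(u^n,x^n)$ subject to the nonemptiness condition.

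Second, since the codewords $v_1^n,\dots,v_N^n$ are i.i.d.\ draws from $\rr{P}_{V|U}^n(\cdot|u^n)$, the count
\[
Z(u^n,x^n):=\bigl|\{l\in\cc{L}:v_l^n\in\cc{T}_{[UVX]\sigma}^n(u^n,x^n)\}\bigr|
\]
is a sum of $N=\exp(nR)$ i.i.d.\ Bernoulli$(p(u^n,x^n))$ variables whose expectation $\mu:=\bb{E}[Z(u^n,x^n)]$ satisfies $|\tfrac{1}{n}\log\mu -(R-I(V;X|U))|<\tau_0$. Since $R-I(V;X|U)>0$, the mean $\mu$ grows exponentially in $n$. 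A standard multiplicative Chernoff inequality then gives
\[
\rr{Pr}\bigl(|Z-\mu|>\delta\mu\bigr)\le 2\exp\bigl(-c\,\delta^{2}\mu\bigr),
\]
and the right-hand side decays doubly exponentially in $n$. Choosing $\tau_0$ and $\delta$ small enough, this multiplicative deviation on $Z$ translates into an additive deviation of at most $\tau$ for $\tfrac{1}{n}\log Z(u^n,x^n)$ around $R-I(V;X|U)$, which is the claimed bound.

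Finally, to obtain the stated uniformity I would take a union bound over all $x^n\in\cc{X}^n$ with $\cc{T}_{[UVX]\sigma}^n(u^n,x^n)\neq\emptyset$; there are at most $|\cc{X}|^n$ such $x^n$, only singly exponentially many in $n$, which is trivially absorbed by the doubly exponential failure bound. Because the constants $\tau_0,\sigma,\delta$ and the threshold on $n$ depend only on the alphabets and on $\rr{P}_{UVX}$, the resulting doubly exponential convergence rate is uniform in the conditioning sequence $u^n$, as required. The only genuine subtlety, and the main obstacle to be pinned down, is verifying this uniformity of constants in Lemma \ref{lem_typ}(\ref{typ_las}); once that is in place the rest of the argument is a routine Chernoff calculation parallel to the proof of Lemma \ref{lem_ran}.
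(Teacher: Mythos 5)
Your proposal is correct. Note that the paper does not prove Lemma~\ref{lem_ra2} at all --- it is imported verbatim from~\cite{csk_inf} --- so there is no in-paper proof to compare against; your argument (single-codeword probability via Lemma~\ref{lem_typ}.\ref{typ_las} with the roles of the variables relabelled, a multiplicative Chernoff bound on the binomial count, and a union bound over the singly exponentially many $x^n$) is exactly the canonical proof one finds in the cited reference. The uniformity you flag as the ``main obstacle'' is in fact already built into the statement of Lemma~\ref{lem_typ}.\ref{typ_las} as given in the paper, since the thresholds on $\zeta$ and $n$ there are quantified before the pair $(u^n,y^n)$ and depend only on the alphabets and the joint distribution, so no further work is needed on that point.
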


\begin{lem}[\!\!\cite{ahl_cr2,csk_inf}]\label{lem_ext}
Let $\epsilon,\eta>0$ and $\lambda>0$ all in $\bb{R}$ and $k\in\bb{N}$ be given and U be a \gls{rv} taking value in $\cc{U}$. If $\rr{P}_U(\{u\in\cc{U}:\rr{P}_U(u)\leq\frac{1}{\lambda}\})\geq 1-\eta$, then for a randomly selected function $\kappa:\cc{U}\rightarrow\{1,2,\cdots,k\}$ it holds, \[\rr{Pr}\Big(\big\|\kappa(\rr{P}_U)-\rr{P}_0\big\|>\epsilon+2\eta\Big)\leq 2k\rr{e}^{-\frac{\lambda\epsilon^2(1-\eta)}{2k(1+\epsilon)}},\]
where $P_0(i)=1/k$ for all $i=1,2,\cdots k$.
Random selection means that the $\kappa(u),u\in\cc{U}$ are chosen \gls{iid} uniformly.
\end{lem}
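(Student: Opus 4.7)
\emph{Proof proposal.} The plan is to partition $\cc{U}$ into a ``good'' subset $\cc{U}_0:=\{u\in\cc{U}:\rr{P}_U(u)\leq 1/\lambda\}$ (of mass at least $1-\eta$ by hypothesis) and its complement $\cc{U}_1$, write $Q(i):=\sum_{u:\kappa(u)=i}\rr{P}_U(u)$ for the $i$-th coordinate of $\kappa(\rr{P}_U)$, and split $Q(i)=Q_0(i)+Q_1(i)$ by restricting the sum to $\cc{U}_0$ or $\cc{U}_1$. Set $\mu_i:=\bb{E}_\kappa[Q_0(i)]=\rr{P}_U(\cc{U}_0)/k$. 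The first step is a triangle-inequality decomposition that isolates the random fluctuations of the good part from the atypical mass:
\begin{align*}
\|\kappa(\rr{P}_U)-\rr{P}_0\|
&=\sum_{i=1}^{k}\bigl|Q(i)-1/k\bigr|\\
&\leq\sum_{i=1}^{k}|Q_0(i)-\mu_i|+\bigl(1-\rr{P}_U(\cc{U}_0)\bigr)+\rr{P}_U(\cc{U}_1)\\
&\leq\sum_{i=1}^{k}|Q_0(i)-\mu_i|+2\eta,
\end{align*}
where I used $\mu_i\leq 1/k$ so that $|\mu_i-1/k|=1/k-\mu_i$. It thus suffices to show that $\sum_{i=1}^k|Q_0(i)-\mu_i|\leq\epsilon$ with the claimed probability.

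For a fixed $i$, $Q_0(i)=\sum_{u\in\cc{U}_0}\rr{P}_U(u)\,\mathds{1}[\kappa(u)=i]$ is a sum of independent summands (by the i.i.d.\ uniformity of $\kappa$), each lying in $[0,1/\lambda]$, with total mean $\mu_i$. Applying the multiplicative Chernoff bound with relative deviation $\delta=\epsilon$ gives
\[
\rr{Pr}\bigl(|Q_0(i)-\mu_i|>\epsilon\mu_i\bigr)\leq 2\exp\!\Bigl(-\tfrac{\lambda\epsilon^2\mu_i}{2+\epsilon}\Bigr)\leq 2\exp\!\Bigl(-\tfrac{\lambda\epsilon^2(1-\eta)}{2k(1+\epsilon)}\Bigr),
\]
using $\mu_i\geq(1-\eta)/k$ and $2+\epsilon\leq 2(1+\epsilon)$. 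A union bound over $i=1,\dots,k$ bounds the failure probability by $2k\exp\bigl(-\lambda\epsilon^2(1-\eta)/(2k(1+\epsilon))\bigr)$; on the complementary event $\sum_i|Q_0(i)-\mu_i|\leq\epsilon\sum_i\mu_i=\epsilon\,\rr{P}_U(\cc{U}_0)\leq\epsilon$, which combined with the decomposition yields $\|\kappa(\rr{P}_U)-\rr{P}_0\|\leq\epsilon+2\eta$, as desired.

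The main obstacle is matching the stated exponent exactly. The decisive step is to measure the per-bin deviation \emph{relatively} (Chernoff with $\delta=\epsilon$) rather than absolutely: doing so makes the mean $\mu_i$ itself appear in the exponent, so that substituting the lower bound $\mu_i\geq(1-\eta)/k$ simultaneously produces the $1/k$ and $1-\eta$ factors, while the almost-sure bound $1/\lambda$ on each summand supplies the $\lambda$. The $\ell^1$-summation $\sum_i\mu_i\leq 1$ is what converts a per-bin relative accuracy $\epsilon$ into an overall additive $\epsilon$ in the $1$-norm, without any extra factor of $k$ appearing in the final bound.
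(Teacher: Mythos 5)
Your proof is correct, and since the paper quotes this lemma from \cite{ahl_cr2,csk_inf} without reproving it, the relevant comparison is with the classical balanced-coloring argument there --- which is exactly what you have done: truncate to the atoms of mass at most $1/\lambda$, apply a multiplicative Chernoff/Bernstein bound to each color class $Q_0(i)$ (each summand bounded by $1/\lambda$, mean $\mu_i\geq(1-\eta)/k$), and union-bound over the $k$ classes. The constants work out exactly as claimed via $(1+\delta)\ln(1+\delta)-\delta\geq\delta^2/(2+\delta)$ and $2+\epsilon\leq 2(1+\epsilon)$, so nothing is missing.
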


\begin{lem}[\!\!\cite{blk_cap}]\label{lem_app}
Let $\cc{S}$ be an arbitrary set and possibly infinite and $\{\rr{W}_s:\cc{X}\to\cc{P(U)}\}_{s\in\cc{S}}$ be a family of stochastic matrices. For every $l\in\bb{N}$ with $l\geq 2|\cc{U}|^2$, there exists a family of stochastic matrices $\{\rr{W}_{s'}:\cc{X}\to\cc{P(U)}\}_{s'\in\cc{S}'}$ with a finite set $\cc{S}'$, such that $|\cc{S}'|\leq (l+1)^{|\cc{X\times U}|}$, where the following properties hold: $\forall s\in\cc{S},\;\exists s'\in\cc{S}',\forall x\in\cc{X},\forall u\in\cc{U},$
\begin{align*}
 |\rr{W}_s(u|x)\!-\!\rr{W}_{s'}(u|x)|\!\leq \frac{1}{l}|\cc{U}|,\;
\rr{W}_s(u|x)\leq \rr{e}^{2|\cc{U}|^2/l}\rr{W}_{s'}(u|x).
\end{align*}
\end{lem}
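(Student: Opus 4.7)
The plan is to construct, once and for all, a finite universal lattice of stochastic matrices $\cc{X} \to \cc{P(U)}$ that covers an arbitrary family $\{\rr{W}_s\}_{s \in \cc{S}}$ in both the additive and the multiplicative sense. Each $\rr{W}_s$ is then mapped to its nearest lattice element, and $\cc{S}'$ indexes the (necessarily finite) image of this rounding map. Since the lattice depends only on $|\cc{X}|$, $|\cc{U}|$, $l$ and not on $\cc{S}$, the cardinality bound follows from counting the lattice once the construction is specified.

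Fix $l \geq 2|\cc{U}|^2$. For each $x \in \cc{X}$ and each conditional distribution $\rr{P}(\cdot):=\rr{W}_s(\cdot|x)$, set $a_u := \lfloor l\,\rr{P}(u)\rfloor$; then $\sum_u a_u = l - m$ with $0 \leq m \leq |\cc{U}|-1$. Add one to $m$ chosen coordinates (e.g., those with the largest fractional parts) to obtain integers $a_u' \in \{a_u, a_u + 1\}$ with $\sum_u a_u' = l$. Define the lattice distribution
\[ \rr{P}'(u) := \frac{a_u' + 1}{l + |\cc{U}|}, \]
and set $\rr{W}_{s'}(u|x) := \rr{P}'(u)$ coordinate-wise in $x$. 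Note $\sum_u \rr{P}'(u) = (l+|\cc{U}|)/(l+|\cc{U}|) = 1$, so $\rr{W}_{s'}$ is stochastic; and the shift (numerator $+1$, denominator $l+|\cc{U}|$) guarantees $\rr{P}'(u) \geq 1/(l+|\cc{U}|) > 0$ for every $u$, which is what makes any multiplicative comparison feasible in the first place.

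For the bounds, from $a_u/l \leq \rr{P}(u) \leq (a_u + 1)/l$ and $a_u' \in \{a_u, a_u+1\}$, a direct subtraction yields $|\rr{P}(u) - \rr{P}'(u)| \leq |\cc{U}|/l$, and
\[ \frac{\rr{P}(u)}{\rr{P}'(u)} \leq \frac{(a_u+1)/l}{(a_u+1)/(l+|\cc{U}|)} = 1 + \frac{|\cc{U}|}{l} \leq \rr{e}^{|\cc{U}|/l} \leq \rr{e}^{2|\cc{U}|^2/l}, \]
using $1+x\leq \rr{e}^x$ and $|\cc{U}|\geq 1$; the case $\rr{P}(u)=0$ is trivial. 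For counting, the number of $|\cc{U}|$-tuples of nonnegative integers summing to $l$ is at most $(l+1)^{|\cc{U}|}$, and a lattice matrix is one such tuple per $x \in \cc{X}$, so the lattice has at most $(l+1)^{|\cc{X}||\cc{U}|} = (l+1)^{|\cc{X}\times\cc{U}|}$ elements, bounding $|\cc{S}'|$.

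The main obstacle I anticipate is getting both bounds simultaneously. Ordinary rounding to the grid $\{0,1/l,\dots,1\}$ yields the $O(1/l)$ additive bound but permits $\rr{P}'(u) = 0$ when $\rr{P}(u)$ is small and positive, which destroys the ratio inequality. The $+1$-in-numerator, denominator-$(l+|\cc{U}|)$ perturbation bounds every lattice entry below by $1/(l+|\cc{U}|)$ at the price of an $O(|\cc{U}|/l)$ bias in both estimates, and the hypothesis $l \geq 2|\cc{U}|^2$ ensures this bias sits safely inside the advertised error envelopes.
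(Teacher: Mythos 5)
Your construction is correct: the rounding to integer tuples summing to $l$ gives the cardinality bound $(l+1)^{|\cc{X}\times\cc{U}|}$, the shift $\rr{P}'(u)=(a_u'+1)/(l+|\cc{U}|)$ (which is just a convex mixture of the rounded row with the uniform distribution on $\cc{U}$) keeps every entry bounded below by $1/(l+|\cc{U}|)$ so that the ratio bound $\rr{P}(u)/\rr{P}'(u)\leq (l+|\cc{U}|)/l\leq\rr{e}^{|\cc{U}|/l}$ follows, and the additive error is within $|\cc{U}|/l$ (the only step worth writing out is $\rr{P}'(u)-\rr{P}(u)\leq 2/(l+|\cc{U}|)\leq|\cc{U}|/l$, which needs $|\cc{U}|\geq 2$; the case $|\cc{U}|=1$ is degenerate). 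The paper does not prove this lemma but imports it from the cited reference, and your argument is essentially the standard quantization-plus-smoothing proof used there, in fact yielding a slightly stronger exponent than the stated $\rr{e}^{2|\cc{U}|^2/l}$.
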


\begin{lem}[\!\!\cite{zha_est}]\label{lem_muc}
Let $(X,Y)$ and $(X',Y')$ be two pairs of \glspl{rv} taking values in $\cc{X}\times\cc{Y}$ with \glspl{pd} $\rr{P}_{XY}$ and $\rr{P}_{X'Y'}$ respectively. Furthermore, let $\gamma:=\frac{1}{2}\Vert \rr{P}_{XY}-\rr{P}_{X'Y'}\Vert$ and $\gamma\leq 1 - \frac{1}{|\cc{X}\times\cc{Y}|}.$
Then, it holds that
\begin{align*}
|I(X;Y)-I(X';Y')|\leq 3\gamma\log(|\cc{X}\times\cc{Y}|-1)+3h(\gamma).
\end{align*}
\end{lem}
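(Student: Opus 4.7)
The plan is to reduce mutual-information continuity to entropy continuity via the identity $I(X;Y) = H(X) + H(Y) - H(X,Y)$. By the triangle inequality,
\begin{align*}
|I(X;Y) - I(X';Y')| &\leq |H(X) - H(X')| + |H(Y) - H(Y')| \\
&\quad + |H(X,Y) - H(X',Y')|.
\end{align*}
Since marginalization is a $1$-norm contraction, $\frac{1}{2}\|\rr{P}_X - \rr{P}_{X'}\| \leq \gamma$ and $\frac{1}{2}\|\rr{P}_Y - \rr{P}_{Y'}\| \leq \gamma$, so each of the three pairs of distributions is at half-$1$-norm distance at most $\gamma$.

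To each of the three entropy differences I would apply the Fannes--Csisz\'ar continuity bound: for $\rr{P}, \rr{Q}$ supported on an alphabet $\cc{A}$, if $\theta := \frac{1}{2}\|\rr{P}-\rr{Q}\| \leq 1 - 1/|\cc{A}|$, then $|H(\rr{P}) - H(\rr{Q})| \leq \theta \log(|\cc{A}|-1) + h(\theta)$. To land in the uniform right-hand side appearing in the lemma, I would apply this inequality with $|\cc{A}| = |\cc{X}\times\cc{Y}|$ throughout. For the joint difference this is immediate. For the marginals, I embed $\rr{P}_X, \rr{P}_{X'}$ into $\cc{X}\times\cc{Y}$ by placing all mass on a fixed slice $\{(\cdot, y_0)\}$; this preserves both entropy and $1$-norm distance. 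Symmetrically for $Y$. The lemma's hypothesis $\gamma \leq 1 - 1/|\cc{X}\times\cc{Y}|$ then ensures that Fannes--Csisz\'ar is applicable to each of the three terms on alphabet $\cc{X}\times\cc{Y}$.

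To finish, observe that $f(\theta) := \theta \log(|\cc{X}\times\cc{Y}|-1) + h(\theta)$ has derivative $f'(\theta) = \log\bigl((|\cc{X}\times\cc{Y}|-1)(1-\theta)/\theta\bigr)$, so $f$ is strictly increasing on $[0, 1-1/|\cc{X}\times\cc{Y}|]$. Writing $\gamma_X := \frac{1}{2}\|\rr{P}_X-\rr{P}_{X'}\|$ and $\gamma_Y$ analogously, both are at most $\gamma$ and all three arguments lie in this admissible range, so each entropy difference is bounded above by $f(\gamma) = \gamma \log(|\cc{X}\times\cc{Y}|-1) + h(\gamma)$; summing the three copies and combining with the triangle inequality yields the claim. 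The main subtlety, and the reason for the factor of three in the final bound, is homogenizing the three Fannes--Csisz\'ar inequalities: the marginals a priori have smaller total-variation distances and live on smaller alphabets, and $h$ is not monotone on $[0,1]$, so a naive replacement of $\gamma_X$ by $\gamma$ need not preserve each inequality. Embedding the marginals into $\cc{X}\times\cc{Y}$ and then invoking the monotonicity of $f$ on precisely the range enforced by the hypothesis resolves this issue and delivers three identical bounds.
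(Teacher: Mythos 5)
Your proof is correct, but note that the paper does not prove this lemma at all: it is imported verbatim from the cited reference (Zhang's mutual-information continuity estimate), so there is no internal argument to compare against. What you supply is a complete, self-contained derivation, and the route you take is the natural one: write $I(X;Y)=H(X)+H(Y)-H(X,Y)$, use that marginalization contracts the $1$-norm, and apply the sharp Fannes--Audenaert entropy continuity bound $|H(\rr{P})-H(\rr{Q})|\leq\theta\log(|\cc{A}|-1)+h(\theta)$ three times. The two details you flag are exactly the ones that need care, and you handle both correctly. First, applying the entropy bound to $\rr{P}_X,\rr{P}_{X'}$ on the alphabet $\cc{X}$ itself could fail its admissibility condition, since the hypothesis only gives $\gamma_X\leq\gamma\leq 1-1/|\cc{X}\times\cc{Y}|$, which does not imply $\gamma_X\leq 1-1/|\cc{X}|$; your embedding of the marginals onto a slice $\{(\cdot,y_0)\}$ of $\cc{X}\times\cc{Y}$ preserves entropy and $1$-norm distance and restores applicability on the larger alphabet. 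Second, since $h$ is not monotone on $[0,1]$, replacing $\gamma_X$ by $\gamma$ requires the monotonicity of $f(\theta)=\theta\log(d-1)+h(\theta)$, and your computation $f'(\theta)=\log\bigl((d-1)(1-\theta)/\theta\bigr)\geq 0$ precisely on $[0,1-1/d]$ is right, with the lemma's hypothesis placing all three arguments in that range. Summing the three identical bounds gives exactly the stated inequality. The only (degenerate) edge case is $|\cc{X}\times\cc{Y}|=1$, where the hypothesis forces $\gamma=0$ and both mutual informations vanish, so nothing is lost.
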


\subsection{Random Coding and Security Lemmas}\label{ssc_rns}

In this subsection, Lemmas~\ref{lem_rac} and~\ref{lem_skg} for  finite compound sets with their proofs are presented. Similar techniques which are used in the non-compound versions in~\cite[Lemmas 17.5 and 17.22]{csk_inf},\cite{csi_alm,csi_crh}, are used in the following proofs and extended to the compound setup. For completeness, we present all proofs in detail. These lemmas are required for the proof of Theorem~\ref{thm_uv}.
Assume in Lemma~\ref{lem_rac}, if the values in the equations~\eqref{eqn_n1}, \eqref{eqn_n2}, \eqref{eqn_n3}, and~\eqref{eqn_n4} are not integer numbers, then the smallest integer which is larger than the given expression is taken.

\begin{lem}\label{lem_rac}
Let $\delta>0$ and $\sigma>\zeta>0$ be all in $\bb{R}$ and sufficiently small. Furthermore, let Alice's state $\hat{s}\in\hat{\cc{S}}$ be given and for all $s\in\cc{I}(\hat{s})$, \glspl{rv} $X_{\hat{s}}$ and $Y_s$ take their values in $\cc{X}$ and $\cc{Y}$ respectively. Let also \glspl{rv} $U_{\hat{s}}$ and $V_{\hat{s}}$ be given such that for all $s\in\cc{I}(\hat{s})$ the Markov chains \mbox{$U_{\hat{s}}-V_{\hat{s}}-X_{\hat{s}}-Y_s$} hold. 

Assume $N_{\hat{s},1} N_{\hat{s},2}$ random sequences $u_{i j}^n(\hat{s})\in\cc{U}^n$, chosen independently according to \gls{pd} $\rr{P}_{U_{\hat{s}}}^n$, are given and known to Alice and Bob where
\begin{align}
&i\in \cc{I}:=\big\{1,2,\cdots,N_{\hat{s},1}\big\},\;j\in \cc{J}:=\big\{1,2,\cdots,N_{\hat{s},2}\big\},\nonumber\\
&N_{\hat{s},1}:=\exp\Big[n\big(\max\limits_{s\in{\cc{I}(\hat{s})}}I(U_{\hat{s}};X_{\hat{s}}|Y_s)+3\delta\big)\Big],\label{eqn_n1}\\
&N_{\hat{s},2}:=\exp\Big[n\big(\min\limits_{s\in{\cc{I}(\hat{s})}}I(U_{\hat{s}};Y_s)-2\delta\big)\Big].\label{eqn_n2}
\end{align}

Moreover, let for each $u_{i j}^n(\hat{s})$, $N_{\hat{s},3}N_{\hat{s},4}$ random sequences ${v^{ij}_{pq}}^n(\hat{s})\in\cc{V}^n$, chosen conditionally independently according to $\rr{P}^n_{V_{\hat{s}}|U_{\hat{s}}}\!(\cdot|u_{i j}^n(\hat{s}))$, be given and known to Alice and Bob where
\begin{align}
&p\in \cc{P}:=\big\{1,2,\cdots,N_{\hat{s},3}\big\},\;q\in \cc{Q}:=\big\{1,2,\cdots,N_{\hat{s},4}\big\},\nonumber\\
&N_{\hat{s},3}:=\exp\Big[n\big(\max\limits_{s\in{\cc{I}(\hat{s})}}I(V_{\hat{s}};X_{\hat{s}}|U_{\hat{s}}Y_s)+3\delta\big)\Big],\label{eqn_n3}\\
&N_{\hat{s},4}:=\exp\Big[n\big(\min\limits_{s\in{\cc{I}(\hat{s})}}I(V_{\hat{s}};Y_s|U_{\hat{s}})-2\delta\big)\Big].\label{eqn_n4}
\end{align}
Then \gls{cr} can be generated between Alice and Bob in two ways:

\vspace*{0.5em}
a) For $n\in\bb{N}$ sufficiently large, there exist encoder functions \mbox{$f:\cc{T}\rightarrow\cc{I}$} and $g:\cc{T}\rightarrow\cc{J}$, with a probability approaching 1, doubly exponentially fast where 
\begin{align}\label{eqn_tfg}
\cc{T}:=\Big\{x^n\in\cc{X}^n:\cc{T}_{[U X,{\hat{s}}]\zeta}^n(x^n)\neq\emptyset\Big\},
\end{align}
and if $f(x^n)=i, \;g(x^n)=j$ then 
$
(u_{i j}^n(\hat{s}),x^n)\in\cc{T}_{[UX,{\hat{s}}]\zeta}^n.
$
Alice encodes her observation $x^n\in\cc{T}$ by these functions to the sequence $u_{i j}^n(\hat{s})$ where $j$ is the \gls{cr}.

For functions $f$ and $g$, extending their domain to $\cc{X}^n$ by defining for all $x^n\not\in\cc{T},\;f(x^n)=g(x^n)=0$, there exists a decoder $\tilde{g}:\cc{I}\times\cc{\hat{S}}\times\cc{Y}^n\rightarrow\cc{J}$ such that for all $s\in\cc{I}(\hat{s})$,
\begin{align}\label{eqn_es1}
\rr{Pr}\Big\lbrace g(X_{\hat{s}}^n) \neq \tilde{g}\big(f(X_{\hat{s}}^n),\hat{s},Y_s^n\big)\Big\rbrace<\exp(-n\delta_0),
\end{align}
for some $\delta_0>0$. Thus, Bob can reconstruct $g(x^n)=j$ from $(f(x^n),\hat{s},y^n)$ for given realizations $X_{\hat{s}}^n=x^n$ and $Y_s^n=y^n$.

\vspace*{0.5em}
b) For each $f$ and $g$ from part a), and $n\in\bb{N}$ sufficiently large, there exist encoder functions $\varphi:\cc{T}\rightarrow\cc{P}$ and \mbox{$\rho:\cc{T}\rightarrow\cc{Q}$} with a probability approaching 1, doubly exponentially fast, such that if $f(x^n)\!=\!i,\, g(x^n)\!=\!j,\, \varphi(x^n)\!=\!p,\,\rho(x^n)\!=\!q$ then 
$
(u_{i j}^n(\hat{s}),{v^{ij}_{pq}}^n(\hat{s}),x^n)\in\cc{T}_
{[UVX,{\hat{s}}]\sigma}^n.
$
Alice encodes her observation $x^n\in\cc{T}$ by these functions to the sequence ${v^{ij}_{pq}}^n(\hat{s})$ where $q$ is the \gls{cr}.

For functions $\varphi$ and $\rho$, extending their domain to $\cc{X}^n$ by defining
for all $x^n\not\in\cc{T},\; \varphi(x^n)=\rho(x^n)=0$,
there exists a decoder \mbox{$\tilde{\rho}:\cc{I}\times\cc{J}\times\cc{P}\times\cc{\hat{S}}\times\cc{Y}^n\rightarrow\cc{Q}$}
such that for all $s\in\cc{I}(\hat{s})$,
\begin{align}
\rr{Pr}\Big\lbrace \rho(X_{\hat{s}}^n) \neq \tilde{\rho}\big(f(X_{\hat{s}}^n),g(X_{\hat{s}}^n),\varphi(&X_{\hat{s}}^n),\hat{s},Y_s^n\big)\Big\rbrace\nonumber\\&<\exp(-n\delta'_0),\label{eqn_es2}
\end{align}
for some $\delta'_0>0$. Thus, Bob can reconstruct $\rho(x^n)=q$ from $(f(x^n),g(x^n),\varphi(x^n),\hat{s},y^n)$ for given realizations $X_{\hat{s}}^n=x^n$ and $Y_s^n=y^n$.
\end{lem}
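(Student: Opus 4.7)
My plan is to extend the non-compound joint-typicality coding argument (cf.~\cite[Lemma~17.5]{csk_inf}) to the compound setting by choosing codebook rates via worst-case $\max/\min$ values over $s\in\cc{I}(\hat{s})$, using Lemmas~\ref{lem_ran} and~\ref{lem_ra2} to exhibit a codebook that covers every $x^n\in\cc{T}$, and analyzing a joint-typicality decoder through Lemma~\ref{lem_typ} combined with a union bound over the finite set $\cc{I}(\hat{s})$. The key fact enabling the $\max/\min$ rate choice is that, under the Markov chain $U_{\hat{s}}-X_{\hat{s}}-Y_s$ (which follows from $U_{\hat{s}}-V_{\hat{s}}-X_{\hat{s}}-Y_sZ_s$),
\begin{equation*}
I(U_{\hat{s}};X_{\hat{s}}|Y_s)+I(U_{\hat{s}};Y_s)=I(U_{\hat{s}};X_{\hat{s}}),
\end{equation*}
whose right-hand side does not depend on $s\in\cc{I}(\hat{s})$ because, by Definition~\ref{def_set}, the joint law $\rr{P}_{U_{\hat{s}}X_{\hat{s}}}$ is the same for every $s\in\cc{I}(\hat{s})$. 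Taking $\max_s$ and $\min_s$ on the left therefore gives $\max_s I(U_{\hat{s}};X_{\hat{s}}|Y_s)+\min_s I(U_{\hat{s}};Y_s)=I(U_{\hat{s}};X_{\hat{s}})$, so the overall rate $R_{\hat{s}}:=\frac{1}{n}\log(N_{\hat{s},1}N_{\hat{s},2})$ equals $I(U_{\hat{s}};X_{\hat{s}})+\delta$.

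\textbf{Part (a).} With this rate, Lemma~\ref{lem_ran} tells me that for every $x^n\in\cc{T}$ the number of indices $(i,j)$ satisfying $u_{ij}^n(\hat{s})\in\cc{T}^n_{[UX,\hat{s}]\zeta}(x^n)$ concentrates near $\exp(n\delta)$ with doubly exponentially high probability. Since that set depends only on the type of $x^n$ and the number of types is polynomial in $n$, a union bound makes this hold simultaneously for every $x^n\in\cc{T}$ with probability still approaching $1$ doubly exponentially fast; $f,g$ are then defined by picking any admissible $(i,j)$. For the decoder I let $\tilde{g}(i,\hat{s},y^n)$ be the smallest $j'\in\cc{J}$ for which $(u_{ij'}^n(\hat{s}),y^n)\in\cc{T}^n_{[UY,s]\sigma'}$ for some $s\in\cc{I}(\hat{s})$ and some $\sigma'>\zeta$. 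Fix $s\in\cc{I}(\hat{s})$. The event that the correct $j=g(x^n)$ fails this test is doubly exponentially small by Lemma~\ref{lem_typ}(\ref{typ_mar}) applied to $U_{\hat{s}}-X_{\hat{s}}-Y_s$. For any incorrect $j'$, independence of $u_{ij'}^n(\hat{s})$ from $Y_s^n$ together with Lemma~\ref{lem_typ}(\ref{typ_ixy}) yields a joint-typicality probability at most $\exp(-n(I(U_{\hat{s}};Y_s)-\tau))$, and summing over the $N_{\hat{s},2}$ indices produces a bound of order $\exp(-n(2\delta-\tau))$. A final union bound over $s\in\cc{I}(\hat{s})$ then gives~\eqref{eqn_es1}.

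\textbf{Part (b).} I repeat the argument conditional on each $u_{ij}^n(\hat{s})$, via the analogous identity
\begin{equation*}
\max_{s\in\cc{I}(\hat{s})}I(V_{\hat{s}};X_{\hat{s}}|U_{\hat{s}}Y_s)+\min_{s\in\cc{I}(\hat{s})}I(V_{\hat{s}};Y_s|U_{\hat{s}})=I(V_{\hat{s}};X_{\hat{s}}|U_{\hat{s}}),
\end{equation*}
which again holds because $V_{\hat{s}}-X_{\hat{s}}-Y_s$ is Markov and the joint law $\rr{P}_{U_{\hat{s}}V_{\hat{s}}X_{\hat{s}}}$ is $s$-independent. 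Lemma~\ref{lem_ra2}, applied uniformly over the conditioning sequence $u_{ij}^n(\hat{s})$, therefore provides a codebook on which $\varphi,\rho$ can be chosen so that $(u_{ij}^n(\hat{s}),{v^{ij}_{pq}}^n(\hat{s}),x^n)\in\cc{T}^n_{[UVX,\hat{s}]\sigma}$ whenever $x^n\in\cc{T}$. The decoder $\tilde{\rho}$ then searches for the unique $q'$ such that $(u_{ij}^n,{v^{ij}_{pq'}}^n,y^n)$ is jointly typical for some $s\in\cc{I}(\hat{s})$, and the error analysis mirrors part (a) with Lemma~\ref{lem_typ}(\ref{typ_las}) in place of Lemma~\ref{lem_typ}(\ref{typ_ixy}), producing~\eqref{eqn_es2}.

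\textbf{Main obstacle.} The only genuinely compound difficulty is designing a codebook and a decoder that work simultaneously for every $s\in\cc{I}(\hat{s})$. This is resolved by the two $\max/\min$ identities above, which show that the worst-case covering rate and the worst-case packing rate telescope exactly to the $s$-independent quantities $I(U_{\hat{s}};X_{\hat{s}})$ and $I(V_{\hat{s}};X_{\hat{s}}|U_{\hat{s}})$ with no slack. Once these identities are available, the finiteness of $\cc{I}(\hat{s})$ is just enough to let the final union bound over $s$ pass the exponential error bounds through without degrading the decay rates.
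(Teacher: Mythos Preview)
Your approach is essentially the paper's own proof: the same $\max/\min$ identity collapsing the total rate to $I(U_{\hat{s}};X_{\hat{s}})+\delta$ (resp.\ $I(V_{\hat{s}};X_{\hat{s}}|U_{\hat{s}})+\delta$), the same appeal to Lemmas~\ref{lem_ran}--\ref{lem_ra2} for codebook covering, and the same universal joint-typicality decoder over $\bigcup_{s\in\cc{I}(\hat{s})}\cc{T}^n_{[UY,s]}$ analyzed via Lemma~\ref{lem_typ} with a final union bound over the finite set $\cc{I}(\hat{s})$. One minor slip: the probability that the \emph{correct} index $j$ fails the typicality test is only exponentially small by Lemma~\ref{lem_typ}(\ref{typ_mar}), not doubly exponentially---but exponential decay is precisely what~\eqref{eqn_es1} requires, so the argument goes through unchanged.
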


\begin{proof}[Proof]
a) Let $R$ be the rate of choosing the sequences $\{u_{i j}^n(\hat{s})\}_{(i,j)\in\cc{I}\times\cc{J}}$ which implies that \mbox{$N_{\hat{s},1} N_{\hat{s},2} = \exp(nR)$}. Therefore, by~\eqref{eqn_n1} and~\eqref{eqn_n2} and properties of the Markov chain, it follows that
\begin{align}
R &= \max_{s\in\cc{I}(\hat{s})}I(U_{\hat{s}};X_{\hat{s}}|Y_s) + \min_{s\in\cc{I}(\hat{s})}I(U_{\hat{s}};Y_s)+\delta\nonumber\\
&= I(U_{\hat{s}};X_{\hat{s}}) + \max_{s\in\cc{I}(\hat{s})}[I(U_{\hat{s}};Y_s|X_{\hat{s}})-I(U_{\hat{s}};Y_s)] \nonumber\\ 
&\quad+\min_{s\in\cc{I}(\hat{s})}I(U_{\hat{s}};Y_s)+\delta = I(U_{\hat{s}};X_{\hat{s}})+\delta.\label{eqn_rat}
\end{align}

Similarly as in~\cite[Lemma 17.22]{csk_inf}, for all $x^n\in\cc{T},$ it holds by the definition in~\eqref{eqn_tfg} that $\cc{T}_{[UX,{\hat{s}}]\zeta}^n(x^n)\neq\emptyset.$ Thus, Lemma~\ref{lem_ran} together with~\eqref{eqn_rat} implies
for all $\tau\in (0,R-I(U_{\hat{s}};X_{\hat{s}}))$, all $\zeta>0$ sufficiently small, all $n\in \bb{N}$ sufficiently large, and all $x^n\in\cc{T}$ that
\begin{align*}
\frac{1}{n}\log\Big|\Big\{(i,j)\in\cc{I}\times\cc{J}: u_{i j}^n(\hat{s})\in&\cc{T}_{[UX,{\hat{s}}]\zeta}^n(x^n)\Big\}\Big|\nonumber\\ 
&\quad\geq R-I(U_{\hat{s}};X_{\hat{s}})-\tau,
\end{align*}
with a probability approaching one, doubly exponentially fast. Therefore, for each $x^n\in\cc{T}$, the number of chosen sequences $u_{i j}^n(\hat{s})$ which are in  $\cc{T}_{[UX,{\hat{s}}]\zeta}^n(x^n)$ is non-zero and the functions $f$ and $g$ as mentioned in the lemma, do exist with this probability.

Define for all $i\in\cc{I}$ and $y^n\in\cc{Y}^n$, the decoder as follows:
\begin{align}
\tilde{g}(i,\hat{s},y^n)\!:=\!
\begin{cases}
j & \text{if } j\!\in\!\cc{J}\!, u_{i j}^n(\hat{s}) \!\in\!\!\!\!\mathop{\bigcup}\limits_{s\in\cc{I}(\hat{s})}\!\!\!\!\cc{T}^n_{[UY,s]\sigma|\cc{X}|}(y^n)\\ &\;\text{and }\;\forall m\in\cc{J}, m\neq j\Rightarrow\\ &\quad\;\; u_{im}^n(\hat{s}) \not\in\!\!\!\mathop{\bigcup}\limits_{s\in\cc{I}(\hat{s})}\!\!\!\cc{T}^n_{[UY,s]\sigma|\cc{X}|}(y^n)\\
0 & \text{otherwise.}\label{eqn_dec}
\end{cases}
\end{align}
Moreover, define 
\begin{align*}
\cc{T}_0&:=\Big\lbrace(x^n,y^n)\in\cc{X}^n\!\times\!\cc{Y}^n: x^n\in\cc{T}\nonumber\\ &\qquad\wedge(u_{f(x^n)g(x^n)}^n(\hat{s}),x^n,y^n)\in\mathop{\bigcup}\limits_{s\in\cc{I}(\hat{s})}\cc{T}^n_{[UXY,s]\sigma}\Big\rbrace.
\end{align*}

In the following, it is shown that Alice and Bob's observation $(x^n,y^n)\in\cc{X}^n\times\cc{Y}^n$ is in the set $\cc{T}_0$ with a probability exponentially close to one. It holds that
\begin{align}
&\rr{P}_{XY,s}^n({\cc{T}_0}^{\rr{c}}) \nonumber\\ &=\sum\limits_{x^n\in\cc{T}^{\rr{c}},y^n\in\cc{Y}^n}\!\!\!\!\!\!\rr{P}_{XY,s}^n(x^n,y^n) + \!\!\!\!\sum\limits_{x^n\in\cc{T},(x^n,y^n)\not\in\cc{T}_0}\!\!\!\!\!\!\rr{P}_{XY,s}^n(x^n,y^n) \nonumber\\
&=\rr{P}_{X_{\hat{s}}}^n(\cc{T}^{\rr{c}})+
\sum\limits_{x^n\in\cc{T},(x^n,y^n)\not\in\cc{T}_0}\rr{P}_{X_{\hat{s}}}^n(x^n)\nonumber\\
&\quad\times\rr{P}_{Y_s| X_{\hat{s}}}^n\Big(\!\mathop{\bigcap}\limits_{s\in\cc{I}(\hat{s})}\!\!{\cc{T}^n_{[UXY,s]\sigma}}^{\!\!\!\!\!\!\!\rr{c}}\;\;(u_{f(x^n)g(x^n)}^n(\hat{s}),x^n)\big|x^n\Big).\label{eqn_err}
\end{align}
Lemma~\ref{lem_typ}.\ref{typ_sub} implies for every $x^n\in\cc{T}_{[X_{\hat{s}}]\xi}^n$ with $\xi\in(0,\zeta)$ that 
$\cc{T}_{[UX,\hat{s}]\zeta}^n (x^n)\neq\emptyset$ for $n$ large enough. Thus by the definition in~\eqref{eqn_tfg}, it follows that $\cc{T}\supset\cc{T}_{[X_{\hat{s}}]\xi}^n$ and thus Lemma~\ref{lem_typ}.\ref{typ_ptx} implies for some $c_0>0$ that
\begin{align}\label{eqn_l1}
\rr{P}_{X_{\hat{s}}}^n(\cc{T}^{\rr{c}})<\exp(-nc_0).
\end{align} 

On the other hand, for every $x^n\in\cc{T}$ with $f(x^n)=i$ and $g(x^n)=j$, it holds that $(u_{i j}^n(\hat{s}),x^n)\in\cc{T}_{[UX,\hat{s}]\zeta}^n$. Since for all $s\in\cc{I}(\hat{s}),$ the Markov chains $U_{\hat{s}}-X_{\hat{s}}-Y_s$ hold, Lemma~\ref{lem_typ}.\ref{typ_mar} implies for $c_0$ sufficiently small that 
\begin{align*}
\rr{P}_{Y_s|X_{\hat{s}}}^n\big({\cc{T}^n_{[UXY,s]\sigma}}^{\!\!\!\!\!\!\!\rr{c}}\;\;(u_{f(x^n)g(x^n)}^n(\hat{s}),x^n)|x^n\big)<\exp(-nc_0).
\end{align*}
This inequality together with~\eqref{eqn_err} and~\eqref{eqn_l1} gives
\begin{align}
\rr{P}_{XY,s}^n(\cc{T}_0^{\rr{c}})<\exp(-nc_1),\label{eqn_lfn}
\end{align} 
for some $c_1>0$ and $n$ sufficiently large.

Therefore, to compute the upper-bound of the probability in~\eqref{eqn_es1}, we may just concentrate on all $(x^n,y^n)\in\cc{T}_0$ with 
\begin{align}\label{eqn_nec}
\tilde{g}(f(x^n),\hat{s},y^n)\neq g(x^n). 
\end{align}

\noindent A necessary condition for $(x^n,y^n)\in\cc{T}_0$ is given by
\begin{align*}
(u_{f(x^n)g(x^n)}^n(\hat{s}),y^n)\in\mathop{\bigcup}\limits_{s\in\cc{I}(\hat{s})}\cc{T}_{[UY,s]\sigma|\cc{X}|}^n,
\end{align*}
which together with~\eqref{eqn_nec} and~\eqref{eqn_dec} implies that
\begin{align}\label{eqn_er2}
\exists m\neq g(x^n),\;(u_{f(x^n)m}^n(\hat{s}),y^n)\in\mathop{\bigcup}\limits_{s\in\cc{I}(\hat{s})}\cc{T}_{[UY,s]\sigma|\cc{X}|}^n.
\end{align} 

Furthermore, it follows for all $(x^n,y^n)\in\cc{T}_0$ that \mbox{$x^n\in\cc{T}$} and thus for $f(x^n)=i$ and $g(x^n)=j\neq m$, it holds that
\begin{align}\label{eqn_er3}
(u_{i j}^n(\hat{s}),x^n)\in\cc{T}_{[UX,{\hat{s}}]\zeta}^n.
\end{align}
Define the \gls{rv} 
\begin{align*}
\tilde{U}_{\hat{s}}:=\{U_{ij,\hat{s}}^n\}_{(i,j)\in\cc{I}\times\cc{J}}=(U_{11,\hat{s}}^n
,U_{12,\hat{s}}^n,\cdots,U_{|\cc{I}||\cc{J}|,\hat{s}}^n)
\end{align*}
and let $\tilde{u}(\hat{s}):=\{u_{i j}^n(\hat{s})\}_{(i,j)\in\cc{I}\times\cc{J}}$ be an arbitrary realization. 
For all $s\in\cc{I}(\hat{s})$ and $(x^n,y^n)\in\cc{T}_0$, the relations~\eqref{eqn_er2} and~\eqref{eqn_er3} give the following upper-bound for the error probability in~\eqref{eqn_es1}
\begin{align*}
e_s&\big(\tilde{u}(\hat{s})\big):=\sum_{(x^n,y^n)\in\cc{T}_0}\!\!\!\!\rr{P}_{XY,s}^n(x^n,y^n)\nonumber\\
&\times\!\!\!\sum_{\substack{i\in\cc{I},j\in\cc{J}\\m\in\cc{J}-\{j\}}}\!\!\!\!\mathds{1}_{\cc{T}_{[UX,{\hat{s}}]\zeta}^n}(u_{i j}^n(\hat{s}),x^n)\mathds{1}_{\!\!\!\!\mathop{\bigcup}\limits_{s\in\cc{I}(\hat{s})}\!\!\!\!\cc{T}_{[UY,s]\sigma|\cc{X}|}^n}(u_{im}^n(\hat{s}),y^n).
\end{align*}

The upper-bound of $e_s(\tilde{u}(\hat{s}))$ is given by taking its expectation with respect to $\tilde{U}_{\hat{s}}$ as follows
\begin{align}
&\bb{E}_{\tilde{U}_{\hat{s}}}\Big[e_s(\tilde{U}_{\hat{s}})\Big] = \sum_{(x^n,y^n)\in\cc{T}_0}\!\!\!\!\rr{P}_{XY,s}^n(x^n\!\!,y^n)\!\sum_{\substack{i\in\cc{I},j\in\cc{J}\\m\in\cc{J}
-\{j\}}}\!\!\!\bb{E}_{(U_{ij,\hat{s}}^n,U_{\!im,\hat{s}}^n)}\nonumber\\
&\qquad\Big[\mathds{1}_{\cc{T}_{[UX,{\hat{s}}]\zeta}^n}(U_{ij,\hat{s}}^n,x^n)\mathds{1}_{\!\!\mathop{\bigcup}\limits_{s\in\cc{I}(\hat{s})}\!\!\cc{T}_{[UY,s]\sigma|\cc{X}|}^n}(U_{im,\hat{s}}^n,y^n)\Big], \label{eqn_eus}
\end{align}
where the equality follows by \gls{rv}s  $\{U_{ij,\hat{s}}^n\}_{(i,j)\in\cc{I}\times\cc{J}}$ being independent and using the Fubini theorem~\cite[Chapter II, \S 6]{shi_pro}. In the following, an upper-bound for the inner summation in~\eqref{eqn_eus} is derived, which automatically gives the upper-bound for the expectation on the left hand side.

Let $i\in\cc{I},j\in\cc{J},m\in\cc{J}-\{j\}$ and $\tau,\tau'>0$ be given such that $\delta>\tau+\tau'$. It holds for all $(x^n,y^n)\in\cc{T}_0$, all $\zeta,\sigma>0$ sufficiently small, and $n$ sufficiently large that
\begin{align}
&\bb{E}_{(U_{ij,\hat{s}}^n,U_{\!im,\hat{s}}^n)}\Big[\!\mathds{1}_{\cc{T}_{[UX,{\hat{s}}]\zeta}^n}(U_{ij,\hat{s}}^n,x^n)\mathds{1}_{\!\!\!\!\mathop{\bigcup}\limits_{s\in\cc{I}(\hat{s})}\!\!\cc{T}_{[UY,s]\sigma|\cc{X}|}^n}(U_{im,\hat{s}}^n,y^n)\Big]\nonumber\\ 
&=\rr{Pr}\Big(U_{ij,\hat{s}}^n\!\in\!\cc{T}_{[UX,\hat{s}]\zeta}^n(x^n)\Big)\,\rr{Pr}\Big(U_{im,\hat{s}}^n \!\in\!\!\!\!\mathop{\bigcup}\limits_{s\in\cc{I}(\hat{s})}\!\!\!\cc{T}^n_{[UY,s]\sigma|\cc{X}|}(y^n)\Big)\nonumber\\
&\leq |\cc{I}(\hat{s})|\exp\Big[-n\big( \;I(U_{\hat{s}};X_{\hat{s}})\nonumber\\ 
&\qquad\qquad\qquad\qquad+\min\limits_{s\in\cc{I}(\hat{s})}\!\!I(U_{\hat{s}};Y_s) -\tau-\tau' \;\big) \Big],\label{eqn_exp}
\end{align}
where the equality follows again by \gls{rv}s  $\{U_{ij,\hat{s}}^n\}_{(i,j)\in\cc{I}\times\cc{J}}$ being independent. The inequality is a result of Lemma~\ref{lem_typ}.\ref{typ_ixy}. Moreover, the definitions in~\eqref{eqn_n1} and~\eqref{eqn_n2} imply that
\begin{align}
N_{\hat{s},1}&N_{\hat{s},2} (N_{\hat{s},2}-1)\leq N_{\hat{s},1}N_{\hat{s},2}N_{\hat{s},2}\nonumber\\
&=\exp\Big[\;  n\big( I(U_{\hat{s}};X_{\hat{s}}) + \min\limits_{s\in\cc{I}(\hat{s})}\!\!I(U_{\hat{s}};Y_s)-\delta \big) \;\Big].\label{eqn_n12}
\end{align}
By using~\eqref{eqn_exp} and~\eqref{eqn_n12}, it follows that the inner summation in~\eqref{eqn_eus} is upper-bounded by 
$
\big|\cc{I}(\hat{s})\big|\exp\big[-n(\delta-\tau-\tau')\big],
$
which implies that
\begin{equation}\label{eqn_dtt}
\bb{E}_{\tilde{U}_{\hat{s}}}\!\left[e_s(\tilde{U}_{\hat{s}})\right]\leq |\cc{I}(\hat{s})|\exp\big[-n(\delta-\tau-\tau')\big].
\end{equation}

Let $\delta_1$ be sufficiently small such that $0<\delta_1<\delta-\tau-\tau'$. By the Markov inequality, it follows that
\begin{align}
\rr{Pr}\Big( e_s(\tilde{U}_{\hat{s}})\geq\exp(-n\delta_1) \Big)\leq \frac{\bb{E}_{\tilde{U}_{\hat{s}}}\Big[e_s(\tilde{U}_{\hat{s}})\Big]}{\exp(-n\delta_1)}.\label{eqn_mar}
\end{align}

Therefore, \eqref{eqn_dtt} and~\eqref{eqn_mar} imply for all $\zeta$ and $\sigma$ sufficiently small, and all $n$ large enough that
\begin{align}
\rr{Pr}\Big( \!\!\mathop{\bigcap}\limits_{s\in\cc{I}(\hat{s})}\!\!&\Big\{e_s(\tilde{U}_{\hat{s}})\!<\exp(-n\delta_1) \Big\}\Big)\nonumber\\
&\geq 1-|\cc{I}(\hat{s})|^2 \exp(-n(\delta\!-\!\tau\!-\!\tau'\!-\!\delta_1)).\label{eqn_pr1}
\end{align}

Thus, there exists a realization $\tilde{u}(\hat{s})=\{u_{i j}^n(\hat{s})\}_{(i,j)\in\cc{I}\times\cc{J}}$ of the \gls{rv} $\tilde{U}_{\hat{s}}$, where for all $s\in\cc{I}(\hat{s})$ and $(x^n,y^n)\in\cc{T}_0$, the upper-bound of the error probability in~\eqref{eqn_es1} is given by
$\exp(-n\delta_1).$ 
This implies by using~\eqref{eqn_lfn} that the total error probability in~\eqref{eqn_es1} is exponentially small with some $\delta_0>0$.

b) The proof of the second part is very similar to the first part. For a given $(i,j)\!\in\!\cc{I}\times \cc{J}$, the number of chosen random sequences ${v_{pq}^{ij}}^n$ is $N_{\hat{s},3}N_{\hat{s},4}=\exp(n R')$, where $R'$ is the rate of choosing the random sequences. Similar to~\eqref{eqn_rat}, conditions~\eqref{eqn_n3} and~\eqref{eqn_n4} imply that \mbox{$R'=I(V_{\hat{s}};X_{\hat{s}}|U_{\hat{s}})+\delta.$} Furthermore, as a result of Lemma~\ref{lem_typ}.\ref{typ_sub}, from \mbox{$(u_{i j}^n(\hat{s}),x^n)\in\cc{T}^n_{[UX,\hat{s}]\zeta}$} follows \mbox{$\cc{T}^n_{[UVX,\hat{s}]\sigma}(u_{i j}^n(\hat{s}),x^n)\neq\emptyset$}. Therefore, Lemma~\ref{lem_ra2} implies that functions $\varphi$ and $\rho$ as mentioned in the lemma do exist.

According to part a) of this lemma, Bob is able to reconstruct $g(x^n)=j$, by knowing $f(x^n), \hat{s}$ and $y^n$. Therefore, he knows also $u_{i j}^n(\hat{s})$. 
Let $\vartheta\in\bb{R}$ be given such that $\vartheta>\sigma$. Similar to~\eqref{eqn_dec}, the decoder is defined as $\tilde{\rho}(i,j,p,\hat{s},y^n):=q$ if
\begin{align*}
&q\in\cc{Q},\quad {v^{ij}_{pq}}^n(\hat{s}) \!\in\!\!\!\!\mathop{\bigcup}\limits_{s\in\cc{I}(\hat{s})}\!\!\!\!\cc{T}^n_{[UVY,s]\vartheta|\cc{X}|}(u_{i j}^n(\hat{s}),y^n)\quad\text{and }\\
&\forall r\in\cc{Q}, r\neq q\Rightarrow
{v^{ij}_{pr}}^n(\hat{s}) \!\not\in\!\!\!\!\mathop{\bigcup}\limits_{s\in\cc{I}(\hat{s})}\!\!\!\!\cc{T}^n_{[UVY,s]\vartheta|\cc{X}|}(u_{i j}^n(\hat{s}),y^n),
\end{align*}
and otherwise zero.
Define the set $\cc{T}'_0$ as follows. Similar to~\eqref{eqn_lfn}, the probability $\rr{P}_{XY,s}^n({\cc{T}'_0}^{\rr{c}})$ is exponentially small.
\begin{align*}
\cc{T}'_0&:=\Big\lbrace(x^n,y^n)\in\cc{X}^n\times\cc{Y}^n: x^n\in\cc{T}\nonumber\\
&\!\!\!\!\!\!\!\wedge \!(u_{f(x^n)g(x^n)}^n(\hat{s}),{v^{f(x^n)g(x^n)}_{\varphi(x^n)\rho(x^n)}}^n\!(\hat{s}),x^n\!,y^n)
\!\in\!\!\!\!\mathop{\bigcup}\limits_{s\in\cc{I}(\hat{s})}\!\!\!\!\cc{T}^n_{[UVXY,s]\vartheta}\Big\rbrace.
\end{align*}
For a given \mbox{$(i,j)\!\in\!\cc{I}\!\times\!\cc{J}$} define 
$\tilde{V}_{ij,\hat{s}}\!:=\!\{{V^{ij}_{pq,\hat{s}}}^{\!\!\!n}\}_{(p,q)\in\cc{P}\times\cc{Q}}$
and let $\tilde{v}_{ij}(\hat{s}):=\{{v_{pq}^{ij}}^n(\hat{s})\}_{(p,q)\in\cc{P}\times\cc{Q}}$ be an arbitrary realization. 

\noindent Define for all $(x^n,y^n)\in\cc{T}'_0$ with $f(x^n)=i,\, g(x^n)=j$,
\begin{align}
e_s&\big(\tilde{v}_{ij}(\hat{s})\big):=\!\!\!\sum_{\substack{p\in\cc{P},\,q\in\cc{Q}\\r\in\cc{Q}-\{q\}}}\!\mathds{1}_{\!\mathop{\bigcup}\limits_{s\in\cc{I}(\hat{s})}\!\!\!\!\cc{T}_{[UVXY,s]\vartheta}^n}(u_{i j}^n(\hat{s}),{v_{pq}^{ij}}^n(\hat{s}),x^n,y^n)\nonumber\\
&\qquad\qquad\times\mathds{1}_{\!\!\!\!\mathop{\bigcup}\limits_{s\in\cc{I}(\hat{s})}\!\!\!\!\cc{T}_{[UVY,s]\vartheta|\cc{X}|}^n}(u_{i j}^n(\hat{s}),{v_{pr}^{ij}}^n(\hat{s}),y^n).\label{eqn_za2}
\end{align}
Similarly as in part a), it can be shown that the error probability in~\eqref{eqn_es2} is upper-bounded for all $(x^n,y^n)\in\cc{T}'_0$ by
\begin{align}\label{eqn_e55}
\sum_{i\in\cc{I},j\in\cc{J}}\sum_{\substack{(x^n,y^n)\in\cc{T}'_0\\f(x^n)=i,\, g(x^n)=j}}\!\!\!\!\!\!\rr{P}_{XY,s}^n(x^n,y^n)\,e_s\big(\tilde{v}_{ij}(\hat{s})\big).
\end{align}
For $\tau,\tau'>0$ with $\tau+\tau'<\delta$, Lemma~\ref{lem_typ}.\ref{typ_las} implies that 
\begin{align}
&\bb{E}_{({V^{ij}_{pq,\hat{s}}}^{\!\!\!n},{V^{ij}_{\!pr,\hat{s}}}^{\!\!\!n})}\Big[\mathds{1}_{\mathop{\bigcup}
\limits_{s\in\cc{I}(\hat{s})}\cc{T}_{[UVXY,s]\vartheta}^n}(u_{i j}^n(\hat{s}),{V^{ij}_{\!pq,\hat{s}}}^{\!\!\!n},x^n,y^n)\nonumber\\
&\qquad\times\mathds{1}_{\mathop{\bigcup}\limits_{s\in\cc{I}(\hat{s})}\!\!\cc{T}_{[UVY,s]\vartheta|\cc{X}|}^n}(u_{i j}^n(\hat{s}),{V_{pr,\hat{s}}^{ij}}^{\!\!\!n},y^n)\;\big|\; U_{ij,\hat{s}}^n=u_{i j}^n(\hat{s})\Big]\nonumber\\ 
&\leq \big|\cc{I}(\hat{s})\big|^2\cdot \exp\Big[-n\big( \;I(V_{\hat{s}};X_{\hat{s}}|U_{\hat{s}})\nonumber\\ 
&\qquad\qquad\qquad\qquad+\min\limits_{s\in\cc{I}(\hat{s})}\!\!I(V_{\hat{s}};Y_s|U_{\hat{s}}) -\tau-\tau') \;\big) \Big].\label{eqn_za3}
\end{align}
Moreover, by~\eqref{eqn_n3} and~\eqref{eqn_n4}, it follows that
\begin{align}
&N_{\hat{s},3} N_{\hat{s},4} (N_{\hat{s},4}-1)\leq N_{\hat{s},3}N_{\hat{s},4}N_{\hat{s},4}\nonumber\\
&=\exp\Big[\;  n\big( I(V_{\hat{s}};X_{\hat{s}}|U_{\hat{s}}) + \min\limits_{s\in\cc{I}(\hat{s})}\!\!I(V_{\hat{s}};Y_s|U_{\hat{s}})-\delta \big) \;\Big].\label{eqn_za4}
\end{align}
Thus, \eqref{eqn_za3} and~\eqref{eqn_za4} imply by using the definition in~\eqref{eqn_za2} that
\begin{align*}
\bb{E}_{\tilde{V}_{ij,\hat{s}}}\!\Big[e_s(\tilde{V}_{ij,\hat{s}}
)\big| U_{ij,\hat{s}}^n\!=\!u_{i j}^n(\hat{s})\Big]\!\leq\!\big|\cc{I}(\hat{s})\big|^2\!\exp\!\Big[\!-\!n(\delta\!-\!\tau\!-\!\tau')\Big].
\end{align*}
Therefore, it follows for all $0<\delta'_1<\delta-\tau-\tau'$ that
\begin{align*}
\rr{Pr}\Big( \mathop{\bigcap}\limits_{s\in\cc{I}(\hat{s})}&\Big\{e_s(\tilde{V}_{ij,\hat{s}})
<\exp(-n\delta'_1) \Big\}\;\Big|\; U_{ij,\hat{s}}^n\!=\!u_{i j}^n(\hat{s})\Big)\nonumber\\
&\geq 1-\big|\cc{I}(\hat{s})\big|^3\exp\big(-n(\delta-\tau-\tau'-\delta'_1)\big).
\end{align*}
This implies that there exist sequences $\tilde{v}_{ij}(\hat{s})$, for which the error upper-bound in~\eqref{eqn_e55} is exponentially small.
\end{proof}

\begin{lem}\label{lem_skg}
Let Alice's state $\hat{s}\in\hat{\cc{S}}$ be given and $C,D_s,$ and $\hat{S}_{\hat{s}}$ with \mbox{$s\in\cc{I}(\hat{s})$} be \gls{rv}s taking value in $\cc{C},\cc{D},$ and $\hat{\cc{S}}$ respectively. \glspl{rv} $C$ and $D_s$ denote the \gls{cr} and part of Eve's Knowledge respectively. Assume $\alpha\in(0,\frac{1}{6}]$ and $\eta\in(0,\frac{1}{3}]$ with $\alpha\leq\eta$ are given and for all $s\in\cc{I}(\hat{s})$, there exist sets $\cc{B}_s\subset\cc{C}\times\cc{D}$ with 
\begin{align}
\forall (c,d)\in\cc{B}_s,\; \rr{P}_{CD,s|\hat{S}_{\hat{s}}}(c,d|\hat{s})&<(\alpha|\cc{B}_s|)^{-1},\label{eqn_sc1}\\
\rr{P}_{CD,s|\hat{S}_{\hat{s}}}(\cc{B}_s|\hat{s})&\geq 1-(\eta^2-\alpha^2).\label{eqn_sc2}
\end{align}
Furthermore, define the sets $\cc{B}_{s,d}:=\big\{ c\in\cc{C}: (c,d)\in\cc{B}_s \big\}$ and $\cc{D}_s:=\big\{ d\in\cc{D}:\cc{B}_{s,d}\neq\emptyset\big\}$, and assume
\begin{equation}\label{eqn_key}
k\!\in\!\bb{N},\;\, k<\!\alpha^6\!\!\!\!\!\!\!\min_{s\in\cc{I}(\hat{s}),d\in\cc{D}_s}\!\!\!|\cc{B}_{s,d}|,\;\, k<\!\rr{e}^{1/\alpha}(2|\cc{D}|\,|\cc{I}(\hat{s})|)^{-1}.
\end{equation}
Then, there exists a \gls{sk} generator $\kappa:\cc{C}\rightarrow\{1,2,\cdots,k\}$ which maps the \gls{cr} to a \gls{sk} $\kappa(C)$ such that for all $s\!\in\!\cc{I}(\hat{s})$,
\begin{align}\label{eqn_sin}
S\big(\kappa(C)|D_s,\hat{S}_{\hat{s}}=\hat{s}\big)\leq(\alpha+2\eta)\log k + h(\alpha+\eta),
\end{align}
with a probability at least $1-2k\,|\cc{I}(\hat{s})|\,|\cc{D}|\,\rr{e}^{-\frac{\alpha^5 \min |\cc{B}_{s,d}|}{k}}$ where the $\min$ in the exponent is taken over all $s\in\cc{I}(\hat{s})$ and $d\in\cc{D}_s$.
\end{lem}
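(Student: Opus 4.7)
The plan is to apply the hash-extraction bound Lemma~\ref{lem_ext} conditionally on each realization $d$ of Eve's side information $D_s$ and each compound index $s \in \cc{I}(\hat{s})$, then combine the resulting conditional closeness to uniform into a bound on the security index via a union bound and the continuity of entropy. Throughout, the random object is the uniform $\kappa$, and the goal is a probabilistic-method existence statement.

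First I would rewrite the target quantity as $S(\kappa(C)|D_s,\hat{S}_{\hat{s}}=\hat{s}) = \log k - H(\kappa(C)|D_s,\hat{S}_{\hat{s}}=\hat{s}) = \bb{E}_{D_s}[D(\rr{P}_{\kappa(C)|D_s,\hat{S}_{\hat{s}}}(\cdot|D_s,\hat{s})\|\rr{P}_0)]$ with $\rr{P}_0$ uniform on $\{1,\ldots,k\}$, and invoke the standard continuity bound $|H(\rr{P})-H(\rr{Q})| \leq \|\rr{P}-\rr{Q}\|\log k + h(\|\rr{P}-\rr{Q}\|)$. Thus it suffices to construct $\kappa$ such that, with high probability in the random choice of $\kappa$ and for ``most'' $d$ under $\rr{P}_{D_s|\hat{S}_{\hat{s}}}(\cdot|\hat{s})$, the image $\kappa(\rr{P}_{C|D_s,\hat{S}_{\hat{s}}}(\cdot|d,\hat{s}))$ is within $\alpha + 2\eta$ of $\rr{P}_0$ in $1$-norm, uniformly in $s \in \cc{I}(\hat{s})$.

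Next, I would fix $s$ and $d \in \cc{D}_s$ and apply Lemma~\ref{lem_ext} to the conditional distribution $\rr{P}_{C|D_s,\hat{S}_{\hat{s}}}(\cdot|d,\hat{s})$. Hypothesis~\eqref{eqn_sc1}, after dividing by $\rr{P}_{D_s|\hat{S}_{\hat{s}}}(d|\hat{s})$, yields a pointwise upper bound on this conditional probability on $\cc{B}_{s,d}$. Combined with the estimate $|\cc{B}_{s,d}| \geq \alpha|\cc{B}_s|\rr{P}_{CD,s|\hat{S}_{\hat{s}}}(\cc{B}_s\cap(\cc{C}\times\{d\})|\hat{s})$ (immediate from~\eqref{eqn_sc1}), one arrives at Lemma~\ref{lem_ext} with $\epsilon = \alpha$, exceptional mass $\eta_d := 1-\rr{P}_{C|D_s,\hat{S}_{\hat{s}}}(\cc{B}_{s,d}|d,\hat{s})$, and effective $\lambda$ proportional to $|\cc{B}_{s,d}|/\alpha$, producing for each $(s,d)$ a failure probability of at most $2k\exp(-\alpha^5|\cc{B}_{s,d}|/k)$. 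A union bound over $s \in \cc{I}(\hat{s})$ and $d \in \cc{D}$ then gives the overall failure probability stated in the lemma; the conditions in~\eqref{eqn_key} --- in particular $k < \rr{e}^{1/\alpha}(2|\cc{D}|\,|\cc{I}(\hat{s})|)^{-1}$ --- are calibrated exactly so that this probability is strictly less than one, and the probabilistic method yields the existence of a deterministic $\kappa$ with the required property.

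On the success event, I would split the expectation in the security index according to whether $d$ is ``good'' ($\eta_d \leq \eta$) or ``bad'', using~\eqref{eqn_sc2} together with Markov's inequality to show $\rr{P}_{D_s|\hat{S}_{\hat{s}}}(\{d : \eta_d > \eta\}|\hat{s}) \leq \eta$; the good-$d$ contribution is at most $(\alpha + 2\eta)\log k + h(\alpha + 2\eta)$ by the continuity of entropy, while the bad-$d$ contribution (bounded by $\log k$ per term) contributes at most $\eta\log k$, so the whole thing is absorbed in $(\alpha + 2\eta)\log k + h(\alpha+\eta)$, giving~\eqref{eqn_sin}. The main obstacle will be the parameter bookkeeping: matching the $(\epsilon + 2\eta_d)$ structure of Lemma~\ref{lem_ext} with the single integrated hypothesis~\eqref{eqn_sc2} (which involves $\eta^2 - \alpha^2$ rather than $\eta$), and calibrating $\lambda$ so that the failure exponent simplifies to exactly $\alpha^5|\cc{B}_{s,d}|/k$ --- this is the delicate core of the compound adaptation and is precisely what forces the specific constants $\alpha^6$ and $\alpha^5$ appearing in~\eqref{eqn_key} and in the failure probability.
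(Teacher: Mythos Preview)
Your overall architecture is correct and matches the paper's: apply Lemma~\ref{lem_ext} to the conditionals $\rr{P}_{C|D_s\hat{S}_{\hat{s}}}(\cdot|d,\hat{s})$, union-bound over $(s,d)$, then split the expectation defining the security index into good and bad $d$'s. The gap is in the step where you claim that dividing~\eqref{eqn_sc1} by $\rr{P}_{D_s|\hat{S}_{\hat{s}}}(d|\hat{s})$ and combining with $|\cc{B}_{s,d}|\geq\alpha|\cc{B}_s|\,\rr{P}_{CD,s|\hat{S}_{\hat{s}}}(\cc{B}_s\cap(\cc{C}\times\{d\})|\hat{s})$ yields an effective $\lambda$ of order $|\cc{B}_{s,d}|/\alpha$. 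It does not: dividing~\eqref{eqn_sc1} gives $\rr{P}_{C|D_s\hat{S}_{\hat{s}}}(c|d,\hat{s})<\bigl(\alpha|\cc{B}_s|\,\rr{P}_{D_s|\hat{S}_{\hat{s}}}(d|\hat{s})\bigr)^{-1}$, so the usable $\lambda$ is $\alpha|\cc{B}_s|\,\rr{P}_{D_s|\hat{S}_{\hat{s}}}(d|\hat{s})$, while your auxiliary inequality says only $\lambda(1-\eta_d)<|\cc{B}_{s,d}|$, an \emph{upper} bound on $\lambda$. For $d$ with tiny marginal $\rr{P}_{D_s|\hat{S}_{\hat{s}}}(d|\hat{s})$ there is no lower bound on $\lambda$ at all, Lemma~\ref{lem_ext} gives a useless failure probability near $1$, and the union bound over $d\in\cc{D}$ collapses. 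Your good/bad split based on $\eta_d$ does not rescue this, since $\eta_d$ can be zero while $\rr{P}_{D_s|\hat{S}_{\hat{s}}}(d|\hat{s})$ is arbitrarily small.

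The missing idea, which the paper supplies, is a second restriction: discard those $d$ whose marginal is too small relative to $|\cc{B}_{s,d}|/|\cc{B}_s|$. Concretely, set $\cc{D}_s':=\{d:\rr{P}_{D_s|\hat{S}_{\hat{s}}}(d|\hat{s})\geq\alpha^2|\cc{B}_{s,d}|/|\cc{B}_s|\}$; then for $(c,d)\in\cc{B}_s\cap(\cc{C}\times\cc{D}_s')$ one gets $\rr{P}_{C|D_s\hat{S}_{\hat{s}}}(c|d,\hat{s})\leq(\alpha^3|\cc{B}_{s,d}|)^{-1}\leq 1/\lambda$ with the \emph{uniform} choice $\lambda:=\alpha^3\min_{s,d}|\cc{B}_{s,d}|$, and a direct count shows $\rr{P}_{D_s|\hat{S}_{\hat{s}}}((\cc{D}_s')^{\rr{c}}|\hat{s})<\alpha^2$. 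Combined with~\eqref{eqn_sc2} this gives $\rr{P}_{CD,s|\hat{S}_{\hat{s}}}(\cc{B}_s\cap(\cc{C}\times\cc{D}_s')|\hat{s})\geq 1-\eta^2$; your Markov argument then applies to this set (not to $\cc{B}_s$ directly) to produce the bad-$d$ set of probability $<\eta$. Lemma~\ref{lem_ext} is invoked only for the remaining $d$, with $\epsilon=2\alpha$ and the uniform $\lambda$, which is exactly what produces the exponent $\alpha^5\min|\cc{B}_{s,d}|/k$ and explains the constants in~\eqref{eqn_key}. Once you insert this extra layer, the rest of your plan goes through essentially as written.
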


\begin{proof}[Proof]
Let $s\in\cc{I}(\hat{s})$ be given. Define:
\begin{align}
\lambda&:=\alpha^3\cdot\!\!\!\!\min\limits_{s\in\cc{I}(\hat{s}),d\in\cc{D}_s}|\cc{B}_{s,d}|,\label{eqn_d100}\\
\quad\cc{D}_s'&:=\Big\{ d\in\cc{D}:\rr{P}_{D_s|\hat{S}_{\hat{s}}}(d|\hat{s})\geq\frac{\alpha^2|\cc{B}_{s,d}|}{|\cc{B}_s|} \Big\},\label{eqn_d90}\\\cc{B}_s'&:=\cc{B}_s\cap(\cc{C}\times\cc{D}_s'),\label{eqn_d91}\\
\cc{G}_s&:=\Big\lbrace (c,d)\in\cc{C}\times\cc{D}: \rr{P}_{C|D\hat{S},s}(c|d,\hat{s})\leq\frac{1}{\lambda}  \Big\rbrace,\label{eqn_d95}\\
\cc{G}_{s,d}&:=\Big\{ c\in\cc{C}: (c,d)\in\cc{G}_s \Big\},\label{eqn_d96}\\
\cc{E}_s&:=\Big\{ d\in\cc{D}:\rr{P}_{C|D\hat{S},s}(\cc{G}_{s,d}|d,\hat{s})<1-\eta \Big\}.\label{eqn_d92}
\end{align}

Similarly as in~\cite[Lemma 17.5]{csk_inf}, we show in the first step that the following inequality is true:
\begin{equation}\label{eqn_pde}
\rr{P}_{D_s|\hat{S}_{\hat{s}}}(\cc{E}_s|\hat{s})<\eta.
\end{equation}
This inequality is required later to show that~\eqref{eqn_sin} holds.
For this, let $(c,d)\in\cc{B}_s'$ be given and $s\in\cc{I}(\hat{s})$. It follows that
\begin{align*}
\rr{P}_{\!C|D\hat{S},s}(c|d,\hat{s})&=\frac{\rr{P}_{CD,s|\hat{S}_{\hat{s}}}(c,d|\hat{s})}{\rr{P}_{D_s|\hat{S}_{\hat{s}}}(d|\hat{s})}\leq \frac{(\alpha|\cc{B}_s|)^{-1}}{\alpha^2|\cc{B}_{s,d}||\cc{B}_s|^{-1}}\leq\!\frac{1}{\lambda},
\end{align*}
where the first inequality follows by~\eqref{eqn_sc1},~\eqref{eqn_d90}, and~\eqref{eqn_d91} and the last one by~\eqref{eqn_d100}.
This implies by the definition in~\eqref{eqn_d95} that
\begin{align}\label{eqn_set}
\cc{B}_s'\subset\cc{G}_s\,.
\end{align}
Moreover, by~\eqref{eqn_d90} and~\eqref{eqn_d91} it holds for all $s\in\cc{I}(\hat{s})$ that
\begin{align}\label{eqn_alp}
&\rr{P}_{D_s|\hat{S}_{\hat{s}}}(\cc{D}_s'^{\rr{c}}|\hat{s})=\!\!\sum\limits_{d\in\cc{D}_s'^{\rr{c}}}\!\!\rr{P}_{\!D_s|\hat{S}_{\hat{s}}}\!(d|\hat{s})<\sum_{d\in\cc{D}}\!\frac{\alpha^2|\cc{B}_{s,d}|}{|\cc{B}_s|}\!=\!\alpha^2,\\
&\rr{P}_{CD,s|\hat{S}_{\hat{s}}}(\cc{B}_s\cup(\cc{C}\times\cc{D}_s')|\hat{s})\nonumber\\
& =\rr{P}_{CD,s|\hat{S}_{\hat{s}}}(\cc{B}_s|\hat{s})+\rr{P}_{D_s|\hat{S}_{\hat{s}}}(\cc{D}_s'|\hat{s})-\rr{P}_{CD,s|\hat{S}_{\hat{s}}}(\cc{B}_s'|\hat{s}).\label{eqn_cds}
\end{align}
The relations~\eqref{eqn_alp} and~\eqref{eqn_cds} together with the assumption~\eqref{eqn_sc2} of the lemma imply that
\begin{align}
\rr{P}_{CD,s|\hat{S}_{\hat{s}}}\!(\cc{B}_s'|\hat{s})\!&\geq\rr{P}_{\!CD,s|\hat{S}_{\hat{s}}}\!(\cc{B}_s|\hat{s})-\rr{P}_{\!\!D_s|\hat{S}_{\hat{s}}}\!(\cc{D}_s'^{\rr{c}}|\hat{s})\nonumber\\
&\geq 1-(\eta^2-\alpha^2)-\alpha^2=1-\eta^2.\label{eqn_s11}
\end{align}
By~\eqref{eqn_set} and~\eqref{eqn_s11}, it follows that
$
\rr{P}_{CD,s|\hat{S}_{\hat{s}}}( \cc{G}_s|\;\hat{s} )\geq 1-\eta^2.
$
This implies by the definition in~\eqref{eqn_d96} for all $s\in\cc{I}(\hat{s})$ that
\begin{align*}
1-\eta^2 &\leq \sum_{d\in\cc{D}}\rr{P}_{D_s|\hat{S}_{\hat{s}}}(d|\hat{s})\rr{P}_{C|D\hat{S},s}(\cc{G}_{s,d}|d,\hat{s})\\
&< \sum_{d\in\cc{E}_s}\rr{P}_{D_s|\hat{S}_{\hat{s}}}(d|\hat{s})(1-\eta) +\sum_{d\not\in\cc{E}_s}\rr{P}_{D_s|\hat{S}_{\hat{s}}}(d|\hat{s})\\
&= -\eta\sum_{d\in\cc{E}_s}\rr{P}_{D_s|\hat{S}_{\hat{s}}}(d|\hat{s})+1,
\end{align*}
where the second inequality follows by~\eqref{eqn_d92}.
The desired relation~\eqref{eqn_pde} follows by simplifying this inequality.

In the second step, we show that a \gls{sk} generator $\kappa$, satisfying~\eqref{eqn_sin}, exists.
For this, consider each member of the family of \gls{pd}s $\rr{P}_{C|D\hat{S},s}(.|d,\hat{s})$ with $d\not\in\cc{E}_s$ and $s\in\cc{I}(\hat{s})$. Lemma~\ref{lem_ext} implies for a randomly selected \gls{sk} generator $\kappa$ that
\begin{align}
\rr{Pr}\Big( \big\|\kappa(\rr{P}_{\!C|D\hat{S},s}(.|d,\hat{s}))\!-\!\rr{P}_0\big\|\!>\!2(\alpha+\eta) \Big)\!\!\leq 2k\rr{e}^{-\frac{\lambda\alpha^2}{k}},\label{eqn_s22}
\end{align}
where $\rr{P}_0(i)=1/k$ for all $i=1,2\cdots k$. The universal upper-bound in~\eqref{eqn_s22} was calculated by taking  $\epsilon=2\alpha$ in Lemma~\ref{lem_ext} and by the inequalities $\alpha\leq 1/6$ and $\eta\leq 1/3$ from the assumption of the lemma. Therefore, for the following events 
\begin{align*}
\cc{A}_{s,d}:=\Big\{ \big\|\kappa(\rr{P}_{C|D\hat{S},s}(.|d,\hat{s}))-\rr{P}_0\big\|\leq2(\alpha+\eta) \Big\},
\end{align*}
it follows by~\eqref{eqn_d100} and~\eqref{eqn_s22} that
\begin{align}
\rr{Pr}\Big( \bigcap_{s\in\cc{I}(\hat{s}),d\not\in\cc{E}_s} \!\!\!\cc{A}_{s,d} \Big)&\geq 1-\sum_{s\in\cc{I}(\hat{s}),d\not\in\cc{E}_s}\rr{Pr}\big( \cc{A}_{s,d}^{\rr{c}}\big)\nonumber\\
&\geq 1-2k\,|\cc{D}|\,|\cc{I}(\hat{s})|\rr{e}^{-\frac{\alpha^5\min|\cc{B}_{s,d}|}{k}},\label{eqn_pro}
\end{align}
where the $\min$ in~\eqref{eqn_pro} is taken over all $s\in\cc{I}(\hat{s})$ and $d\in\cc{D}_s$.

This means that a \gls{sk} generator $\kappa$ satisfies the relation $\|\kappa(\rr{P}_{C|D\hat{S},s}(.|d,\hat{s}))-\rr{P}_0\|\leq2(\alpha+\eta)$ simultaneously for all $d\not\in\cc{E}_s$ and $s\in\cc{I}(\hat{s})$ with the probability stated in~\eqref{eqn_pro}. Therefore, it holds by the same probability that
\begin{align*}
S&\big(\kappa(C)|D_s,\hat{S}_{\hat{s}}=\hat{s}\big)\nonumber\\
&=\sum_{d\in\cc{D}}\rr{P}_{\!D_s|\hat{S}_{\hat{s}}}\!(d|\hat{s})\Big[ \!\log k\! - \!H\big(\kappa(C)|D_s=d,\hat{S}_{\hat{s}}=\hat{s}\big) \Big]\\
&\leq \sum_{d\not\in\cc{E}_s}\rr{P}_{D_s|\hat{S}_{\hat{s}}}(d|\hat{s})\big[ (\alpha+\eta)\log k + h(\alpha+\eta) \big]\nonumber\\
&\quad+\sum_{d\in\cc{E}_s}\rr{P}_{D_s|\hat{S}_{\hat{s}}}(d|\hat{s})\log k\leq (\alpha+2\eta)\log k + h(\alpha+\eta),
\end{align*}
which gives the desired relation in~\eqref{eqn_sin}. The equality is the result of Definition~\ref{def_sec} and the first inequality follows from the uniform continuity of entropy~\cite{zha_est},\cite[Problem 3.10]{csk_inf}. The last step is a result of the inequality~\eqref{eqn_pde}.
Moreover, by combining the two assumptions in~\eqref{eqn_key}, it implies that 
\begin{align*}
k\ln\big(\, 2k|\cc{D}|\cdot|\cc{I}(\hat{s})|\,\big)< \alpha^5\cdot\min_{s\in\cc{I}(\hat{s}),d\in\cc{D}_s}|\cc{B}_{s,d}|,
\end{align*}
and consequently, the probability in~\eqref{eqn_pro} is non-zero.
\end{proof}

\subsection{Proof of Main Results}\label{ssc_mai}
In the following, the proofs of Theorems~\ref{thm_uv},~\ref{thm_uci}, and~\ref{thm_inf} are presented. For the proof of the first two theorems, similar techniques which are used for deriving the non-compound \gls{sk} capacity results in~\cite[Theorem 17.21]{csk_inf},\cite{csi_crh}, are used 
and extended to the finite compound setup. For completeness, all proofs are presented in detail. The proof of Theorem~\ref{thm_inf} is based on an 
approximation technique and uses Lemmas~\ref{lem_app} and~\ref{lem_muc}.

\begin{proof}[Proof of Theorem~\ref{thm_uv}]
Let $\delta>0$, and $0<\xi<\zeta<\sigma,$ all in $\bb{R}$ be given. For each $
\hat{s}\in\cc{\hat{S}}=\{\hat{s}_1,\hdots\hat{s}_m\},
$
let $U_{\hat{s}}$ and $V_{\hat{s}}$ satisfy for all $s\in\cc{I}(\hat{s})$ the Markov chains $U_{\hat{s}}-V_{\hat{s}}-X_{\hat{s}}-Y_sZ_s$. 

Consider $N_{\hat{s},1} N_{\hat{s},2}$ sequences $u_{i j}^n(\hat{s})\in\cc{U}^n$ as given in Table~\ref{tab_seu}, which are chosen independently by \gls{pd} $\rr{P}_{U_{\hat{s}}}^n$ with 
\begin{align*}
i\in \cc{I}:=\{1,2,\cdots,N_{\hat{s},1}\},\quad  j\in \cc{J}:=\{1,2,\cdots,N_{\hat{s},2}\},
\end{align*}
and $N_{\hat{s},1}$ and $N_{\hat{s},2}$ satisfying~\eqref{eqn_n1} and~\eqref{eqn_n2} from Lemma~\ref{lem_rac}. 
\begin{table}[!tl]
\centering
\scalebox{1.1}{
\setlength{\tabcolsep}{.4em}
\begin{tabular}{c||c c c c c}
  $\hat{s}_1$ &  $u_{1 1}^n(\hat{s}_1)$ & $\cdots$ & $u_{i j}^n(\hat{s}_1)$ & $\cdots$ & $u_{N_{\hat{s}_1,1}N_{\hat{s}_1,2}}^n(\hat{s}_1)$\\
  \hline
  $\vdots$ & $\vdots$ &  & $\vdots$ &  & $\vdots$\\
  \hline
  $\hat{s}_m$ &  $u_{1 1}^n(\hat{s}_m)$ & $\cdots$ & $u_{i j}^n(\hat{s}_m)$ & $\cdots$ & 
  $u_{N_{\hat{s}_m,1}N_{\hat{s}_m,2}}^n(\hat{s}_m)$\\
\end{tabular}}
\caption{Random sequences for a \gls{dmms} with $|\cc{\hat{S}}|=m$}
\label{tab_seu}
\end{table}
Moreover, for every $u_{i j}^n(\hat{s})$ from Table~\ref{tab_seu}, consider $N_{\hat{s},3} N_{\hat{s},4}$ sequences ${v_{pq}^{ij}}^n(\hat{s})\in\cc{V}^n$ , which are chosen  conditionally independently by \gls{pd} $\rr{P}_{V_{\hat{s}}|U_{\hat{s}}}^n(.|u_{ij}^n(\hat{s}))$ with 
\begin{align*}
p\in \cc{P}:=\{1,2,\cdots,N_{\hat{s},3}\},\quad q\in \cc{Q}:=\{1,2,\cdots,N_{\hat{s},4}\},
\end{align*}
and $N_{\hat{s},3}$ and $N_{\hat{s},4}$ satisfying~\eqref{eqn_n3} and~\eqref{eqn_n4} from Lemma~\ref{lem_rac}. 
Assume that the random sequences $u_{i j}^n(\hat{s})$ in Table~\ref{tab_seu} and their corresponding sequences $\{{v_{pq}^{ij}}^n(\hat{s})\}_{(p,q)\in\cc{P}\times\cc{Q}}$ are known to Alice and Bob. 

To show the achievability of~\eqref{eqn_thm}, the proof is divided into two parts. In part a), the following rate is shown to be achievable:
\begin{align}\label{eqn_ac1}
R_{\rr{sk}}':=\min\limits_{s\in\cc{I}(\hat{s})} I(U_{\hat{s}};Y_s) - \max\limits_{s\in\cc{I}(\hat{s})} I(U_{\hat{s}};Z_s),
\end{align}
when $R_{\rr{sk}}'$ is positive and \gls{rv} $U_{\hat{s}}$ satisfies for all $s\in\cc{I}(\hat{s})$
\begin{align}\label{eqn_scn}
U_{\hat{s}}-X_{\hat{s}}-Y_sZ_s\; \text{ and }\; \max\limits_{s\in\cc{I}(\hat{s})} I(U_{\hat{s}};X_{\hat{s}}|Y_s)<\Gamma.
\end{align}
This gives a special case of~\eqref{eqn_thm} and~\eqref{eqn_cns}. In part b), the achievability of the \gls{sk} rate in~\eqref{eqn_thm} is shown, when it is positive.

\emph{Part a)} Assume $R_{\rr{sk}}'$ from~\eqref{eqn_ac1} is positive i.e.
\begin{align}
\min\limits_{s\in\cc{I}(\hat{s})} I(U_{\hat{s}};Y_s) > \max\limits_{s\in\cc{I}(\hat{s})} I(U_{\hat{s}};Z_s).\label{eqn_paa}
\end{align}

As explained in Section~\ref{sec_cap}, Alice estimates her marginal statistic by hypothesis testing. Assume that $\hat{s}\in\cc{\hat{S}}$ is the index corresponding to the correct decision and all other \mbox{$\tilde{s}\in\cc{\hat{S}}-\{\hat{s}\}$} correspond to a wrong decision. For any observation $X_{\hat{s}}^n$, let the resulting estimated state be denoted by the \gls{rv} $\hat{S}_{\hat{s}}$, taking value in the set $\cc{\hat{S}}$ and having the \gls{pd} $\rr{P}_{\hat{S}_{\hat{s}}}$. It holds by~\cite{hof_tes} and~\cite[Problem 2.13b]{csk_inf} for some $c_0,c_1>0$ that
\begin{align}
\rr{P}_{\hat{S}_{\hat{s}}}(\hat{s})&\geq 1-\exp(-nc_0),\label{eqn_hy1}\\
\forall\tilde{s}\in\hat{\cc{S}}-\{\hat{s}\},\;\rr{P}_{\hat{S}_{\hat{s}}}(\tilde{s})&\leq \exp(-nc_1).\label{eqn_hy2}
\end{align}

Next, Alice sends her estimated marginal source state to Bob over the public noiseless channel. Assume that the hypothesis testing has led to the correct decision $\hat{s}$. Alice and Bob find the corresponding family of sequences $\{u_{i j}^n(\hat{s})\}_{(i,j)\in\cc{I}\times\cc{J}}$ from Table~\ref{tab_seu} by knowing $\hat{s}$. Lemma~\ref{lem_rac}a implies the existence of the encoder functions \mbox{$f:\cc{X}^n\rightarrow\cc{I}\cup\{0\}$} and $g:\cc{X}^n\rightarrow\cc{J}\cup\{0\}$. These encoders give the indices $f(x^n)=i$ and $g(x^n)=j$ of the sequence $u_{i j}^n(\hat{s})$ to be chosen from the family of sequences $\{u_{i j}^n(\hat{s})\}_{(i,j)\in\cc{I}\times\cc{J}}$.
 
As shown in Figure~\ref{fig_cr1}, in addition to the transmitted $\hat{s}$, Alice sends further the index $f(x^n)=i$ to Bob over the public channel. Lemma~\ref{lem_rac}a implies the existence of a decoder
$
\tilde{g}:\cc{I}\times\cc{\hat{S}}\times\cc{Y}^n\rightarrow\cc{J},
$
with which, Bob can reconstruct the index $g(x^n)=j$. This $j$ is the \gls{cr} between Alice and Bob.

In total, for all Alice's estimation results which may lead to a correct or incorrect decision, the error probability upper-bound for all $s\in\cc{I}(\hat{s})$ is given by
\begin{align}
&\rr{Pr}\Big\lbrace g(X_{\hat{s}}^n) \neq \tilde{g}\big(f(X_{\hat{s}}^n),\hat{S}_{\hat{s}},Y_s^n\big)\Big\rbrace \nonumber\\
&=\rr{Pr}\Big\lbrace g(X_{\hat{s}}^n) \neq \tilde{g}\big(f(X_{\hat{s}}^n),\hat{s},Y_s^n\big)\!\wedge\hat{S}_{\hat{s}}\!=\!\hat{s}
\Big\rbrace \nonumber\\&\qquad+\!\!\!\sum_{\tilde{s}\in\hat{\cc{S}}-
\{\hat{s}\}}\!\!\!\!\!\rr{P}_{\hat{S}_{\hat{s}}}(\tilde{s})\rr{Pr}\Big\lbrace g(X_{\hat{s}}^n) \!\neq\! \tilde{g}\big(f(X_{\hat{s}}^n),\hat{S}_{\hat{s}},Y_s^n\big)\,\big|\,\hat{S}_{\hat{s}}\!=\!\tilde{s}\Big\rbrace\nonumber\\
&\leq \exp(-n\delta_0) + \exp(-nc_1)\cdot|\cc{\hat{S}}|,\label{eqn_re1}
\end{align}
where the inequality is a result of~\eqref{eqn_es1} for some $\delta_0>0$ and~\eqref{eqn_hy2}. Thus, condition~\eqref{eqn_ay3} of Definition~\ref{def_ach} is satisfied.

The whole message which is sent over the public channel is represented by the \gls{rv} $f_{\rr{c}}(X_{\hat{s}}^n)=(f(X_{\hat{s}}^n),\hat{S}_{\hat{s}})$ having the range size $\|f\|\cdot|\hat{\cc{S}}|$. As shown in the following, the communication rate satisfies condition~\eqref{eqn_ay1} of Definition~\ref{def_ach}:
\begin{align*}
\frac{1}{n}\log\|f_{\rr{c}}\|&=\frac{1}{n}\log(\|f\|\cdot|\hat{\cc{S}}|)=\frac{1}{n}\log( N_{\hat{s},1} \cdot|\hat{\cc{S}}|)\\&=\max\limits_{s\in\cc{I}(\hat{s})}\! I(U_{\hat{s}};X_{\hat{s}}|Y_s)+3\delta+\frac{1}{n}\log|\hat{\cc{S}}| <\Gamma\!+3\delta,
\end{align*}
where the last equality follows by~\eqref{eqn_n1} and the inequality is a result of~\eqref{eqn_scn} and is valid for all $n$ sufficiently large.

After the index $g(x^n)=j$ is reconstructed by Bob, both Alice and Bob may generate their \gls{sk}, based on this \gls{cr}. Thus, it remains to show that there exists a \gls{sk} generator \mbox{$\kappa:\cc{J}\rightarrow\{1,2\cdots,k\}$}, giving rise to the \gls{rv} $K_{\rr{A}}=\kappa(g(X_{\hat{s}}^n))$, which satisfies condition~\eqref{eqn_ay4} of Definition~\ref{def_ach}.

Again the condition is verified for both estimation results. Assume hypothesis testing has led to the correct decision  and $\hat{s}$ is sent to Bob over the public channel. Define for $s\in\cc{I}(\hat{s})$
\begin{align*}
\cc{T}_s:=\Big\{(x^n,z^n)\in&\,\cc{X}^n\times\cc{Z}^n:\;x^n\in\cc{T}\;\wedge\\ &\;(u_{f(x^n)g(x^n)}^n(\hat{s}),x^n,z^n)\in
\cc{T}^n_{[UXZ,s]\sigma}\Big\},
\end{align*}
where $\cc{T}$ is given in~\eqref{eqn_tfg}.
A similar discussion as for~\eqref{eqn_lfn} from Lemma~\ref{lem_rac}, implies for some $c_2>0$ and $n$ large enough that
\begin{align}\label{eqn_tse}
\rr{P}_{XZ,s}^n(\cc{T}^{\rr{c}}_s)<\exp(-nc_2).
\end{align}

Similarly as in~\cite[Theorem 17.21]{csk_inf} for the non-compound version, define for all $s\in\cc{I}(\hat{s})$, the \glspl{rv} $C$ and $D_s$ and the set $\cc{B}_s$ to be used in Lemma~\ref{lem_skg}, as follows
\begin{align}
C&:=g(X_{\hat{s}}^n),\qquad\qquad D_s:=\big(f(X_{\hat{s}}^n),Z_s^n,\mathds{1}_{\cc{T}_s}(X_{\hat{s}}^n,Z_s^n)\big),\nonumber\\
\cc{B}_s&:=\Big\{ \big(j,(i,z^n,1)\big): (i,j)\in\cc{I}\times\cc{J}, z^n\in\cc{T}_{[Z_s]\xi}^n,\nonumber\\&\qquad\qquad\qquad\qquad\quad\cc{T}_{[UXZ,s]\sigma}^n(u_{i j}^n(\hat{s}),z^n)\neq\emptyset \Big\}.\label{eqn_rvc}
\end{align}

Assume, \glspl{rv} $C$ and $D_s$ take their values in the sets $\cc{C}$ and $\cc{D}$ respectively. Moreover, the sets $\cc{D}_s$ and $\cc{B}_{s,d}$ are defined as in Lemma~\ref{lem_skg}. In the following, it will be shown that all conditions of Lemma~\ref{lem_skg} are satisfied. It holds that
\begin{align}
\rr{P}&_{CD,s|\hat{S}_{\hat{s}}}(\cc{B}_s|\hat{s})=\sum_{(j,(i,z^n,1))\in\cc{B}_s}\rr{P}_{CD,s|\hat{S}_{\hat{s}}}\big(j,(i,z^n,1)|\hat{s}\big)\nonumber\\
&=\sum_{(j,(i,z^n,1))\in\cc{B}_s}\;\,\sum_{\substack{x^n:f(x^n)=i,g(x^n)=j,\nonumber\\ \mathds{1}_{\cc{T}_s}(x^n,z^n)=1}}\rr{P}_{X_{\hat{s}}^n Z_s^n|\hat{S}_{\hat{s}}}(x^n,z^n|\hat{s})\nonumber\\
&=\rr{P}_{X_{\hat{s}}^n Z_s^n|\hat{S}_{\hat{s}}}\!\Big(\cc{T}_s\cap\Big\{ (x^n,z^n)\!\in\!\cc{X}^n\!\times\!\cc{Z}^n: z^n\!\in\!\cc{T}_{[Z_s]\xi}^n\Big\}\,\big|\,\hat{s}\Big)\nonumber\\
&\geq 1-\Big[\rr{P}_{X_{\hat{s}}^n Z_s^n|\hat{S}_{\hat{s}}}(\cc{T}_s^{\rr{c}}|\hat{s})
+\rr{P}_{Z_s^n|\hat{S}_{\hat{s}}}({\cc{T}^n_{[Z_s]\xi}}^{\!\!\!\rr{c}}|\hat{s})\Big]\nonumber\\
&\geq 1-\frac{\rr{P}_{\!X_{\hat{s}}^n Z_s^n}(\cc{T}_s^{\rr{c}}) \!+\!\rr{P}_{Z_s^n}({\cc{T}^n_{[Z_s]\xi}}^{\!\!\!\rr{c}})}{1-\exp(-nc_0)}\geq 1\!-\exp(-nc_3),\label{eqn_de1}
\end{align}
for some $c_3>0$ and $n$ sufficiently large. The last two inequalities follow by~\eqref{eqn_hy1},~\eqref{eqn_tse} and Lemma~\ref{lem_typ}.\ref{typ_ptx}.

\begin{figure}[!t]
\centering
\huge
\scalebox{0.5}{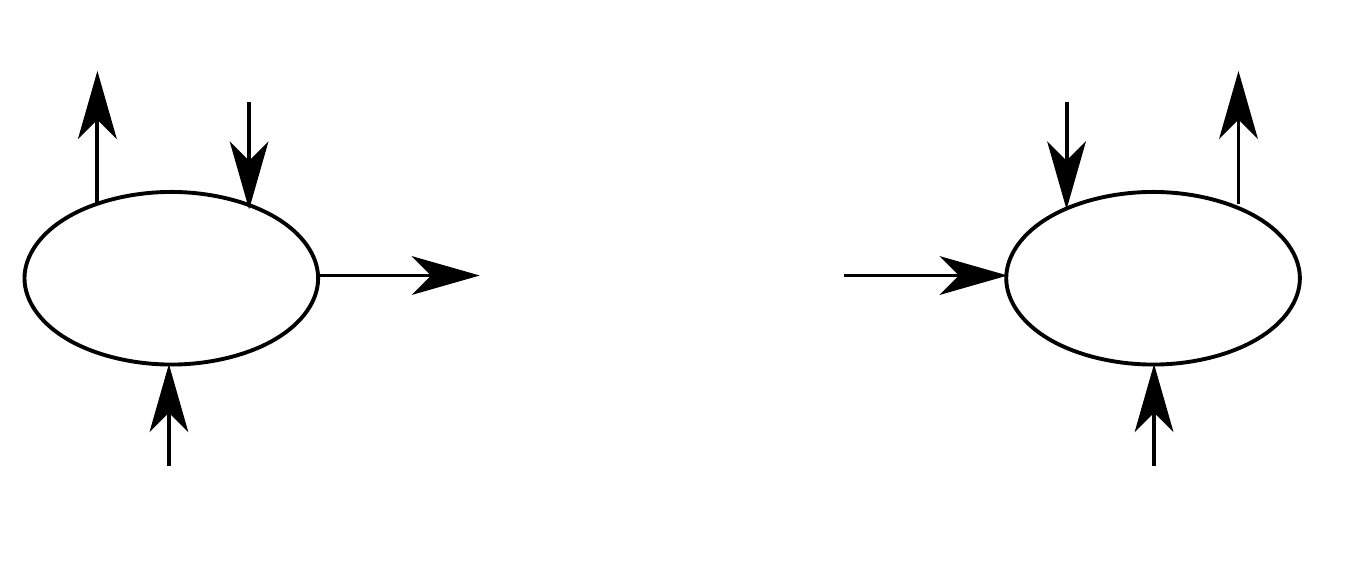}
\caption{Generating the \gls{cr} $j$}
\label{fig_cr1}
\end{figure}

Furthermore, define the parameters $\alpha$ and $\eta$ for some arbitrary $\tau>0$ as follows 
\begin{align}\label{eqn_alf}
\alpha:=\exp(-n(\delta+5\tau)),\quad \eta:=\exp(-n\delta).
\end{align}
For $\delta$ and $\tau$ sufficiently small and $n$ large enough, it holds that $\eta^2-\alpha^2>\exp(-nc_3)$. Therefore, it follows by~\eqref{eqn_de1} that
\begin{align*}
\rr{P}_{CD,s|\hat{S}_{\hat{s}}}(\cc{B}_s|\hat{s})\geq 1-(\eta^2-\alpha^2).
\end{align*}
This guarantees condition~\eqref{eqn_sc2} of Lemma~\ref{lem_skg}. Moreover, the conditions $\alpha\in(0,1/6]$ and \mbox{$\eta\in(0,1/3]$} are also satisfied.

To check condition~\eqref{eqn_sc1} of Lemma~\ref{lem_skg}, we find first an upper-bound of $|\cc{B}_s|$. The non-emptiness constraint $\cc{T}_{[UXZ,s]\sigma}^n(u_{i j}^n(\hat{s}),z^n)\neq\emptyset$ from the definition of the set $\cc{B}_s$ is a sufficient condition for $u_{i j}^n(\hat{s})\in\cc{T}_{[UZ,s]\sigma|\cc{X}|}^n(z^n)$ and thus
\begin{align}
|\cc{B}_s|\leq\!\sum_{z^n\in\cc{T}_{[Z_s]\xi}^n}\!\!\!\Big|\Big\lbrace (i,j)&\in\cc{I}\times\cc{J}:u_{i j}^n(\hat{s})\in\cc{T}_{[UZ,s]\sigma|\cc{X}|}^n(z^n) \Big\rbrace\Big|.\label{eqn_ubd}
\end{align}

Furthermore, for $R$ being the rate of choosing the random sequences $\{u_{i j}^n(\hat{s})\}_{(i,j)\in\cc{I}\times\cc{J}}$, it holds by~\eqref{eqn_n1} and~\eqref{eqn_n2} that      
\begin{align*}
R&=\max_{s\in\cc{I}(\hat{s})}I(U_{\hat{s}};X_{\hat{s}}|Y_s) + \min_{s\in\cc{I}(\hat{s})}I(U_{\hat{s}};Y_s)+\delta\\
&>\min_{s\in\cc{I}(\hat{s})}I(U_{\hat{s}};Y_s)>\max_{s\in\cc{I}(\hat{s})}I(U_{\hat{s}};Z_s),
\end{align*}
where the last inequality is the result of the assumption~\eqref{eqn_paa}. Moreover, for all $(j,(i,z^n,1))\in\cc{B}_s$ it holds that \mbox{$z^n\in\cc{T}_{[Z_s]\xi}^n$} which implies by using Lemma~\ref{lem_typ}.\ref{typ_sub} for $n$ large enough that
$\cc{T}_{[UZ,s]\zeta}^n(z^n)\neq\emptyset.$ 
Therefore, it follows by~\eqref{eqn_ubd} and Lemma~\ref{lem_ran} for $\xi$ and $\zeta$ sufficiently small and $n$ large enough that
\begin{align}
|\cc{B}_s|&\leq \big|\cc{T}_{[Z_s]\xi}^n\big|\exp\Big[ n\big(R-I(U_{\hat{s}};Z_s)+\tau\big) \Big]\nonumber\\
&\leq \exp\Big[ n\big(H(Z_s)+\tau\big) \Big]\nonumber\\ &\qquad\times\exp\Big[ n\big(I(U_{\hat{s}};X_{\hat{s}})+\delta-I(U_{\hat{s}};Z_s)+\tau\big) \Big],\label{eqn_fc1}
\end{align}
where the last inequality follows by~\eqref{eqn_rat} and Lemma~\ref{lem_typ}.\ref{typ_ca1}.

In the second step of verifying condition~\eqref{eqn_sc1}, we find an upper-bound for $\rr{P}_{CD,s|\hat{S}_{\hat{s}}}\big(j,(i,z^n,1)|\hat{s}\big)$. For all $(j,(i,z^n,1))\in\cc{B}_s$, it holds by using~\eqref{eqn_hy1} and~\eqref{eqn_rvc} that
\begin{align}
\big(1-&\exp(-nc_0)\big)\cdot\rr{P}_{CD,s|\hat{S}_{\hat{s}}}\big(j,(i,z^n,1)|\hat{s}\big)\nonumber\\
&\leq \rr{P}_{CD,s}\big(j,(i,z^n,1)\big)\nonumber\\
&\leq\!\!\sum_{x^n\in\cc{T}_{[UXZ,s]\sigma}^n(u_{i j}^n(\hat{s}),z^n)}\!\!\!\!\rr{P}_{XZ,s}^n(x^n,z^n)\nonumber\\
&\leq\big|\cc{T}_{[UXZ,s]\sigma}^n(u_{i j}^n(\hat{s}),z^n)\big|\exp\!\Big[\!-\!n\big(H(X_{\hat{s}}Z_s)\!-\!\tau\big) \Big]\nonumber\\
&\leq\exp\Big[ n\big(H(X_{\hat{s}}|U_{\hat{s}}Z_s) -H(X_{\hat{s}}Z_s)+2\tau\big) \Big],\label{eqn_fc2}
\end{align}
where the third inequality follows by Lemma~\ref{lem_typ}.\ref{typ_px1} and the last one by Lemma~\ref{lem_typ}.\ref{typ_ca2} for $\sigma$ sufficiently small and $n$ large enough. 

\noindent Moreover, by using Markov chains in~\eqref{eqn_scn}, it holds that
\begin{align}
H(X_{\hat{s}}|U_{\hat{s}}Z_s)&-H(X_{\hat{s}}Z_s)\nonumber\\
&=-I(U_{\hat{s}};X_{\hat{s}})-H(Z_s)+I(U_{\hat{s}};Z_s).\label{eqn_fc3}
\end{align}

The relations~\eqref{eqn_fc1},~\eqref{eqn_fc2}, and~\eqref{eqn_fc3} together with the definition of $\alpha$ in~\eqref{eqn_alf} imply for $n$ large enough that
\begin{align*}
|\cc{B}&_s|\,\rr{P}_{CD,s|\hat{S}_{\hat{s}}}\big(j,(i,z^n,1)|\hat{s}\big)\,\alpha\leq\frac{\exp(-n\tau)}{1-\exp(-nc_0)}< 1.
\end{align*}
Thus, condition~\eqref{eqn_sc1} of Lemma~\ref{lem_skg} is also satisfied.

Therefore, Lemma~\ref{lem_skg} implies that there exists a \gls{sk} generator $\kappa:\cc{J}\rightarrow\{1,2,\cdots,k\}$ with $k$ satisfying~\eqref{eqn_key}, such that the relation~\eqref{eqn_sin} holds. By the definitions in~\eqref{eqn_alf}, both $\alpha$ and $\eta$ approach zero exponentially fast. Moreover, by~\eqref{eqn_key} and~\eqref{eqn_fc1}, it follows that $k$ does not increase faster than exponentially. Thus, the inequality~\eqref{eqn_sin} implies for \mbox{$K_{\rr{A}}=\kappa(g(X_{\hat{s}}^n))$} that
\begin{align}
S\big(K_{\rr{A}}\,\big|&\,f(X_{\hat{s}}^n),Z_s^n,\mathds{1}_{\cc{T}_s}(X_{\hat{s}}^n,Z_s^n),\hat{S}_{\hat{s}}=\hat{s}\big)\nonumber\\
&\leq(\alpha+2\eta)\log k + h(\alpha+\eta)\nonumber\\
&\leq n\cdot\exp(-nc_4) +\exp(-nc_4)\leq\exp(-nc_5),\label{eqn_si1}
\end{align}
for some $c_4,c_5>0$. Consequently, the security index is exponentially small for the case of correct estimation decision. 

In total, for all estimation results which may lead to a correct or incorrect decision, the security index is upper-bounded as shown in the following \begin{align*}
S\big(&K_{\rr{A}}\,\big|\,f(X_{\hat{s}}^n),Z_s^n,\mathds{1}_{\cc{T}_s}(X_{\hat{s}}^n,Z_s^n),\hat{S}_{\hat{s}}\big)\nonumber\\
&=\rr{P}_{\hat{S}_{\hat{s}}}(\hat{s})S\big(K_{\rr{A}}|Z_s^n,f(X_{\hat{s}}^n),\mathds{1}_{\cc{T}_s}(X_{\hat{s}}^n,Z_s^n),\hat{S}_{\hat{s}}=\hat{s}\big)\nonumber\\
&+\!\sum_{\tilde{s}\in\hat{\cc{S}}-
\{\hat{s}\}}\!\!\!\!\rr{P}_{\hat{S}_{\hat{s}}}(\tilde{s})S\big(K_{\rr{A}}|Z_s^n,f(X_{\hat{s}}^n),\mathds{1}_{\cc{T}_s}(X_{\hat{s}}^n,Z_s^n),\hat{S}_{\hat{s}}\!=\!\tilde{s}\big)\nonumber\\
&\leq \exp(-nc_5) +n\cdot\exp(-nc_6)\cdot|\cc{\hat{S}}|,
\end{align*}
for some $c_6>0.$ The last inequality is a result of~\eqref{eqn_hy2} and~\eqref{eqn_si1} and the fact that $k$ does not increase faster than exponentially. Therefore, condition~\eqref{eqn_ay4} of Definition~\ref{def_ach} is satisfied.

In the rest of this part, we show that $R_{\rr{sk}}'$ satisfies  condition~\eqref{eqn_ay2} of Definition~\ref{def_ach}.  
By using the definition of $\alpha$ from~\eqref{eqn_alf} and the set $\cc{D}$ being the alphabet of \gls{rv} $D_s$, it follows that the expression $\rr{e}^{1/\alpha}(2|\cc{D}|\,|\cc{I}(\hat{s})|)^{-1}$ from~\eqref{eqn_key} increases doubly exponentially fast.
For any $d:=(i,z^n,1)\in\cc{D}_s$ and $\cc{B}_{s,d}$ from Lemma~\ref{lem_skg}, it follows by~\eqref{eqn_fc1} that $|\cc{B}_{s,d}|$ does not increase faster than exponentially. Therefore, for $n$ large enough, it holds that \begin{align}\label{eqn_kup}
\alpha^6\cdot\min_{s\in\cc{I}(\hat{s}),d\in\cc{D}_s}\big|\cc{B}_{s,d}\big|<\rr{e}^{1/\alpha}(2|\cc{D}|\,|\cc{I}(\hat{s})|)^{-1}.
\end{align} 

Thus, to guarantee condition~\eqref{eqn_key} of Lemma~\ref{lem_skg} it is necessary that $k$ be lower than the left hand side of~\eqref{eqn_kup}. 
For this, a lower-bound for $\min_{s\in\cc{I}(\hat{s}),d\in\cc{D}_s}|\cc{B}_{s,d}|$ should be first determined. 
Lemma~\ref{lem_typ}.\ref{typ_sub} implies that $u_{i j}^n(\hat{s})\in\cc{T}_{[UZ,s]\zeta}^n(z^n)$ is a sufficient condition for $\cc{T}_{[UXZ,s]\sigma}^n(u_{i j}^n(\hat{s}),z^n)\neq\emptyset.$ Therefore for all \mbox{$s\in\cc{I}(\hat{s})$} and $d=(i,z^n,1)\in\cc{D}_s$, it holds that
\begin{equation}\label{eqn_bsd}
\big|\cc{B}_{s,d}\big|\geq\Big|\Big\{  j\in\cc{J}:u_{i j}^n(\hat{s})\in\cc{T}^n_{[UZ,s]\zeta}(z^n) \Big\}\Big|.
\end{equation}

Furthermore by using~\eqref{eqn_n2}, the rate of choosing the random sequences $u_{i j}^n(\hat{s})$ for a fixed index $i$, is given by
\begin{align}\label{eqn_lr1}
R_2=\min_{s\in\cc{I}(\hat{s})}I(U_{\hat{s}};Y_s)-2\delta.
\end{align} 
Thus, for $\delta$ sufficiently small, assumption~\eqref{eqn_paa} gives 
\begin{align}\label{eqn_lr2}
R_2>\max_{s\in\cc{I}(\hat{s})}I(U_{\hat{s}};Z_s). 
\end{align}
Moreover, in deriving the inequality~\eqref{eqn_fc1}, it was shown that for all $(j,d)=\big(j,(i,z^n,1)\big)\in\cc{B}_s$, it holds that $\cc{T}_{[UZ,s]\zeta}^n(z^n)\neq\emptyset$.
Thus, Lemma~\ref{lem_ran} together with~\eqref{eqn_bsd},~\eqref{eqn_lr1}, and~\eqref{eqn_lr2} implies that
\begin{align*}
&\min\limits_{s\in\cc{I}(\hat{s}),d\in\cc{D}_s}\!|\cc{B}_{s,d}| \geq\!\!\!
\min\limits_{s\in\cc{I}(\hat{s}),d\in\cc{D}_s}\Big|\Big\{ j:u_{i j}^n(\hat{s})\in\cc{T}^n_{[UZ,s]\zeta}(z^n) \Big\}\Big|\nonumber\\
&\geq\exp\Big[ n\big( R_2-\max_{s\in\cc{I}(\hat{s})}I(U_{\hat{s}};Z_s)-\tau \big) \Big]\nonumber\\
&\geq\exp\Big[ n\big( \!\min_{s\in\cc{I}(\hat{s})}\!I(U_{\hat{s}};Y_s)\!-\!\!\max_{s\in\cc{I}(\hat{s})}\!I(U_{\hat{s}};Z_s)\!-\!2\delta\!-\!\tau \big) \Big].
\end{align*}
Therefore, by keeping $k$ lower than the left hand side of~\eqref{eqn_kup}, it follows by~\eqref{eqn_alf} and~\eqref{eqn_ac1} that 
$
\frac{1}{n}\log|\cc{K}|> R_{\rr{sk}}'-8\delta.
$
This satisfies condition~\eqref{eqn_ay2} of Definition~\ref{def_ach}.

\begin{figure}[!tl]
\centering
\huge
\scalebox{0.5}{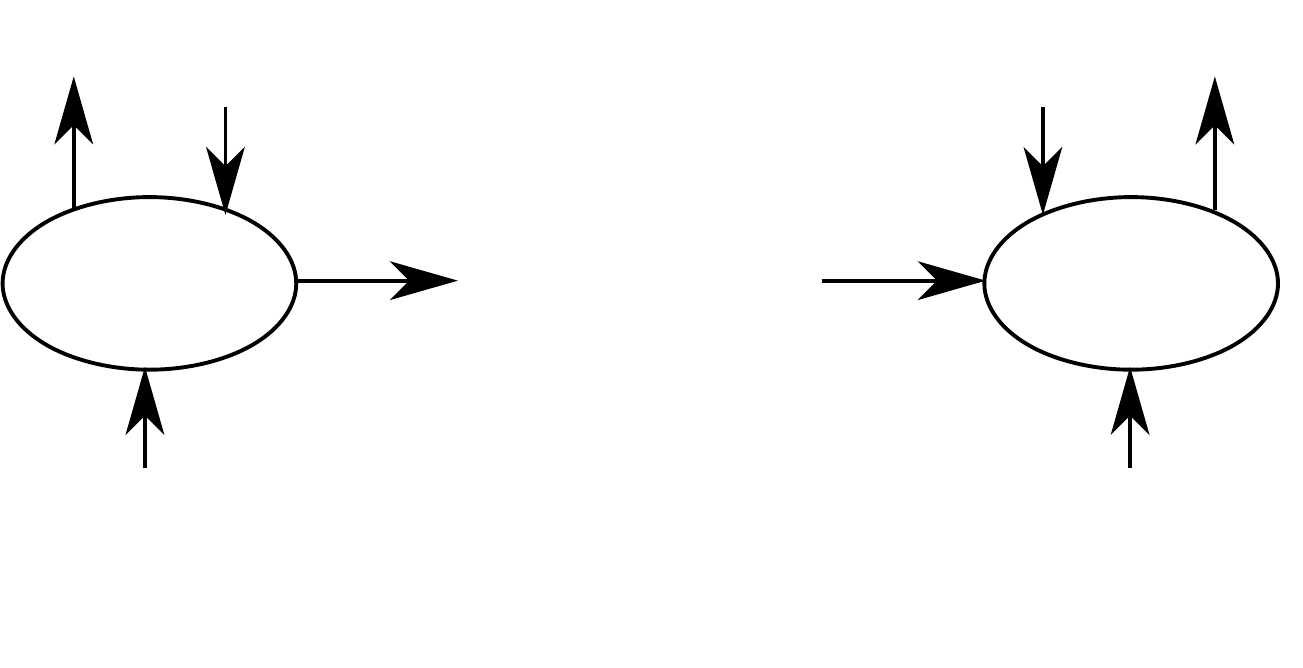}
\caption{Generating the \gls{cr} $q$}
\label{fig_cr2}
\end{figure}

\emph{Part b)} \noindent In this part, it is shown that if
\begin{align*}
R_{\rr{sk}}:=\min_{s\in\cc{I}(\hat{s})}I(V_{\hat{s}};Y_s|U_{\hat{s}})-\max_{s\in\cc{I}(\hat{s})}I(V_{\hat{s}};Z_s|U_{\hat{s}})>0,
\end{align*}
then $R_{\rr{sk}}$ is achievable for all \glspl{rv} $U_{\hat{s}}$ and $V_{\hat{s}}$ satisfying~\eqref{eqn_cns}. 
It may be assumed that
\begin{align}\label{eqn_f99}
\min\limits_{s\in\cc{I}(\hat{s})} I(U_{\hat{s}};Y_s) \leq \max\limits_{s\in\cc{I}(\hat{s})} I(U_{\hat{s}};Z_s).
\end{align}
Because otherwise, it follows by Markov chains in~\eqref{eqn_cns} that 
\begin{align*}
R_{\rr{sk}}
&=\min\limits_{s\in\cc{I}(\hat{s})}\Big[ I(V_{\hat{s}};Y_s)+I(U_{\hat{s}};Y_s|V_{\hat{s}})-I(U_{\hat{s}};Y_s) \Big] \nonumber\\&\quad- \max\limits_{s\in\cc{I}(\hat{s})}\Big[ I(V_{\hat{s}};Z_s)+I(U_{\hat{s}};Z_s|V_{\hat{s}})-I(U_{\hat{s}};Z_s) \Big]\nonumber\\
&\leq  \min\limits_{s\in\cc{I}(\hat{s})}I(V_{\hat{s}};Y_s)    -\max\limits_{s\in\cc{I}(\hat{s})}I(V_{\hat{s}};Z_s)\nonumber\\    &\quad-\Big[ \min\limits_{s\in\cc{I}(\hat{s})}I(U_{\hat{s}};Y_s)    -\max\limits_{s\in\cc{I}(\hat{s})}I(U_{\hat{s}};Z_s) \Big],
\end{align*}
which implies by using part a) that $R_{\rr{sk}}$ is achievable.

Similarly as in part a), Alice sends her estimated marginal source state to Bob over the public noiseless channel.
For the case that hypothesis testing has led to the correct decision $\hat{s}$, both Alice and Bob find the corresponding family of sequences $\{u_{i j}^n(\hat{s})\}_{(i,j)\in\cc{I}\times\cc{J}}$ from Table~\ref{tab_seu} and also the related families of sequences $\{{v_{pq}^{ij}}^n(\hat{s})\}_{(p,q)\in\cc{P}\times\cc{Q}}$ for each member of the chosen row of the table. Lemma~\ref{lem_rac}b implies the existence of the encoder functions \mbox{$\varphi:\cc{X}^n\rightarrow\cc{P}\cup\{0\}$} and \mbox{$\rho:\cc{X}^n\rightarrow\cc{Q}\cup\{0\}$}. These encoders give the indices $\varphi(x^n)=p$ and $\rho(x^n)=q$ of the sequence ${v_{pq}^{ij}}^n(\hat{s})$ to be chosen from the family of sequences $\{{v_{pq}^{ij}}^n(\hat{s})\}_{(p,q)\in\cc{P}\times\cc{Q}}$ for given indices $i$ and $j$. 

As shown in Figure~\ref{fig_cr2}, in addition to the transmitted $\hat{s}$, Alice sends further the indices $f(x^n)=i$ and $\varphi(x^n)=p$ to Bob over the public channel. By part a) of this theorem, $g(x^n)=j$ is known to Bob. Lemma~\ref{lem_rac}b implies the existence of a decoder 
$
\tilde{\rho}:\cc{I}\times\cc{J}\times
\cc{P}\times\cc{\hat{S}}\times\cc{Y}^n\rightarrow\cc{Q}, 
$
with which Bob can reconstruct the index $q$ to be used as the \gls{cr}. The upper-bound of the error probability was given in~\eqref{eqn_es2} from Lemma~\ref{lem_rac}. Due to~\eqref{eqn_f99}, the index $g(x^n)=j$ can not be used as the \gls{cr} any more.

Similarly as in~\eqref{eqn_re1} from part a), for all estimation results which may lead to a correct or incorrect decision, the error probability upper-bound for all $s\in\cc{I}(\hat{s})$ is exponentially small. Therefore, condition~\eqref{eqn_ay3} of Definition~\ref{def_ach} is satisfied.

The public communication function is represented by the \gls{rv} \mbox{$f_{\rr{c}}(X_{\hat{s}}^n)=(f(X_{\hat{s}}^n),\varphi(X_{\hat{s}}^n),\hat{S}_{\hat{s}})$.} As shown in the following, the communication rate satisfies condition~\eqref{eqn_ay1} of Definition~\ref{def_ach}:
\begin{align*}
\frac{1}{n}\log&\big(\|f\|\cdot\|\varphi\|\cdot|\hat{\cc{S}}|\big)=
\max\limits_{s\in\cc{I}(\hat{s})} I(U_{\hat{s}};X_{\hat{s}}|Y_s) +\nonumber\\ &\!\max\limits_{s\in\cc{I}(\hat{s})}I(V_{\hat{s}};X_{\hat{s}}|U_{\hat{s}}Y_s)+6\delta+\frac{1}{n}\log|\hat{\cc{S}}|<\Gamma+6\delta,
\end{align*}
for $n$ sufficiently large. The equality is a result of~\eqref{eqn_n1} and~\eqref{eqn_n3} from Lemma~\ref{lem_rac} and the inequality follows by~\eqref{eqn_cns}.

For showing that conditions~\eqref{eqn_ay2} and~\eqref{eqn_ay4} of Definition~\ref{def_ach} also hold, the \glspl{rv} $C,D_s$, and the set $\cc{B}_s$ can be defined similarly as in part a), but this time for the coding scheme from Lemma~\ref{lem_rac}b). It can be shown that conditions~\eqref{eqn_sc1} and~\eqref{eqn_sc2} of Lemma~\ref{lem_skg} are again satisfied and thus there exists a \gls{sk} generator \mbox{$\kappa:\cc{Q}\rightarrow\{1,2\cdots,k\}$}, giving rise to the \gls{rv} $K=\kappa(\rho(X_{\hat{s}}^n))$, which satisfies conditions~\eqref{eqn_ay2} and~\eqref{eqn_ay4} of Definition~\ref{def_ach}.
The proof follows by a similar discussion as in part a) of this theorem for compound sources. The non-compound version is available in~\cite[Theorem 17.21]{csk_inf}.
\end{proof}

\begin{proof}[Proof of Theorem~\ref{thm_uci}]
For the direct part of the proof, replace $X_{\hat{s}}, Y_s,$ and $Z_s$ in Theorem~\ref{thm_uv} with $X^n_{\hat{s}}, Y^n_s,$ and $Z^n_s$ respectively, for arbitrary $n\in\bb{N}, \hat{s}\in\hat{\cc{S}},$ and $s\in\cc{I}(\hat{s})$. This implies that the \gls{sk} rate
\begin{align*}
\frac{1}{n}\min\limits_{\hat{s}\in\hat{\cc{S}}}
\max_{U_{\hat{s}},V_{\hat{s}}}\Big\{ \min\limits_{s\in\cc{I}(\hat{s})} I(V_{\hat{s}};Y_s^n|U_{\hat{s}}) - \max\limits_{s\in\cc{I}(\hat{s})} I(V_{\hat{s}};Z_s^n|U_{\hat{s}}) \Big\}
\end{align*}
is achievable for any $n\in\bb{N}$, where the outer $\max$ is taken over all \gls{rv}s $U_{\hat{s}}$ and $V_{\hat{s}}$ satisfying~\eqref{eqn_ma2}.

To show the achievability of~\eqref{eqn_tci}, we have to show that the limit in~\eqref{eqn_tci} exists.
Similarly as in~\cite{bje_scw}, the proof follows by using the Fekete's lemma~\cite{fek_sup} which states that if a sequence $a_n$ is superadditive i.e. $a_{n+m}\geq a_n+a_m$, then $\lim\limits_{n\to\infty}a_n/n$ exists.
Define the sequences
\begin{align}
a_n(\hat{s})&:=\max_{U_{\hat{s}},V_{\hat{s}}}\Big\{ \min\limits_{s\in\cc{I}(\hat{s})} I(V_{\hat{s}};Y_s^n|U_{\hat{s}}) - \!\max\limits_{s\in\cc{I}(\hat{s})} I(V_{\hat{s}};Z_s^n|U_{\hat{s}}) \Big\},\nonumber\\
a_n&:=\min_{\hat{s}\in\hat{\cc{S}}}a_n(\hat{s}).\label{eqn_mif}
\end{align}
Take two arbitrary independent Markov chains
\begin{align}
&U_{\hat{s},1}-V_{\hat{s},1}-X_{\hat{s},1}^n-Y_{s,1}^n Z_{s,1}^n\,,\label{eqn_rv1}\\
&U_{\hat{s},2}-V_{\hat{s},2}-X_{\hat{s},2}^m-Y_{s,2}^m Z_{s,2}^m\,,\label{eqn_rv2}
\end{align}
such that $m,n\in\mathbb{N}$ and it holds:
\begin{align*}
&\hat{U}_{\hat{s}}:=\!(U_{\hat{s},1},U_{\hat{s},2}),\; \hat{V}_{\hat{s}}:=\!(V_{\hat{s},1},V_{\hat{s},2}),\; X^{n+m}_{\hat{s}}\!:=\!(X_{\hat{s},1}^n,X_{\hat{s},2}^m),\nonumber\\ 
&Y^{n+m}_s\!:=\!(Y_{s,1}^n,Y_{s,2}^m),\; Z^{n+m}_s\!:=\!(Z_{s,1}^n,Z_{s,2}^m).
\end{align*}

As the two Markov chains are independent, any \gls{rv} from~\eqref{eqn_rv1} is independent of all \glspl{rv} in~\eqref{eqn_rv2} and vice versa. Therefore, the Markov chain $\hat{U}_{\hat{s}}-\hat{V}_{\hat{s}}-X^{n+m}_{\hat{s}}-Y^{n+m}_sZ^{n+m}_s$ holds and
\begin{align*}
&a_{n+m}(\hat{s})\nonumber\\
&=\max_{\hat{U}_{\hat{s}},\hat{V}_{\hat{s}}}\Big\{\!\min\limits_{s\in\cc{I}(\hat{s})} \!I(\hat{V}_{\hat{s}};Y_s^{n+m}|\hat{U}_{\hat{s}}) - \max\limits_{s\in\cc{I}(\hat{s})} \!I(\hat{V}_{\hat{s}};Z_s^{n+m}|\hat{U}_{\hat{s}})\!\Big\}\nonumber\\
&\geq \max_{\hat{U}_{\hat{s}},\hat{V}_{\hat{s}}}\Big\{\!\min\limits_{s\in\cc{I}(\hat{s})} \!I(V_{\hat{s},1};Y_{s,1}^n|U_{\hat{s},1}) - \!\max\limits_{s\in\cc{I}(\hat{s})} \!I(V_{\hat{s},1};Z_{s,1}^n|U_{\hat{s},1})\nonumber\\
 &\quad + \min\limits_{s\in\cc{I}(\hat{s})}\! I(V_{\hat{s},2};Y_{s,2}^m|U_{\hat{s},2}) - \max\limits_{s\in\cc{I}(\hat{s})}\! I(V_{\hat{s},2};Z_{s,2}^m|U_{\hat{s},2})\Big\}\nonumber\\
 &=a_n(\hat{s})+a_m(\hat{s}).
\end{align*}

Thus, by taking the minimum from both sides of this inequality and using the definition in~\eqref{eqn_mif}, it follows that
\begin{align*}
a_{n+m}&\geq \min_{\hat{s}\in\hat{\cc{S}}}\left\{a_n(\hat{s})+a_m(\hat{s})\right\}\nonumber\\
&\geq\min_{\hat{s}\in\hat{\cc{S}}}\{a_n(\hat{s})\}
+\min_{\hat{s}\in\hat{\cc{S}}}\{a_m(\hat{s})\}=a_n+a_m.
\end{align*}
Therefore, the limit in~\eqref{eqn_tci} exists and  is achievable.

For the converse, let $R_{\rr{sk}}>0$ be an achievable \gls{sk} rate and $s\in\cc{S}$ be given. Assume that $\hat{s}\in\hat{\cc{S}}$ is the state of the marginal \gls{rv} $X$ and thus $s\in\cc{I}(\hat{s})$. Alice sends a message $f_{\rr{c}}(X_{\hat{s}}^n)$ to Bob over the public channel and generates a \gls{sk} represented by \gls{rv} $K=K_{\rr{A}}$. It holds by Definition~\ref{def_sec} and condition~\eqref{eqn_ay4} of Definition~\ref{def_ach}, that for all $\delta>0$ and $n$ sufficiently large,
\begin{align}\label{eqn_csi}
\frac{1}{n}\log|\cc{K}|-\frac{1}{n}\min_{s\in\cc{I}(\hat{s})} H\big(K_{\rr{A}}|Z^n_s,f_{\rr{c}}(X^n_{\hat{s}})\big)<\delta.
\end{align}
Moreover, by the Fano's inequality, it holds that
\begin{align}\label{eqn_cfn}
\frac{1}{n}\max_{s\in\cc{I}(\hat{s})}H\big(K_{\rr{A}}|Y^n_s,f_{\rr{c}}(X^n_{\hat{s}})\big)<\frac{1}{n}\delta\log|\cc{K}|+\frac{1}{n}.
\end{align}
By adding~\eqref{eqn_csi} and~\eqref{eqn_cfn}, it follows that
\begin{align*}
\frac{1}{n}\log&|\cc{K}|<\frac{1}{n}\Big[\min_{s\in\cc{I}(\hat{s})} H\big(K_{\rr{A}}|Z^n_s,f_{\rr{c}}(X^n_{\hat{s}})\big)\nonumber\\&-\max_{s\in\cc{I}(\hat{s})}H\big(K_{\rr{A}}|Y^n_s,f_{\rr{c}}(X^n_{\hat{s}})\big)\Big]\!+\!\frac{1}{n}\delta\log|\cc{K}|+\frac{1}{n}+\delta.
\end{align*}

Thus, by using condition~\eqref{eqn_ay2} of Definition~\ref{def_ach}, it follows for $\epsilon=\delta/(1-\delta)+(n(1-\delta))^{-1}+\delta$, and $n$ sufficiently large that
\begin{align*}
&R_{\rr{sk}} < \frac{1}{n}\log|\cc{K}|+\delta\leq\frac{1}{1-\delta}\cdot\frac{1}{n}\Big[\min_{s\in\cc{I}(\hat{s})} I\big(K_{\rr{A}};Y^n_s|f_{\rr{c}}(X^n_{\hat{s}})\big)\nonumber\\
&\qquad\qquad\qquad\qquad\qquad-\max_{s\in\cc{I}(\hat{s})}I\big(K_{\rr{A}};Z^n_s|f_{\rr{c}}(X^n_{\hat{s}})\big)\Big]+\epsilon,
\end{align*}
where the last inequality follows by adding and subtracting the term $H(K_{\rr{A}}|f_{\rr{c}}(X^n_{\hat{s}}))$.

Set \gls{rv}s $U_{\hat{s}}:=f_{\rr{c}}(X^n_{\hat{s}})$ and $V_{\hat{s}}:=(f_{\rr{c}}(X^n_{\hat{s}}),K_{\rr{A}})$.
It holds that $I(X^n_{\hat{s}};U_{\hat{s}}|V_{\hat{s}})=0$. Furthermore, as $f_{\rr{c}}(X^n_{\hat{s}})$ and $K_{\rr{A}}$ are both functions of $X^n_{\hat{s}}$, it implies that $I(Y_s^n,Z_s^n\,;U_{\hat{s}}V_{\hat{s}}|X^n_{\hat{s}})=0$. This proves that the Markov chains in~\eqref{eqn_ma2} are valid.

Finally, by taking the maximum with respect to $U_{\hat{s}}$ and $V_{\hat{s}}$ and the minimum with respect to $\hat{s}\in\cc{\hat{S}}$, it follows for $\delta>0$ sufficiently small and $n$ large enough  that
\begin{align*}
R_{\rr{sk}}\!\leq\frac{1}{n}&\min_{\hat{s}\in\hat{S}}\max_{U_{\hat{s}},V_{\hat{s}}}\!\Big[\min_{s\in\cc{I}(\hat{s})} \!\!I(V_{\hat{s}};Y^n_s|U_{\hat{s}})\!-\!\!\max_{s\in\cc{I}(\hat{s})}\!\!I(V_{\hat{s}};Z^n_s|U_{\hat{s}})\Big],
\end{align*}
which completes the proof.
\end{proof}

\begin{proof}[Proof of Theorem~\ref{thm_inf}]
The achievability of the \gls{sk} rate in~\eqref{eqn_if2} for a finite source follows directly from the special case of Theorem~\ref{thm_uv}. 
By taking \gls{rv} $U_{\hat{s}}$ to be $X_{\hat{s}}$ in part a) of the proof of Theorem~\ref{thm_uv} in~\eqref{eqn_ac1}, it follows that the given rate in~\eqref{eqn_if2} for a finite $\cc{S}$ is achievable.
 
To show the achievability of the \gls{sk} rate in~\eqref{eqn_if2} for an infinite set of source states $\cc{S}$ and a finite set of marginal states $\cc{\hat{S}}$, let $\hat{s}\in\cc{\hat{S}}$ be given and consider the infinite class of stochastic matrices as follows:
\begin{align}
\{\rr{P}_{YZ,s|X_{\hat{s}}}:\cc{X}\to\cc{P(Y}\times \cc{Z})\}_{s\in\cc{I}(\hat{s})}.\label{eqn_sti}
\end{align}

Similarly as in~\cite{sch_gau}, Lemma~\ref{lem_app} implies that for any \mbox{$l>2|\cc{Y\times Z}|^2,$} there exists a finite set of stochastic matrices
\begin{align*}
\{\rr{W}_{s'}\!:\!\cc{X}\!\!\to\!\cc{P(Y}\!\times \!\cc{Z})\}_{s'\in\cc{I}'(\hat{s})},\text{ with } |\cc{I}'(\hat{s})|\!\leq\!(l\!+\!1)^{|\cc{X\times Y\times Z}|},
\end{align*}
which approximates the one in~\eqref{eqn_sti} such that
\begin{align}
&\forall s\in\cc{I}(\hat{s}),\;\; \exists s'\in\cc{I}'(\hat{s}),\;\; \forall (x,y,z)\in\cc{X \times Y\times Z},\nonumber\\
&\big|\rr{P}_{YZ,s|X_{\hat{s}}}(y,z|x)-\rr{W}_{s'}(y,z|x)\big|\leq\frac{1}{l}|\cc{Y\!\times\! Z}|,\label{eqn_bl1}\\
&\rr{P}_{YZ,s|X_{\hat{s}}}(y,z|x)\leq \rr{e}^{\frac{2|\cc{Y\times Z}|^2}{l}}\rr{W}_{s'}(y,z|x).\label{eqn_bl2}
\end{align}

As $\hat{s}\in\cc{\hat{S}}$ was chosen arbitrarily, we may repeat this procedure for all $\hat{s}\in\cc{\hat{S}}$ and define the following finite source
\begin{align*}
\mathfrak{S}':=\{XYZ,s'\}_{s'\in\cc{I}'(\hat{s}),\hat{s}\in\hat{\cc{S}}}\,,
\end{align*}
where for all $\hat{s}\in\cc{\hat{S}},s'\in\cc{I}'(\hat{s}),$ and $(x,y,z)\in\cc{X \times Y\times Z}$, the \gls{pd} of $\mathfrak{S}'$ is given by
\begin{align*}
\rr{P}_{XYZ,s'}(x,y,z):=\rr{P}_{X_{\hat{s}}}(x)\rr{W}_{s'}(y,z|x).
\end{align*}
Define for a marginal index $\hat{s}$ and indices $r,t\in\cc{I}(\hat{s})$,
\begin{align}
f(\hat{s},r,t)&:=I(X_{\hat{s}};Y_r)-I(X_{\hat{s}};Z_t).\label{eqn_fde}
\end{align}
Since $\mathfrak{S}'$ is a finite source, the following \gls{sk} rate is again by using Theorem~\ref{thm_uv} achievable:
\begin{align*}
\min_{\hat{s}\in\cc{\hat{S}}}
\min_{r',t'\in\cc{I}'(\hat{s})}\!f(\hat{s},r',t').
\end{align*}

In the following, it is shown that the \gls{sk} generation protocol which guarantees the achievability of this \gls{sk} rate
for the finite source $\mathfrak{S}'$, also guarantees the achievability of the \gls{sk} rate given in~\eqref{eqn_if2}
for the infinite source $\mathfrak{S}$ when $l=n^3$.

By using~\eqref{eqn_bl1} and taking $l=n^3$, it follows for all $r,t\in\cc{I}(\hat{s})$ and their corresponding indices $r',t'\in\cc{I}'(\hat{s})$ that
\begin{align*}
&\gamma_1:=\frac{1}{2}\Vert \rr{P}_{XY,r}-\rr{P}_{XY,r'}\Vert\leq\frac{1}{2n^3}|\cc{Y}\times \cc{Z}|^2,\nonumber\\
&\gamma_2:=\frac{1}{2}\Vert \rr{P}_{XZ,t}-\rr{P}_{XZ,t'}\Vert\leq\frac{1}{2n^3}|\cc{Y}\times \cc{Z}|^2.
\end{align*}

Since $\gamma_1\leq 1-(|\cc{X}||\cc{Y}|)^{-1}$ and $\gamma_2\leq 1-(|\cc{X}||\cc{Z}|)^{-1}$ hold for $n$ large enough, 
it follows by Lemma~\ref{lem_muc} that
\begin{align}
|I(X_{\hat{s}};Y_r)-&I(X_{\hat{s}};Y_{r'})|
\leq 3\gamma_1\log(|\cc{X\times Y}|)+3h(\gamma_1),\label{eqn_mu1}\\
|I(X_{\hat{s}};Z_t)-&I(X_{\hat{s}};Z_{t'})|\leq 3\gamma_2\log(|\cc{X\times Z}|)+3h(\gamma_2).\label{eqn_mu2}
\end{align}
By using~\eqref{eqn_fde},~\eqref{eqn_mu1}, and~\eqref{eqn_mu2}, it holds that
\begin{align}
&|f(\hat{s},r,t)-f(\hat{s},r',t')|\nonumber\\
&\leq |I(X_{\hat{s}};Y_r)-I(X_{\hat{s}};Y_{r'})|+|I(X_{\hat{s}};Z_t)-I(X_{\hat{s}};Z_{t'})|\nonumber\\
&\leq  \frac{3|\cc{Y}\times \cc{Z}|^2}{2n^3}\log(|\cc{X}^2\times \cc{Y}\times \cc{Z}|)+6h\Big(\frac{|\cc{Y}\times \cc{Z}|^2}{2n^3}\Big).\label{eqn_ep1}
\end{align}

As stated before, the protocol related to the finite source $\mathfrak{S}'$, is used for the infinite source  $\mathfrak{S}$. To show that the \gls{sk} rate of this protocol is close to the one which is given in~\eqref{eqn_if2}, it is sufficient to show that for a given $\epsilon>0$ it holds:
\begin{align}
\big|\min_{\hat{s}\in\hat{\cc{S}}}\inf_{r,t\in\cc{I}(\hat{s})}f(\hat{s},r,t)-\min_{\hat{s}\in\cc{\hat{S}}}
\min_{r',t'\in\cc{I}'(\hat{s})}\!f(\hat{s},r',t')\big|<\epsilon.\label{eqn_ra1}
\end{align}
Let $\tau>0$ be given. There exist then $r_0,t_0\in\cc{I}(\hat{s})$ such that
\begin{align*}
\inf_{r,t\in\cc{I}(\hat{s})}f(\hat{s},r,t)&>f(\hat{s},r_0,t_0)-\tau\nonumber\\
&\geq\min_{r',t'\in\cc{I}'(\hat{s})}\!f(\hat{s},r',t')-\epsilon,
\end{align*}
where the second inequality holds for $\tau$ small enough and is a result of~\eqref{eqn_ep1} and the concavity of entropy. 
Similarly, there exist \mbox{$r_0',t_0'\in\cc{I}'(\hat{s})$} such that
\begin{align*}
\min_{r',t'\in\cc{I}'(\hat{s})}f(\hat{s},r',t')&\geq f(\hat{s},r'_0,t'_0)-\tau\nonumber\\&>\inf_{r,t\in\cc{I}(\hat{s})}\!f\big(\hat{s},r,t\big)-\epsilon.
\end{align*}
Therefore, as the index $\hat{s}\in\hat{\cc{S}}$ was taken arbitrarily, the relation~\eqref{eqn_ra1} follows directly.
This implies that condition~\eqref{eqn_ay2} of Definition~\ref{def_ach} is satisfied for the infinite source $\mathfrak{S}$.

For verifying condition~\eqref{eqn_ay3} of Definition~\ref{def_ach}, the following inequality is required which follows by using~\eqref{eqn_bl2}:
\begin{align}
\rr{P}_{Y_s|X_{\hat{s}}}^n(y^n|x^n)&=\prod_{i=1}^n \rr{P}_{Y_s|X_{\hat{s}}}(y_i|x_i)\nonumber\\&\leq\rr{e}^{\frac{2|\cc{Y\times Z}|^2n}{l}}\rr{V}_{s'}^n(y^n|x^n),\label{eqn_prb}
\end{align} 
where $\rr{V}_{s'}(y|x):=\sum_{z\in\cc{Z}}\rr{W}_{s'}(y,z|x)$ for any $x\in\cc{X},y\in\cc{Y}$.

In order to use the finite protocol for the infinite source $\mathfrak{S}$, the \gls{pd} of the marginal \gls{rv} $X$ is estimated by Alice. Let the correct estimation decision be given by $\hat{s}$. The result of estimation for the source $\mathfrak{S}$ is identical with the one from the source $\mathfrak{S}'$. Moreover, assume that the encoder $g$ and decoder $\tilde{g}$ which are given by Lemma~\ref{lem_rac}a), are used to generate the \gls{cr}. The error probability upper-bound for all $s\in\cc{I}(\hat{s})$ is given by
\begin{align*}
&\rr{Pr}\Big\lbrace g(X_{\hat{s}}^n) \neq \tilde{g}\big(f(X_{\hat{s}}^n),\hat{S}_{\hat{s}},Y_s^n\big)\Big\rbrace \nonumber\\
&\leq\rr{P}_{XY,s}^n\big(\big\{(x^n,y^n)\in\cc{X}^n\!\times\!\cc{Y}^n:g(x^n)\neq \tilde{g}(f(x^n),\hat{s},y^n) \big\}\big)\\&\quad\;\;+\!\!\!\sum_{\tilde{s}\in\hat{\cc{S}}-
\{\hat{s}\}}\!\!\!\!\rr{P}_{\!\hat{S}_{\hat{s}}}(\tilde{s})\rr{Pr}
\Big\lbrace g(X_{\hat{s}}^n)\neq \tilde{g}\big(f(X_{\hat{s}}^n),\hat{S}_{\hat{s}},Y_s^n\big)\big|\hat{S}_{\hat{s}}=\tilde{s}\Big\rbrace\nonumber\\
&\leq \rr{e}^{\frac{2|\cc{Y\times Z}|^2}{l}n}\times\rr{P}_{X_{\hat{s}}}^n(x^n)\times\rr{V}_{s'}^n
\big(\big\{(y^n\in\cc{Y}^n\!:\nonumber\\&\qquad\;g(x^n)\neq \tilde{g}(f(x^n),\hat{s},y^n) \big\}\big|\,x^n\big)+ \exp(-nc_1)\cdot|\cc{\hat{S}}|\nonumber\\
&\leq \rr{e}^{\frac{2|\cc{Y\times Z}|^2}{l}n}\times\rr{P}_{XY,s'}^n
\big(\big\{(x^n,y^n)\in\cc{X}^n\!\times\!\cc{Y}^n:\nonumber\\
&\qquad g(x^n)\neq \tilde{g}(f(x^n),\hat{s},y^n) \big\}\big)+ \exp(-nc_1)\cdot|\cc{\hat{S}}|\nonumber\\
&\leq \rr{e}^{\frac{2|\cc{Y\times Z}|^2}{n^2}}\exp(-n\delta_0) + \exp(-nc_1)\cdot|\cc{\hat{S}}|,\nonumber
\end{align*}
for some $c_1,\delta_0>0$. The second inequality follows by~\eqref{eqn_prb} and~\eqref{eqn_hy2}. The last inequality is a result of~\eqref{eqn_es1} from Lemma~\ref{lem_rac}a) and that $l=n^3$. Therefore, the error probability is exponentially small.
Furthermore, as $|\cc{I}'(\hat{s})|<(l+1)^{|\cc{X\times Y\times Z}|}$ and $l=n^3$, a universal $\delta_1$ exists for which the probability in~\eqref{eqn_pr1} is non-zero and thus such a coding scheme for the finite source $\mathfrak{S}'$ exists.

To show that condition~\eqref{eqn_ay4} of Definition~\ref{def_ach} is also guaranteed, assume again that the encoder $g$ and decoder $\tilde{g}$ from Lemma~\ref{lem_rac}a) are used and the correct estimation decision is $\hat{s}$. As the protocol guarantees condition~\eqref{eqn_ay4} of Definition~\ref{def_ach} for the finite source $\mathfrak{S}'$, it holds by using Definitions~\ref{def_sec} for all $\epsilon_0>0$, $n$ sufficiently large, and $s'\in\cc{I}'(\hat{s})$ that
\begin{align}
S(K_{\rr{A}}|Z_{s'}^n,f&(X_{\hat{s}}^n),\hat{S}_{\hat{s}}\!=\!\hat{s})=\log|\cc{K}|-H(\kappa(g(X_{\hat{s}}^n))|\hat{S}_{\hat{s}}\!=\!\hat{s})\nonumber\\
&+\!I(\kappa(g(X_{\hat{s}}^n));Z_{s'}^n,f(X_{\hat{s}}^n)|\hat{S}_{\hat{s}}\!=\!\hat{s})< \epsilon_0,\label{eqn_se1}
\end{align}
where $K_{\rr{A}},\cc{K},$ and $\kappa$ are given by Definition~\ref{def_pro} and Lemma~\ref{lem_skg}. 
Furthermore, it holds that
\begin{align}
&\gamma_3:=\!\frac{1}{2}\big\Vert\rr{P}_{\!\!\kappa(g(X_{\hat{s}}^n))Z_s^n f(X_{\hat{s}}^n)|\hat{S}_{\hat{s}}}\!(\cdot|\hat{s})\!-\!\rr{P}_
{\!\!\kappa(g(X_{\hat{s}}^n))Z_{s'}^n f(X_{\hat{s}}^n)|\hat{S}_{\hat{s}}}\!(\cdot|\hat{s})\big \Vert\nonumber\\
&\leq \frac{1}{2\!-\!2\exp(-nc_0)}\big\Vert\rr{P}_{\!\!\kappa(g(X_{\hat{s}}^n))Z_s^n f_{\rr{c}}(X_{\hat{s}}^n)}\!-\!\rr{P}_
{\!\!\kappa(g(X_{\hat{s}}^n))Z_{s'}^n f_{\rr{c}}(X_{\hat{s}}^n)}\big \Vert\nonumber\\
&\leq \frac{n\big\Vert \rr{P}_{XZ,s} - \rr{P}_{XZ,s'} \big\Vert}{2\!-\!2\exp(-nc_0)} \leq\frac{|\cc{Y}\times \cc{Z}|^2}{2n^2(1\!-\!\exp(-nc_0))}\,,\label{eqn_gm3} 
\end{align}
where \glspl{rv} $f(X_{\hat{s}})$ with alphabet $\cc{I}$ and $f_{\rr{c}}(X_{\hat{s}}^n)=(f(X_{\hat{s}}^n),\hat{S}_{\hat{s}})$ are given by Lemma~\ref{lem_rac}a) and proof of Theorem~\ref{thm_uv}.
The first inequality follows by~\eqref{eqn_hy1} and the second one by the fact that no mapping of the \glspl{pd} increases their 1-norm distance. The last inequality is a result of~\eqref{eqn_bl1} and that $l=n^3$.
Since $\gamma_3\leq 1- \frac{1}{|\cc{K}\times\cc{Z}^n\times\cc{I}|}$, then for all $\epsilon_1>0$, Lemma~\ref{lem_muc} implies by using~\eqref{eqn_gm3} that
\begin{align}
&\big|I(\kappa(g(X_{\hat{s}}^n));Z_{s}^n,f(X_{\hat{s}}^n)|\hat{S}_{\hat{s}}\!=\!\hat{s})\nonumber\\&\qquad\qquad\qquad\qquad\qquad-I(\kappa(g(X_{\hat{s}}^n));Z_{s'}^n,f(X_{\hat{s}}^n)|\hat{S}_{\hat{s}}\!=\!\hat{s})\big|\nonumber\\
&\leq 3\gamma_3\log(|\cc{K}\times\cc{Z}^n\times\cc{I}|-1)+
3h(\gamma_3)\nonumber\\
&\leq \frac{3|\cc{Y}\times \cc{Z}|^2 c_2}{2n(1\!\!-\!\exp(\!-nc_0))}\!+\!3h\Big(\frac{|\cc{Y}\times \cc{Z}|^2}{2n^2(1\!\!-\!\exp(\!-nc_0))}\Big)\!\!<\!\epsilon_1,\label{eqn_ep5}
\end{align}
for some $c_2>0$ and $n$ sufficiently large. The second inequality follows by the fact that the argument of the log function does not increase faster than exponentially. 
Therefore, it follows by using~\eqref{eqn_se1} and~\eqref{eqn_ep5} that for all $s\in\cc{I}(\hat{s})$,
\begin{align*}
S(K_{\rr{A}}|Z_s^n,f(X_{\hat{s}}^n),\hat{S}_{\hat{s}}\!=\!\hat{s})<\epsilon_0+\epsilon_1.
\end{align*}

This inequality together with~\eqref{eqn_hy2} implies for all estimation results which may lead to a correct or incorrect decision that
\begin{align*}
S(K_{\rr{A}}|Z_s^n,f(X_{\hat{s}}^n),\hat{S}_{\hat{s}})&= \rr{P}_{\hat{S}_{\hat{s}}}(\hat{s})S(K_{\rr{A}}|Z_s^n,f(X_{\hat{s}}^n),\hat{S}_{\hat{s}}\!=\!\hat{s})\nonumber\\
+\sum_{\tilde{s}\in\hat{\cc{S}}-
\{\hat{s}\}}\!\!\!\!&\rr{P}_{\hat{S}_{\hat{s}}}(\tilde{s})S\big(K_{\rr{A}}|Z_s^n,f(X_{\hat{s}}^n),\hat{S}_{\hat{s}}\!=\!\tilde{s}\big)\nonumber\\&\leq \epsilon_0+\epsilon_1+|\cc{\hat{S}}|\exp(-nc_1).
\end{align*}

The existence of the \gls{sk} generator $\kappa$ for the finite source $\mathfrak{S}'$
follows by applying the relation $|\cc{I}'(\hat{s})|<(n^3+1)^{|\cc{X\times Y\times Z}|}$ to~\eqref{eqn_key} and~\eqref{eqn_kup}, 
so that the probability in~\eqref{eqn_pro} is non-zero.

For the converse proof, assume that $R_{\rr{sk}}$ is achievable. The result from~\cite[Theorem 1]{ahl_crm} implies that
\begin{align}
R_{\rr{sk}}\leq\min_{\hat{s}\in\hat{S}}\inf_{\,r,t\in\cc{I}(\hat{s})}I(X_{\hat{s}};Y_r|Z_t).\label{eqn_in2}
\end{align}
Furthermore, for any given $\hat{s}\in\cc{\hat{S}}$ and all $r,t\in\cc{I}(\hat{s})$, it holds by~\eqref{eqn_if1} that
\begin{align*}
I(X_{\hat{s}};Y_r)-I(X_{\hat{s}};Z_t)=I(X_{\hat{s}};Y_r|Z_t).
\end{align*}
This identity together with~\eqref{eqn_in2} completes the proof.
\end{proof}

\section{Conclusion}\label{sec_con}

The \gls{sk} generation protocol which was introduced in this work used a two phase approach to achieve the given \gls{sk} rate. In the first step, Alice estimated her state and sent this along with other information which was obtained from her observation to Bob. Although, this information is also received by Eve, it was shown that the strong secrecy and uniformity of the generated \gls{sk} is still guaranteed. In the second step, Bob used this information including the estimated state of Alice to generate the \gls{sk}. A single-letter lower-bound for the \gls{sk} capacity of a finite compound source was derived as a function of the communication rate constraint between Alice and Bob.
This result was further extended to a multi-letter \gls{sk} capacity formula by discarding the public communication rate constraint. As the final result, a single-letter \gls{sk} capacity formula was derived for degraded compound sources with no communication constraint and an arbitrary set of source states. It was shown that for any infinite compound source, with finite marginal set of states, there exists an approximating finite source whose \gls{sk} generation protocol also guarantees the achievability of the given rates for the infinite source.

\bibliographystyle{IEEEtran} 
\bibliography{IEEEabrv,refs.bib}

\end{document}